\documentclass[12pt, a4paper, abstracton]{scrreprt}
\usepackage[T1]{fontenc}
\pdfoutput=1
\usepackage[utf8]{inputenc}
\usepackage{amsmath}
\usepackage{amssymb}
\usepackage{amsthm}
\usepackage{microtype}
\usepackage{lmodern}
\usepackage{units}
\usepackage{graphicx}
\usepackage{enumitem}
\usepackage{subfigure}
\usepackage{paralist}
\usepackage{schuster}
\usepackage{mathtools}
\usepackage{hyperref}
\usepackage[sort,comma,square,numbers]{natbib}
\usepackage{scrpage2}
\usepackage[compact]{titlesec}
\usepackage[figure]{hypcap}

\renewcommand{\cite}{~\citep}

\hypersetup{breaklinks,a4paper,linktocpage,bookmarksopen, bookmarksopenlevel=2, pdftitle={A Quadratic-Vertex Problem Kernel for s-Plex Cluster Vertex Deletion}, pdfauthor={Ren\'e van Bevern}, pdfsubject={Graph-Modeled Data Clustering}, pdfkeywords={data clustering, graph algorithms, problem kernel, fixed-parameter algorithmics},pdfdisplaydoctitle,colorlinks,linkcolor=blue,citecolor=blue}

\setkomafont{sectioning}{\normalfont\bfseries}
\setkomafont{descriptionlabel}{\normalfont\bfseries}
\setkomafont{title}{\normalfont}
\setcapindent{1em}

\newcommand{\premv}[1]{Let~$X$ be a solution. For each vertex~$v\in X$, let~$M(v)$ be the set constructed by \autoref{#1}.}
\newcommand{\prem}[1]{Let~$X$ be a solution. Let~$M:=\bigcup_{v\in X}M(v)$ be the set constructed by \autoref{#1}.}
\newcommand{\N}{\mathbb{N}}

\newcommand{\DP}{\text{P}}
\newcommand{\md}[1]{\mbox{$#1\!$-module}}
\newcommand{\nx}[1]{N(#1)\cap X}

\newcommand{\NP}{\text{NP}}

\newcommand{\name}{\textsc}
\newcommand{\hneg}{\mathcal H_0(X,M)}
\newcommand{\hpos}{\mathcal H_1(X,M)}
\newcommand{\hv}{\mathcal H(X)}
\newcommand{\hp}{H}

\newcommand{\pvd}[1]{\name{\mbox{$#1$-Plex} Cluster Vertex Deletion}}
\newcommand{\cvd}{\name{Cluster Vertex Deletion}}
\newcommand{\hs}[1]{\name{\mbox{$#1$-Hitting} Set}}
\newcommand{\pl}[1]{\mbox{$#1$-plex}}
\newcommand{\pcg}[1]{\pl #1 cluster graph}
\newcommand{\FISG}{\textsc{Fisg}}

\newcommand{\decprob}[3]{%
\begin{flushright}
\addtolength{\linewidth}{-1em}
\begin{minipage}{\linewidth}
\textsc{#1}
\begin{compactdesc}
\item[\hspace{\parindent}Instance:] #2
\item[\hspace{\parindent}Question:] #3
\end{compactdesc}  
\end{minipage}
\addtolength{\linewidth}{1em}
\end{flushright}
}

\newtheorem{satz}{Theorem}[chapter]
\newtheorem{korollar}{Corollary}[chapter]
\newtheorem{lemma}{Lemma}[chapter]

\theoremstyle{definition}
\newtheorem{rul}{Reduction Rule}[chapter]
\newtheorem{asu}{Assumption}[chapter]
\newtheorem{definition}{Definition}[chapter]
\newtheorem{proc}{Algorithm}[chapter]

\theoremstyle{remark}

\begin{document}
\mathtoolsset{centercolon}
\begin{titlepage}
\thispagestyle{scrheadings}
\cfoot{}
\setfootsepline{0.5pt}
\ifoot{\vspace{1cm}\\
Cite this work as: René van Bevern. \emph{A quadratic-vertex problem kernel for $s$-plex cluster vertex deletion}. Studienarbeit, Department of Mathematics and Computer Science, Friedrich Schiller University of Jena, Germany, 2009.}
\centering
\noindent \rule{\textwidth}{0.5pt}

\vspace{1em}
\Large Graph-Based Data Clustering:\\\Huge A Quadratic-\!Vertex Problem Kernel for\\$s$-Plex Cluster Vertex Deletion

\normalsize\vspace{1.12\topsep}\large
by

\normalsize\vspace{\topsep}\Large
\textsc{Ren\'e van Bevern}\\
\rule{\textwidth}{0.5pt}

\vfill
\Large Studienarbeit\\
\normalsize 14. September 2009

\vspace{\topsep}
  Betreuung: Hannes Moser,\\
  Rolf Niedermeier      

\vfill
\normalfont
\includegraphics[width=5cm,trim=0cm 4cm 0cm 0cm,clip=true]{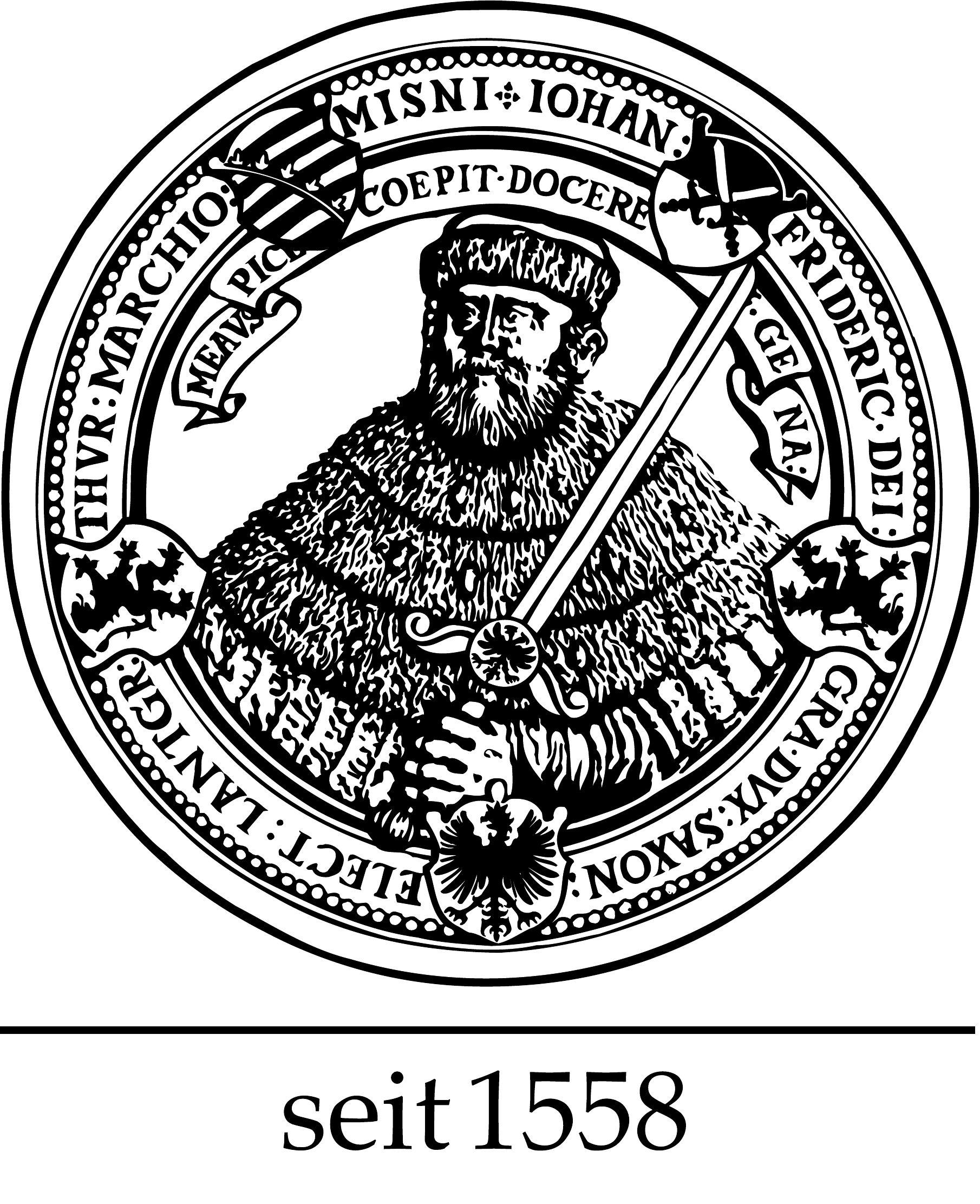}

\vspace{\topsep}
Friedrich-Schiller-Universität Jena\\
  Institut für Informatik\\
  Theoretische Informatik I / Komplexitätstheorie
\end{titlepage}

\begin{abstract}
  We introduce the \pvd s problem. Like the \cvd{} problem, it is NP-hard and motivated by graph-based data clustering. While the task in \cvd{} is to delete vertices from a graph so that its connected components become \emph{cliques}, the task in \pvd s is to delete vertices from a graph so that its connected components become \emph{\pl ses}. An \pl s is a graph in which every vertex is nonadjacent to at most~$s-1$ other vertices; a clique is an \pl1. In contrast to \cvd, \pvd s allows to balance the number of vertex deletions against the sizes and the density of the resulting clusters, which are \pl ses instead of cliques.

  The focus of this work is the development of provably efficient and effective data reduction rules for \pvd s. In terms of fixed-parameter algorithmics, these yield a so-called \emph{problem~kernel}. A similar problem, \textsc{$s$-Plex Editing}, where the task is the insertion or the deletion of edges so that the connected components of a graph become \pl ses, has also been studied in terms of fixed-parameter algorithmics. Using the number of allowed graph modifications as parameter, we expect typical parameter values for \pvd s to be significantly lower than for \textsc{$s$-Plex Editing}, because one vertex deletion can lead to a high number of edge deletions. This holds out the prospect for faster fixed-parameter algorithms for \pvd s.
\end{abstract}

\tableofcontents

\chapter{Introduction}\label{introsec}
Data clustering problems are of great importance in the disciplines of machine learning, pattern recognition, and data mining\cite{Berkhin2006}.  Given a data set, one can define a measure of similarity on data pairs. The goal in data clustering is to partition the data set into \emph{clusters} so that the elements within a cluster are similar, while there are less similarities between vertices in different clusters. Mapping clustering tasks into graph-theoretic models allows the usage of the broad variety of graph algorithms to process and cluster data\cite{DBLP:journals/csr/Schaeffer07}. Usually, the similarity between data records is mapped to a graph~$G$ as follows: each vertex in~$G$ corresponds to a data record, and an edge between two vertices in~$G$ exists if and only if the similarity of the corresponding data records exceeds a certain threshold. This threshold is specific to the actual clustering problem. An obvious possible postulation on clusters is for each data pair in one cluster to be similar. A cluster can therefore be interpreted as a complete graph, also called \emph{clique}. Subject to our goal that there shall be only few similarities between vertices in different clusters, the graph~$G$ constructed from our data would ideally consist of isolated cliques only. Such a graph is called a \emph{cluster graph}. For real-world data, it is unrealistic to expect~$G$ to be a cluster graph. We could modify~$G$ to become a cluster graph, but because we want to avoid excessive perturbation of the input data, the graph should be modified only modestly. One way to model this task is \cvd\cite{HKMN09TOCS}.  \decprob{\cvd}{An undirected graph~$G=(V,E)$ and a natural number~$k$.}{Is there a vertex set~$S\subseteq V$ with~$|S|\leq k$ such that deleting all vertices in~$S$ from~$G$ results in a graph where each connected component forms a clique?}

\noindent This problem corresponds to discarding at most $k$~data records in order to find a plausible data clustering. We can regard the discarded data records as outliers. Although \cvd{} is a very intuitive model of graph-based data clustering, it is very restrictive as it requires \emph{every} data pair in a cluster to be similar.  \cvd{} offers no option to relax this requirement, so that we could allow for a few dissimilarities within the resulting clusters. Obviously, it is desirable to balance the amount of discarded data against the number of dissimilarities within a cluster. Also, inaccuracies in the data could render finding satisfactory clustering results using \cvd{} impossible, yielding too many or too small clusters. Therefore, we weaken the requirement for every connected component to form a clique. Seidman and Foster\cite{SF78} have introduced one generalization of the clique concept in 1978:

\begin{definition}
  For~$s\geq 1$, an \emph{$s$-plex} is a graph~$G=(V,E)$ such that every
  vertex in~$V$ is adjacent to at least $|V|-s$~other vertices in~$V$.
\end{definition}

\noindent For example, a clique is an \pl1. By modeling clusters using \pl{s}es instead of cliques, we allow each data record to be dissimilar to $s-1$ other data records within the same cluster.  Although the \pl s concept has already been introduced in 1978, it has only recently become subject to algorithmic research\cite{BBH09,DBLP:conf/aaim/GuoKNU09,MH09,MNS09,WP07}. In this work, we introduce the \pvd s~problem.

\decprob{\pvd s}{An undirected
  graph~$G=(V,E)$ and a natural number~$k$.}{Is there a vertex
  set~$S\subseteq V$ with~$|S|\leq k$ such that deleting all vertices
  in~$S$ from~$G$ results in a graph where each connected component
  forms an \pl s?}

\noindent In the following, we will call a graph that has only \pl ses
as connected components an \emph{\pcg s}.  For each~$s$, the \pvd s
problem yields a different clustering model. In each model,
$s$~determines the ``density'' of the resulting clusters and with that
the dissimilarities that are allowed within each cluster.

\paragraph{Fixed-Parameter Algorithmics.}
In this work, we study the \pvd s problem in terms of fixed-parameter algorithmics. Fixed-parameter algorithmics aims at a multivariate complexity analysis of problems without giving up the demand for finding optimal solutions\cite{DF99,Flu06,Nie06}. A \emph{parameterized} problem is a language~$L\subseteq \Sigma^{*}\times \N$, where~$\Sigma$~is a finite alphabet. The second component is called the \emph{parameter} of the problem. The \pvd s problem is a parameterized problem with the input~$G$ and the parameter~$k$. A parameterized problem~$L$ is \emph{fixed-parameter tractable} if it can be determined in $f(k) |x|^{O(1)}$~time whether~$(x,k)\in L$, where~$f$ is a computable function only depending on~$k$. The corresponding complexity class is called \emph{FPT}.

Given a parameterized problem instance~$(x,k)$, \emph{reduction to a problem kernel} or \emph{kernelization} means to transform~$(x,k)$ into an instance~$(x',k')$ in polynomial time, such that the size of~$x'$ is bounded from above by some function only depending on~$k$, ~$k'\leq k$, and~$(x,k)$ is a yes-instance if and only if~$(x',k')$ is a yes-instance. We refer to~$(x',k')$ as \emph{problem kernel}. Kernelization enables us to develop provably efficient and effective \emph{data reduction rules}. Refer to Guo and Niedermeier\cite{GN07SIGACT} for a survey on problem kernelization. In this work, we present a kernelization for \pvd s.

\paragraph{Terminology.}
\label{sec:prelim}
We only consider \emph{undirected} graphs~$G=(V,E)$, where~$V$ is the set of vertices and~$E$ is the set of edges. Throughout this work, we use $n:=|V|$ and~$m:=|E|$. We call two vertices~$v,w\in V$ \emph{adjacent} or \emph{neighbors} if~$\{v,w\}\in E$. The \emph{neighborhood}~$N(v)$ of a vertex~$v\in V$ is the set of vertices that are adjacent to~$v$.  For a vertex set~$U\subseteq V$, we set~$N(U):=\bigcup_{v\in U} N(v)\setminus U$. We call a vertex~$v\in V$ \emph{adjacent} to~$V'\subseteq V$ if $v$~has a neighbor in~$V'$. Analogously, we extend this definition and call a vertex set~$U\subseteq V$ \emph{adjacent} to a vertex set~$W\subseteq V$ with~$W\cap U=\emptyset$ if~$N(U)\cap W\ne\emptyset$. A \emph{path} in~$G$ from~$v_1$ to~$v_\ell$ is a sequence~$(v_1,v_2,\dots,v_\ell)\in V^{\ell}$ of vertices with~$\{v_i,v_{i+1}\}\in E$ for~$i\in\{1,\dots,\ell-1\}$. We call two vertices~$v$~and~$w$ \emph{connected} in~$G$ if there exists a path from~$v$ to~$w$ in~$G$.  For a set of vertices~$V' \subseteq V$, the \emph{induced subgraph}~$G[V']$ is the graph over the vertex set~$V'$ with the edge set~$\{\{v, w\} \in E \mid v, w \in V'\}$.  For~$V'\subseteq V$, we use $G-V'$ as an abbreviation for~$G[V\setminus V']$.

\paragraph{Related Work.}
The two ``sister problems'' of \pvd s, namely \name{$s$-Plex Editing} and \cvd, have been subject to recent research\cite{DBLP:conf/aaim/GuoKNU09,HKMN09TOCS}. The goal of the \name{$s$-Plex Editing} problem is to transform a graph into an~$s$-plex cluster graph by insertion or removal of at most $k$~\emph{edges}. For \cvd, Hüffner et al.\cite{ HKMN09TOCS} have developed fixed-parameter algorithms using the recent iterative compression\cite{GMN2009} technique introduced by Reed et al.\cite{RSV04}. Their algorithm solves \cvd{} in $O(2^k\cdot n^2(m+n\log n))$~time, where~$k$ is the number of allowed vertex deletions.  Guo~et~al.\cite{DBLP:conf/aaim/GuoKNU09} have shown a problem kernel with $O(ks^2)$~vertices for \name{$s$-Plex Editing}, where~$k$ is the number of allowed edge modifications. They also have developed the following forbidden induced subgraph characterization for \pcg ss.

\begin{satz}[Guo et al.\cite{DBLP:conf/aaim/GuoKNU09}]\label{fisg-char}
  Let~$G=(V,E)$ be a graph. Let $F$ be the set of all connected graphs with at most $|V|$ vertices that contain a vertex that is nonadjacent to~$s$ other vertices. The graph~$G$ is an \pcg s if and only if it does not contain any graph from $F$ as induced subgraph.
\end{satz}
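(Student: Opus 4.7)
The plan is to prove both implications of the characterization directly from the definitions, using the connectivity condition on members of $F$ to route between the connected-components view and the induced-subgraph view.

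For the forward direction, I would assume $G$ is an \pcg s and argue by contradiction. Suppose $H\in F$ is an induced subgraph of $G$. Since $H$ is connected, its vertex set $V(H)$ lies entirely within a single connected component $C$ of $G$, and by assumption $G[C]$ is an \pl s. By definition of $F$, there is a vertex $v\in V(H)$ which is nonadjacent in $H$ to $s$ other vertices of $H$; because $H=G[V(H)]$ is an induced subgraph of $G$, these nonadjacencies persist in $G[C]$. Thus $v$ has at least $s$ nonneighbors in $C$, contradicting the \pl s property of $G[C]$.

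For the reverse direction, I would prove the contrapositive. Assume $G$ is not an \pcg s. Then some connected component $C$ of $G$ is not an \pl s, so by the definition of \pl s there exists a vertex $v\in C$ nonadjacent to at least $s$ other vertices of $C$. The induced subgraph $H\mathrel{:=}G[C]$ is itself connected (as $C$ is a connected component), contains $v$ with the required $s$ nonneighbors, and satisfies $|V(H)|\le |V|$. Hence $H\in F$ and $H$ is an induced subgraph of $G$. (Alternatively, to obtain a smaller witness, one could take the union of arbitrary paths in $G[C]$ from $v$ to each of $s$ fixed nonneighbors and let $H$ be the induced subgraph on this vertex set; connectivity and the nonadjacencies of $v$ to its $s$ chosen nonneighbors are preserved, while the vertex count remains at most~$|V|$.)

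There is no real obstacle in this proof: the characterization is essentially a restatement of the definitions, with the connectivity constraint in $F$ playing the role of tying a local witness for ``not an \pl s'' to a single component of~$G$. The only point requiring minor care is verifying that nonadjacencies in the induced subgraph $H$ are the same as nonadjacencies in~$G$ (which is immediate from $H$ being induced) and that the size bound $|V(H)|\le|V|$ in the definition of~$F$ is automatic since $H$ is a subgraph of~$G$.
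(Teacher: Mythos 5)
Your proof is correct. Note that the paper itself does not prove this theorem at all---it is imported from Guo et al.\ as a cited result---so there is no in-paper argument to compare against. Your two directions are the natural ones: the forward direction correctly uses that a connected induced subgraph lies inside one connected component and that nonadjacencies are preserved under taking induced subgraphs, and the reverse direction correctly turns a violation of the \pl s condition in some component~$C$ into the witness $G[C]\in F$ (your parenthetical refinement via paths to the $s$ nonneighbors is also sound, and is in the spirit of the stronger $O(s+\sqrt s)$-vertex result of Guo et al.\ mentioned after the theorem).
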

\begin{figure}
  \centering
    \includegraphics{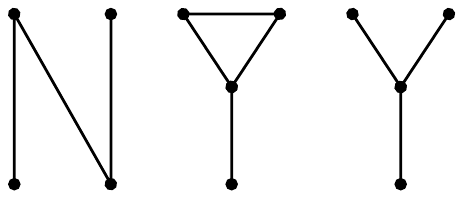}
    \caption{Minimal forbidden induced subgraphs for $s=2$.}
  \label{forbidden}
\end{figure}

\noindent
Guo et al.\cite{DBLP:conf/aaim/GuoKNU09} have also shown the stronger result that, for each natural number~$s$, there exists a natural number ${d\in O(s+\sqrt s)}$ such that if a graph~$G$ is not an \pcg s, then~$G$ contains a forbidden induced subgraph (\FISG{}) with at most~$d$ vertices. They present an algorithm that, if~$G$ is not an \pcg s, finds such a \FISG{} in~$G$ in~${O(s(n+m))}$~time. If~$s=2$ and if~$G$ is not a \pcg 2, then their algorithm always finds one of the three \FISG{}s shown in \autoref{forbidden}.  We can solve \pvd s by repeatedly finding a \FISG{} with at most $d$~vertices in $O(s(n+m))$~time and then branching into all possibilities of deleting one of its vertices. This yields a trivial search tree algorithm to solve \pvd s in ${O(d^k s(n+m))}$~time. Algorithms with a lower exponential time term can be obtained employing the \hs d problem:

\decprob{\hs d}{A set $H$, a collection of subsets $\mathcal C\subseteq\{H'\subseteq H\mid |H'|\leq d\}$ and a natural number~$k$.} {Is there a \emph{hitting set}~$S\subseteq H$ with~$|S|\leq k$ such that each set in $\mathcal C$ contains an element of~$S$?}

\noindent We obtain a \hs d instance~$(H,\mathcal C,k)$ from an \pvd s instance~$(G,k)$ as follows: we use the vertex set of~$G$ as~$H$; for each \FISG{}~$F$ containing at most~$d$ vertices from~$G$, we add the vertex set of~$F$ to~$\mathcal C$. Because each element in~$\mathcal C$ corresponds to a \FISG{} with at most~$d$ vertices, we have~$|\mathcal C|\in O(n^d)$. Because this bound is exponential in~$d$, it is practically infeasible to transform an \pvd s instance into a \hs d instance without prior data reduction. We can solve \hs d using a trivial $O(d^k|\mathcal C|)$-time search tree algorithm; we repeatedly choose a set from the collection $\mathcal C$ and branch into all possibilities of adding one of its vertices to a hitting set. Faster algorithms for \hs d are known\cite{Nie06}. For example, consider the special case~$s=2$. The \FISG{}s for \pvd 2 are shown in \autoref{forbidden}. The trivial search tree algorithm for \pvd 2 (as discussed above) runs in $O(4^k(n+m))$ time. We can solve an equivalent \hs 4 instance in $O(3.076^k+|\mathcal C|)$ time by combining Wahlström's $O(2.076^k+|\mathcal C|)$-time algorithm for \hs 3\cite{Wah07} with iterative compression, as discussed by Dom et al.\cite{DGHNT09}.

The forbidden induced subgraph characterization by Guo~et~al.\cite{DBLP:conf/aaim/GuoKNU09} implies that every induced subgraph of an~$s$-plex cluster graph is again an~$s$-plex cluster graph. The property of being an~$s$-plex cluster graph is thus \emph{hereditary}. Lewis and Yannakakis\cite{LY80} have shown that vertex deletion problems for hereditary graph properties are NP-hard. Because it can be verified in polynomial time whether a graph contains a \FISG{} for \pcg ss, \pvd s is in NP.  As a consequence, we can conclude that \pvd s is NP-complete. Further, Lund and Yannakakis\cite{CM93} have shown that vertex deletion problems for hereditary graph properties are constant-factor approximable and MAX SNP-hard, if the graph property admits a characterization by a finite number of \FISG{}s. Because \pcg ss are characterized by a finite number of \FISG{}s, finding a minimum solution for \pvd s is constant-factor approximable and MAX SNP-hard.

\paragraph{Our contributions.}
We show a problem kernel with $O(k^2)$~vertices for \pvd2, which can be found in $O(kn^2)$~time. We then generalize this kernelization algorithm to show a problem kernel with $O(k^2s^3)$~vertices for \pvd s, which can be found~in~$O(ksn^2)$~time.

\chapter{Kernelization for 2-Plex Cluster Vertex Deletion}
\label{2pvd}
In this chapter, we transform a \pvd2 instance~$(G,k)$ into a problem kernel~$(G',k')$. To this end, we present a series of data reduction rules that remove vertices from~$G$ so that the maximum number of vertices in the resulting graph~$G'$ depends only on the parameter~$k$. These data reduction rules also compute the new parameter~$k'\leq k$. For each data reduction rule, we show that it can be carried out in polynomial time and that it is \emph{correct}, that is, we show that $(G,k)$ is a yes-instance if and only if~$(G',k')$ is a yes-instance.

Assume that we are given a \pvd2 instance~$(G,k)$. We want to apply a series of data reduction rules to~$G$ so that we can bound the size of~$G$ by a function only depending on the parameter~$k$. To structure the graph~$G$, we first search for a constant-factor approximate solution~$X$ so that each connected component in~$G-X$ is a \pl2. This partitions the graph as shown in \autoref{packingfig}. To bound the overall size of~$G$ by a function only depending on the parameter~$k$, we independently bound the sizes of~$G-X$ and~$X$ by functions only depending on~$k$.

\begin{figure}[h]
  \centering
  \includegraphics{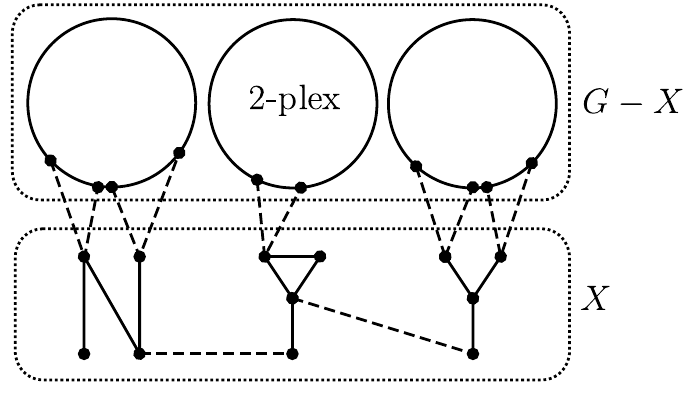}
  \caption{Constant-factor approximate solution $X$ and the graph~$G-X$.}
  \label{packingfig}
\end{figure}

\noindent To bound the size of~$X$, we use that~$X$ is a constant-factor approximate solution. If~$(G,k)$ is a yes-instance, then~$G$ can be transformed into a \pcg 2 by at most~$k$ vertex deletions. This implies that the size of~$X$ is at most~$ck$ for some constant factor~$c$. In particular, the maximum size of~$X$ only depends on~$k$. If~$X$ contains more than~$ck$ vertices, we stop our kernelization algorithm and output that~$(G,k)$ is a no-instance.

It is left to bound the size of~${G-X}$ by a function only depending on the parameter~$k$. To this end, we present data reduction rules to independently bound the \emph{number} and the \emph{sizes} of the connected components in~${G-X}$ by functions only depending on~$k$. Bounding the sizes of the connected components is the most sophisticated part of our kernelization algorithm. To this end, we employ graph separators and introduce a generalization of the graph module concept\cite{Gal67,DBLP:journals/dm/McConnellS99} in \autoref{redundant}.

Summarizing, we obtain a problem kernel for a \pvd 2 instance~$(G,k)$ by executing the following steps:
\begin{enumerate}
\item Find a constant-factor approximate solution~$X$ such that~$G-X$ is a \pcg 2. This is the subject of \autoref{greedy}. Because~$X$ is a constant-factor approximate solution, the size of~$X$ is bounded by a function only depending on the parameter~$k$. 
\item Bound the \emph{number} of connected components in~$G-X$ by a function only depending on the parameter~$k$. To this end, we use data reduction rules presented in \autoref{mark}.
\item Bound the \emph{sizes} of the connected component in~$G-X$ by a function only depending on the parameter~$k$. To this end, we use data reduction rules presented in \autoref{redundant}.
\end{enumerate}
In \autoref{result}, we show that the remaining graph (consisting of the vertices in~$X$ and the connected components in~$G-X$ to which all data reduction rules have been applied) contains $O(k^2)$ vertices. Together with the new parameter computed by our data reduction rules, this graph constitutes our problem kernel.

In the following, we write \emph{solution} for a vertex set~$X$ such that~$G-X$ is a \pcg 2. If we intend to refer to a solution containing at most~$k$ vertices, then we state it explicitly.

\section{An Approximate Solution}
\label{greedy}
In this section, we present an algorithm that greedily computes an approximate solution for \pvd 2. Given a graph~$G$, Guo~et~al.\cite{DBLP:conf/aaim/GuoKNU09} have shown that if~$G$ is not an \pcg s, an $O(s+\sqrt s)$-vertex \FISG{} in~$G$ can be found in $O(s(n+m))$ time. For the case~$s=2$, this algorithm finds the \FISG{}s shown in \autoref{forbidden}. We apply their algorithm for~$s=2$ to construct an initial solution:

\begin{proc}\label{find-X}
  Given a graph~$G$, we start with $H=G$ and $X=\emptyset$. We repeatedly apply the algorithm by Guo et~al.\cite{DBLP:conf/aaim/GuoKNU09} to find a \FISG{} in~$H$, we add its vertices to~$X$, and remove them from~$H$. If no \FISG{} can be found, then the algorithm stops and returns~$X$.
\end{proc}

\noindent \autoref{packingfig} illustrates the separation of~$G$
into~$X$ and~$H=G-X$.
\begin{lemma}\label{find-X-time}
  \autoref{find-X} computes a factor-4 approximate solution for \pvd 2. It can be carried out in $O(n(n+m))$~time.
\end{lemma}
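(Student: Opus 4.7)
The claim splits into an approximation bound and a running-time bound, which I would treat separately.

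For the approximation factor, correctness is immediate from \autoref{fisg-char}: \autoref{find-X} halts precisely when $H = G - X$ contains no \FISG{}, so $G - X$ is an \pcg 2 and $X$ is a solution. To compare $|X|$ with the size of an arbitrary solution~$X^{*}$, I would use the standard vertex-disjoint packing argument: the \FISG s $F_1,\dots,F_t$ returned in the $t$ successive iterations are pairwise vertex-disjoint, because every vertex of~$F_i$ is removed from~$H$ before iteration~$i+1$ begins. Since for~$s = 2$ the subroutine of Guo et al.~\cite{DBLP:conf/aaim/GuoKNU09} always returns one of the three \FISG s in \autoref{forbidden}, each of which has at most~$4$ vertices (the same $d = 4$ appearing in the trivial $O(4^k(n+m))$ search-tree algorithm mentioned in the introduction), this yields $|X| \leq 4t$. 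Conversely, any solution~$X^{*}$ must contain at least one vertex from every induced \FISG{} of~$G$, in particular from each of the vertex-disjoint~$F_i$, so $|X^{*}|\geq t$. Chaining the two inequalities gives $|X| \leq 4|X^{*}|$.

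For the running time, each iteration invokes the Guo et al.~subroutine on the current graph~$H$ in $O(s(|V(H)|+|E(H)|)) = O(n+m)$~time for~$s = 2$, and since every successful call removes at least one vertex from~$H$, the loop executes at most~$n$ times. Multiplying gives the claimed $O(n(n+m))$ total time.

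The only step that needs any real thought is the inequality $|X|\leq 4t$: it rests on the subroutine always returning a \FISG{} of size at most~$4$, which is exactly what pins the approximation factor down to~$4$ rather than some larger constant. I expect the later generalization to arbitrary~$s$ to follow the same template and yield an $O(s+\sqrt s)$ approximation factor, matching the \FISG{} size bound of Guo et al.
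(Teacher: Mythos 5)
Your proposal is correct and follows essentially the same argument as the paper: the vertex-disjoint packing of the \FISG{}s found in successive iterations lower-bounds any solution by the number of iterations, while each found \FISG{} having (at most) four vertices upper-bounds $|X|$, giving factor~4; the running time follows from the $O(n+m)$ cost per call and the $O(n)$ bound on the number of iterations. No gaps.
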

\begin{proof}
  First, we show the running time.
  In each step, a \FISG{} can be found in ${O(n+m)}$ time. Because in each step of \autoref{find-X} four vertices are removed from~$H$, we apply it at most $O(n)$~times. Therefore, \autoref{find-X} runs in $O(n(n+m))$~time.

  It is left to show that the set~$X$ computed by \autoref{find-X} is a factor-4 approximate solution.  \autoref{find-X} stops when no more \FISG{}s can be found in~$H=G-X$. Thus, $H$ must be a 2-plex cluster graph and $X$ is a solution.

  Let $F$ be the set of all \FISG{}s found by \autoref{find-X}. Because each \FISG{} is deleted from~$H$ when it is discovered, the graphs in~$F$ are pairwise vertex-disjoint. Any solution must contain at least one vertex of each \FISG{} in~$F$. Therefore, the size of a solution is at least~$|F|$. Each \FISG{} found by the algorithm of Guo et al.\cite{DBLP:conf/aaim/GuoKNU09} contains four vertices. It follows that the solution $X$ computed by \autoref{find-X} contains~$4|F|$ vertices, which is at most four times the number of vertices in an optimal solution.
\end{proof}
\begin{korollar}\label{X-boundary}
  Let $(G,k)$ be a yes-instance. Then, \autoref{find-X} computes a solution for~$G$ that contains at most~$4k$ vertices.
\end{korollar}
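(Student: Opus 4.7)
The plan is to derive this corollary directly from \autoref{find-X-time}, which already establishes that \autoref{find-X} produces a factor-4 approximate solution. First I would observe that, since $(G,k)$ is a yes-instance, by definition there exists some vertex set $S\subseteq V$ with $|S|\leq k$ such that $G-S$ is a \pcg 2; that is, $S$ is a solution of size at most~$k$. In particular, the minimum size of a solution for $G$ is bounded above by~$k$.

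Next, let $X$ be the output of \autoref{find-X} applied to~$G$. By \autoref{find-X-time}, $X$ is a solution and its size satisfies $|X|\leq 4\cdot|S^\ast|$, where $S^\ast$ denotes an optimal (minimum-size) solution for $G$. Combining these two facts gives $|X|\leq 4|S^\ast|\leq 4k$, which is exactly the claim. No further case distinction or calculation is needed.

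The argument is entirely a routine consequence of the approximation-factor statement; there is no real obstacle. The only subtle point to mention explicitly is the identification of the ``optimal solution'' in the approximation guarantee with the witness of the yes-instance — the witness need not itself be optimal, but its existence gives an upper bound of~$k$ on the optimal solution size, which is all that is required.
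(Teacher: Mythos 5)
Your proposal is correct and matches the paper's (implicit) reasoning: the corollary is exactly the factor-4 guarantee of \autoref{find-X-time} combined with the observation that a yes-instance admits a solution of size at most~$k$, hence the optimum is at most~$k$ and $|X|\leq 4k$. Your explicit remark that the yes-instance witness need only bound the optimum, not be optimal, is the right (and only) subtlety.
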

\noindent Many of the following observations and data reduction rules require an initial solution~$X$. In those observations, we make no assumptions about~$X$ other than $X$~being a solution. For practical considerations, a heuristic search for an initial solution might be superior to employing \autoref{find-X}. Heuristic search might not only be faster, but might also find a smaller solution. This is desirable because the size of our problem kernel is proportional to the size of the initial solution. However, to conclude a problem kernel with~$O(k^2)$ vertices, we require an initial constant-factor approximate solution.

\section{Bounding the Number of Connected Components}
\label{mark}
Let~$X$ be a solution for~$G$. In this section we bound the number of connected components in~$G-X$ by a function only depending on the parameter~$k$. To this end, we employ a data reduction rule that resembles Buss and Goldsmith's\cite{BG93} kernelization of the \name{Vertex Cover} problem.

\begin{lemma}\label{high-occurence}
  Let~$(G,k)$ be a \pvd 2 instance and let $F(v)$ be a set of \FISG{}s pairwisely intersecting only in the vertex $v$ of~$G$. If $|F(v)|>k$, then $(G,k)$~is a yes-instance if and only if $(G-\{v\},k-1)$~is a yes-instance.
\end{lemma}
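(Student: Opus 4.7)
The proof is a direct pigeonhole-style argument in the spirit of the classical Buss--Goldsmith kernelization for \textsc{Vertex Cover}, so I would split it along the ``if and only if''.

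For the forward direction, I would assume $(G,k)$ is a yes-instance and let $S$ be a solution with $|S|\le k$. By \autoref{fisg-char}, $G-S$ contains no \FISG{}, so every \FISG{} of $G$ must meet $S$ in at least one vertex. In particular, every \FISG{} in $F(v)$ meets $S$. The plan is then to argue that $v\in S$ by contradiction: if $v\notin S$, then for each $F\in F(v)$ there is some vertex $u_F\in V(F)\cap S$ with $u_F\ne v$; since the graphs in $F(v)$ pairwise intersect only in $v$, the vertices $u_F$ are pairwise distinct, yielding $|S|\ge |F(v)|>k$, a contradiction. Hence $v\in S$, and $S':=S\setminus\{v\}$ is a solution for $G-\{v\}$ of size at most $k-1$, so $(G-\{v\},k-1)$ is a yes-instance.

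For the backward direction, assume $(G-\{v\},k-1)$ is a yes-instance with solution $S'$, $|S'|\le k-1$. Set $S:=S'\cup\{v\}$; then $|S|\le k$ and $G-S=(G-\{v\})-S'$, which is a \pcg2 by assumption. Thus $S$ is a solution for $G$ witnessing that $(G,k)$ is a yes-instance.

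The main obstacle, if one may call it that, is really just the careful use of the pairwise-intersection property: one has to notice that the vertices chosen from distinct \FISG{}s to ``cover'' them must themselves be distinct once $v$ is excluded, which is precisely what the hypothesis ``pairwisely intersecting only in~$v$'' delivers. Beyond that, both directions are one-line consequences of the \FISG{} characterization (\autoref{fisg-char}) and of the hereditary nature of being a \pcg2. No additional data reduction rules or structural lemmas are required at this point.
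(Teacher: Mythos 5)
Your proof is correct and follows essentially the same route as the paper: the same pigeonhole argument forcing $v\in S$ when more than $k$ \FISG{}s pairwise intersect only in $v$, and the same trivial backward direction via $S'\cup\{v\}$. Your explicit remark that the covering vertices from distinct \FISG{}s are pairwise distinct merely spells out what the paper states more tersely.
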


\begin{proof}
  If $(G,k)$~is a yes-instance, then there exists a solution~$S$ with~$|S|\leq k$ such that $G-S$~is a 2-plex cluster graph. The set $S\setminus\{v\}$~is a solution for~$G-\{v\}$.  If~$S$ does not contain~$v$, then it contains at least one vertex for every \FISG{} in~$F(v)$. Because there are more than~$k$ \FISG{}s in~$F(v)$, this contradicts~$|S|\leq k$.  Therefore, $v\in S$ and~$S\setminus\{v\}$ contains at most~$k-1$ vertices. This shows that~$(G-\{v\},k-1)$ is a yes-instance.

  If $(G-\{v\},k-1)$~is a yes-instance, then $G-\{v\}$~admits a solution~$S$ of size~$k-1$.  The set $S\cup \{v\}$ is a solution for~$G$ that contains at most $k$~vertices. Thus, $(G,k)$~is a yes-instance.
\end{proof}

\noindent In \autoref{mark-vertices}, we introduce the concept of \emph{peripheral} sets. Given a solution~$X$, peripheral sets help us in \autoref{unmarked-bounds} to bound the number of connected components in~$G-X$ and help us in \autoref{redundant} to bound their sizes. We present an algorithm that constructs a peripheral set efficiently and enables us to give a lower bound on the number of vertices that pairwisely intersect only in a single vertex~$v\in X$. If more than~$k$ \FISG{}s intersect only in~$v$, then we can remove~$v$ from~$G$ according to \autoref{high-occurence}.

\subsection{Peripheral Sets}\label{mark-vertices}
In this section, we present an algorithm that, for each vertex~$v$ in a solution~$X$, constructs a vertex set~$M(v)$ that allows us to give a lower bound on the number of \FISG{}s that pairwisely intersect only in the vertex~$v$. If this lower bound shows that more than~$k$ \FISG{}s pairwisely intersect only in~$v$, then we can remove~$v$ from~$G$ according~to~\autoref{high-occurence}.

As a side effect, we construct the sets~$M(v)$ so that their union~$M:=\bigcup_{v\in X}M(v)$ helps us to bound the number and the sizes of the connected components in~$G-X$: informally speaking, if we remove~$M$ from~$G$, then we want each vertex~$v\in X$ to be adjacent to only one large connected component in~$G-(X\cup M)$. As a result, there will be at most~$|X|$ large connected components in~$G-(X\cup M)$ adjacent to~$X$. Further, if a vertex~$v\in X$ has a neighbor in a connected component in~$G-(X\cup M)$, then we want the vertex $v$ to be adjacent to almost all of that connected component's vertices. This will help us in \autoref{redundant} to bound the sizes of the connected components in~$G-X$. We later formalize these properties and capture them under the concept~of~a~\emph{peripheral}~set.

We will see that we can easily bound the size of~$M$ by a function only depending on the parameter~$k$. Thus, the graph~$G-M$ can be thought of as the ``core'' of our kernelization problem, for which we must provide further data reduction rules. In contrast, the vertices in~$M$ are only of peripheral interest.

Given a solution~$X$ for~$G$, we now construct the set~$M(v)$ for each vertex~$v\in X$. We start with~$M(v)=\emptyset$. Then, we repeatedly search for a \FISG{}~$F$ in~$G$ that contains~$v$ but no vertices from~$M(v)$ and add the vertices of~$F-\{v\}$ to~$M(v)$. This ensures that we only find \FISG{}s that pairwisely intersect only in~$v$. To find such \FISG{}s, we present three observations on the connected components in~$G-X$. Each observation will lead to a phase of an algorithm that constructs the sets~$M(v)$.

\begin{definition}
  Let~$V$ be the vertex set of~$G$ and let~$X$ be a solution. We define the collection $\hv:=\{\hp\subseteq V\mid \hp\text{ induces a connected component in }G-X\}$ of the vertex sets of the connected components in~$G-X$.
\end{definition}
\noindent Because each set in~$\hv$ induces a connected component in~$G-X$ and because~$X$ is a solution, each set in $\hv$ induces a 2-plex.

\begin{figure}[t]
  \centering
    \subfigure[\FISG{}s that will be found in Phase 1.]{
      \includegraphics{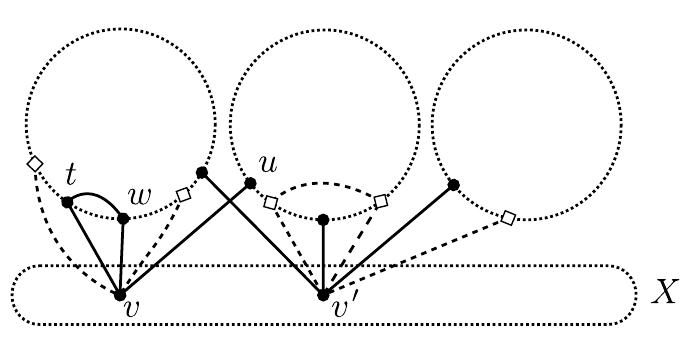}
      \label{fig:pass1}}\hfill
    \subfigure[\FISG{}s that will be found in Phase 2.]{
      \includegraphics{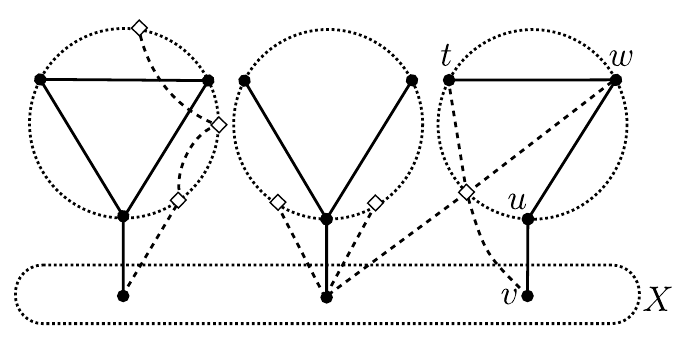}
      \label{fig:pass2}}
    \subfigure[A \FISG{} that will be found in Phase 3.]{
      \includegraphics{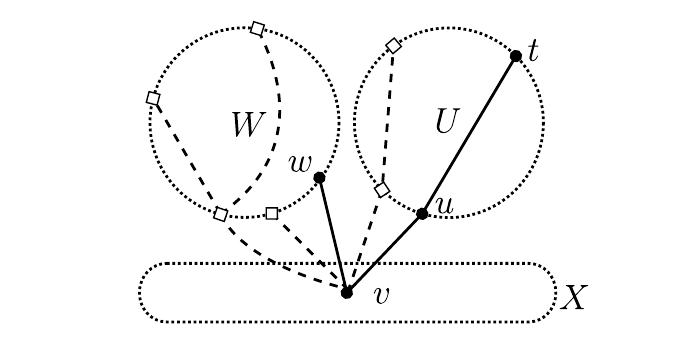}
      \label{fig:pass3}}\hfill
    \subfigure[\FISG{}s that will not be found.]{
      \includegraphics{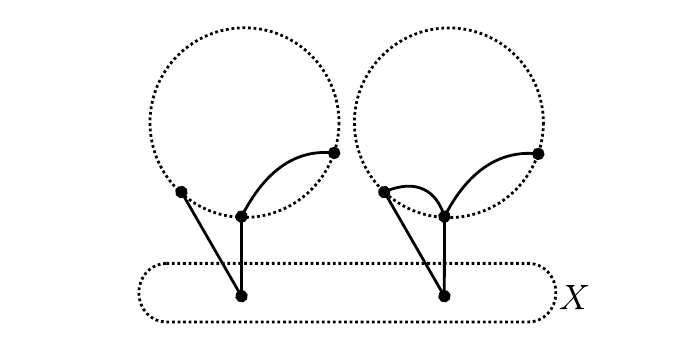}
      \label{fig:no-hit}}
    \caption{Each figure shows the graph~$G$ with a solution~$X$ and \FISG{}s that are found in the different phases of \autoref{mark-neighbors}. Also compare these \FISG{}s with the \FISG{}s shown in \autoref{forbidden}. The vertices~$u,v,w,$ and~$t$ as used in the algorithm are shown. The big circles represent connected components in~$G-X$, that is, they are \pl2es and their vertex sets are sets in~$\hv$. Squares are vertices in the set~$M(v)$ for some vertex~$v\in X$, that is, they are vertices of \FISG{}s that have already been found.}
\end{figure}
We now turn to our first out of three observations. Let~$v\in X$ be a vertex with three neighbors~$u$, $w$, and~$t$. Assume that~$u$ is nonadjacent to~$w$ and~$t$, as shown in \autoref{fig:pass1}. Then, $F:=G[\{t,u,v,w\}]$~is a connected graph, but~$v$ is nonadjacent to \emph{two} vertices~$t$ and~$w$. According to \autoref{fisg-char}, $F$~is a \FISG{}.

\begin{proc}[Phase 1]\label{mark-neighbors}
  Given a graph~$G$ and a solution~$X$, initialize $M(v):=\emptyset$ for each~$v\in X$. For each~$v\in X$, as long as there are vertices~$t,u,w\in N(v)\setminus(M(v)\cup X)$ such that $u$~is neither adjacent to~$t$ nor~$w$, add the vertices~$t,u,$ and~$w$ to~$M(v)$.
\end{proc}\setcounter{proc}{1}

\noindent Now, for each vertex~$v$ in the solution~$X$, let $M(v)$ be the set constructed by Phase~1 of \autoref{mark-neighbors}. For a vertex~$v\in X$, assume that there exists a set~$\hp\in\hv$ such that~$v$ is adjacent to a vertex~$u\in \hp\setminus M(v)$ but nonadjacent to \emph{two} vertices~$t,w\in \hp\setminus M(v)$. This situation is shown in \autoref{fig:pass2}. The graph~$G[\{t,u,w\}]$ is an induced subgraph of~$G[\hp]$. Thus, it is a 2-plex with three vertices, implying that it is connected. Because~$v$~is adjacent to~$u$, the vertex~$v$ is connected but nonadjacent to the \emph{two} vertices~$t$ and~$w$. By \autoref{fisg-char}, $G[\{t,u,v,w\}]$~is a \FISG{}. We continue \autoref{mark-neighbors} as follows:

\begin{proc}[Phase 2]
  \begin{samepage}
    For each~$v\in X$, as long as there is a set~$H\in\hv$ such~that
    \begin{compactenum}
    \item the vertex $v$ is adjacent to a vertex~$u\in H\setminus M(v)$ and
    \item the vertex $v$ is nonadjacent to two vertices~${t,w\in H\setminus M(v)}$,
    \end{compactenum}
    add the vertices~$t,u,$ and~$w$ to~$M(v)$.
  \end{samepage}
\end{proc}\setcounter{proc}{1}  

\noindent Now, for each vertex~$v$ in a solution~$X$, let $M(v)$ be the set constructed by Phase~1 and Phase~2 of \autoref{mark-neighbors}. Assume that for a vertex~$v\in X$, there exist two sets~${U,W\in\hv}$ such that there exist two neighbors~$u\in U\setminus M(v)$ and~$w\in W\setminus M(v)$ of~$v$. This situation is shown in \autoref{fig:pass3}. Assume that~$U\setminus M(v)$ or~$W\setminus M(v)$ contains at least three vertices. Without loss of generality, assume that $|U\setminus M(v)|\geq 3$. Then, $G[U\setminus M(v)]$~is a connected 2-plex. Therefore, there exists a neighbor~$t\in U\setminus M(v)$ of~$u$. The vertex~$w$ is nonadjacent to~$t$ and $u$, because~$w$ is in another set in~$\hv$. Because $F:=G[\{t,u,v,w\}]$~is connected, $F$~is a \FISG{} according to \autoref{fisg-char}.

\begin{proc}[Phase 3]
  \begin{samepage}
  For each vertex~$v\in X$, as long as there are two vertex
  sets~$U,W\in\hv$ such that
  \begin{compactenum}
  \item the vertex~$v$ has neighbors~$u\in U\setminus M(v)$ and $w\in W\setminus M(v)$ and
  \item there is a neighbor~$t\notin X\cup M(v)$ of either~$u$ or~$w$,
  \end{compactenum}
  add the vertices~$t,u,$ and~$w$ to~$M(v)$. Finally, return $M(v)$ for all vertices $v\in X$.
  \end{samepage}
\end{proc}

\noindent This concludes the description of \autoref{mark-neighbors}.  For a solution~$X$, we now inspect the union~$M:=\bigcup_{v\in X}M(v)$ of the sets~$M(v)$ constructed by \autoref{mark-neighbors}. Informally speaking, we show that if we remove~$M$ from~$G$, then each vertex~$v\in X$ is adjacent to the vertices of at most one large connected component in~${G-(X\cup M)}$. As a result, there are at most~$|X|$ large connected components in~$G-(X\cup M)$ containing neighbors of~$X$.  Further, we show that if~$v$ is adjacent to vertices of a connected component in~$G-(X\cup M)$, then it is adjacent to almost all of its vertices. This helps us in \autoref{redundant} to bound the sizes of the connected components in~$G-X$. To formalize these properties, we introduce the concept of a \emph{peripheral set}:

\begin{definition}\label{per} 
  Let~$X$ be a solution. We call a vertex set~$M$ with the following properties \emph{peripheral with respect to $X$}:
  \begin{compactenum}
  \item\label{per1} For each vertex~$v\in X$, there are at most two sets~$\hp\in\hv$ such that~$\hp\setminus M$ is adjacent~to~$v$.
  \item\label{per2} If there is a vertex~$v\in X$ and a set~$\hp\in\hv$ such that~$H\setminus M$ is adjacent to~$v$, then~$v$ is nonadjacent to at most one vertex in~$H\setminus M$.
  \item\label{per3} For each vertex~$v\in X$, if there is more than one set~$H\in\hv$ such that~$H\setminus M$ is adjacent to~$v$, then each such set~$H$ satisfies $|H\setminus M|\leq 2$.
  \end{compactenum}
\end{definition}

\begin{figure}[t]
  \centering
  \includegraphics{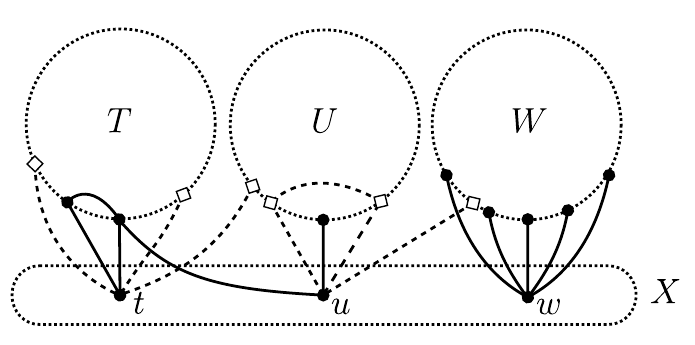}
  \caption{An example for a peripheral set~$M$, which contains the vertices drawn as squares. Shown is the graph~$G$ with a solution~$X$. The circles represent sets in~$\hv$, which induce connected components in~$G-X$.
  \label{per-fig}}
\end{figure}

\noindent For an example, refer to \autoref{per-fig}.  In this figure, no vertex in~$X$ is adjacent to the three sets~$T\setminus M$, $U\setminus M$, and~$W\setminus M$, as required by \autoref{per}(\ref{per1}). The vertex~$u$ is adjacent to~$T\setminus M$ and~$U\setminus M$. There is only one vertex in~$T\setminus M$ that is nonadjacent to~$u$, as required by \autoref{per}(\ref{per2}). As required by \autoref{per}(\ref{per3}), the sets~$T\setminus M$ and~$U\setminus M$ each contain at most two vertices. The vertex~$w$ is only adjacent to~$W\setminus M$. Because~$W\setminus M$ contains more than two vertices, $w$ is only adjacent to~$W\setminus M$, as required by \autoref{per}(\ref{per3}).

\begin{lemma}\label{two-sets}
  \prem{mark-neighbors} The set~$M$ is peripheral with respect to~$X$.
\end{lemma}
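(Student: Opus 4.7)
The plan is to verify the three properties in \autoref{per} one by one, each by contradiction and each by observing that a violating configuration is exactly a continuation condition for one of the three phases of \autoref{mark-neighbors}. The key monotonicity fact I will use throughout is that $M(v)\subseteq M$ for every $v\in X$, so $H\setminus M\subseteq H\setminus M(v)$ for every $H\in\hv$; any vertex witnessed outside~$M$ is also outside~$M(v)$ and therefore still visible to the algorithm. I will also use that the phases only add vertices to $M(v)$, so that once the termination condition of an earlier phase is achieved, growing $M(v)$ during subsequent phases preserves it.

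For \autoref{per}(\ref{per1}), I would suppose some $v\in X$ has neighbors outside~$M$ in three distinct components $U_1,U_2,U_3\in\hv$, and pick one $u_i\in U_i\setminus M$ adjacent to~$v$ for each~$i$. Because the~$u_i$ lie in three different components of~$G-X$, vertex~$u_2$ is non-adjacent to both~$u_1$ and~$u_3$, and all three lie in $N(v)\setminus(M(v)\cup X)$; this is exactly Phase~1's continuation condition, contradicting its termination. For \autoref{per}(\ref{per2}), given $H\in\hv$ with a neighbor $u\in H\setminus M$ and two non-neighbors $t,w\in H\setminus M$ of~$v$, the triple $(t,u,w)$ lies entirely in $H\setminus M(v)$ and witnesses Phase~2's continuation condition, again a contradiction.

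The delicate case is \autoref{per}(\ref{per3}). Suppose $v\in X$ has neighbors $u\in U\setminus M$ and $w\in W\setminus M$ in two distinct components $U,W\in\hv$ and that $|U\setminus M|\geq 3$. I need to exhibit a vertex $t\notin X\cup M(v)$ adjacent to~$u$ so that Phase~3 would still apply for~$v$. This is where the structure of~$G-X$ enters: because~$X$ is a solution, $U$ induces a 2-plex, so~$u$ has at most one non-neighbor in $U\setminus\{u\}$. Since $U\setminus M(v)\supseteq U\setminus M$ contains at least three vertices including~$u$, at least one of the other two elements of $U\setminus M(v)$ must be a neighbor of~$u$; being in~$U$ it lies outside~$X$, supplying the required~$t$. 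The contradiction with Phase~3's termination yields $|U\setminus M|\leq 2$, and the symmetric argument gives $|W\setminus M|\leq 2$. The main obstacle is precisely this step---recognising that the 2-plex property of each component is exactly the combinatorial ingredient needed to manufacture~$t$ from~$u$ alone; the other two properties fall out almost immediately from the phase definitions.
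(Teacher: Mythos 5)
Your proof is correct and follows essentially the same route as the paper: verify each property of \autoref{per} by contradiction, using $M(v)\subseteq M$ to pull a violating configuration back to one visible to the relevant phase of \autoref{mark-neighbors}, with the 2-plex property of $G[U]$ supplying the extra neighbor~$t$ needed for \autoref{per}(\ref{per3}) (the paper phrases this via connectivity of $G[U\setminus M(v)]$, you via $u$ having at most one non-neighbor in $U$ — the same fact). Your explicit remark that the phases' termination conditions are monotone under growing $M(v)$ is a point the paper leaves implicit, but it does not change the argument.
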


\begin{proof}
  We do not directly prove that for each vertex~$v\in X$, the set $M$ satisfies the properties in \autoref{per}. Instead, we show for each vertex~$v\in X$ that the set~$M(v)$ satisfies them. Because~$M(v)\subseteq M$ for all~$v\in X$, this is sufficient. We show the properties separately.

  (\ref{per1}) Assume that there exists a vertex $v\in X$ and three sets~$T,U,W\in\hv$ such that~$v$ has the neighbors~$t\in T\setminus M(v), u\in U\setminus M(v),$ and $w\in W\setminus M(v)$.  This case is illustrated for the vertex~$v'$ in \autoref{fig:pass1}. Because the vertices $t,u,$~and~$w$ come from different connected components in~$G-X$, they are pairwise nonadjacent. Phase~1 of \autoref{mark-neighbors} would have added~$t,u,$~and~$w$ to~$M(v)$.  This contradicts the assumption that~$t\in T\setminus M(v),u\in U\setminus M(v),$ and $w\in T\setminus M(v)$. This shows the first property.

  (\ref{per2}) Assume that there exists a set~$H\in\hv$ such that the vertex~$v\in X$ is adjacent to the vertex~$u\in H\setminus M(v)$ and~$v$ is nonadjacent to the vertices~$t\in H\setminus M(v)$ and~${w\in H\setminus M(v)}$. This is illustrated in \autoref{fig:pass2}. Phase 2 of \autoref{mark-neighbors} would have added the vertices~$t,u,$ and~$w$ to~$M(v)$. This contradicts the assumption that~$t,u,w\in \hp\setminus M(v)$. This shows the second property.

  (\ref{per3}) Assume that there exist two sets~$U,W\in\hv$ such that a vertex~$v\in X$ has the neighbors~$u\in U\setminus M(v)$ and~$w\in W\setminus M(v)$. Without loss of generality, assume that~$|U\setminus M(v)|> 2$.  This situation is shown in \autoref{fig:pass3}.  Because $|U\setminus M(v)|>2$, the 2-plex~$G[U\setminus M(v)]$ is connected. Therefore, the vertex $u$~has a neighbor~$t\in U\setminus M(v)$. Phase 3 of \autoref{mark-neighbors} would have added~$t,u,$~and~$w$ to~$M(v)$. This contradicts the assumption that~$u\in U\setminus M(v)$ and~$w\in W\setminus M(v)$. To fully prove the third property, one can show~$|W\setminus M(v)|\leq 2$ analogously.
\end{proof}

\noindent In the following, we provide a more detailed view on the execution steps of \autoref{mark-neighbors} and also analyze its running time. The following lemma enables us to execute Phase~3 of \autoref{mark-neighbors} quickly.

\begin{lemma}\label{two-neighbors}
  Let~$X$ be a solution. For each vertex~$v\in X$, let~$M(v)$ be the set constructed by Phase 1 of \autoref{mark-neighbors}. If there exists a vertex~$v\in X$ and two sets~$U,W\in\hv$ such that~$U\setminus M(v)$ and~$W\setminus M(v)$ are adjacent to~$v$, then $|N(v)\setminus(M(v)\cup X)|=2$.
\end{lemma}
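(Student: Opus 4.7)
My plan is to argue by contradiction, assuming $|N(v)\setminus(M(v)\cup X)|\geq 3$, and to derive that Phase~1 of \autoref{mark-neighbors} would still be applicable on some triple, contradicting the hypothesis that $M(v)$ is the output of Phase~1.

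First I would observe that, since $u\in U\setminus M(v)$ and $w\in W\setminus M(v)$ are distinct neighbors of~$v$ (they live in different components of $G-X$), we already have $|N(v)\setminus(M(v)\cup X)|\geq 2$, so it suffices to rule out a third vertex. Second, I would exploit the stopping condition of Phase~1: after Phase~1 terminates, no three vertices $t',u',w'\in N(v)\setminus(M(v)\cup X)$ exist such that $u'$ is nonadjacent to both $t'$ and $w'$. The point is that vertices drawn from two distinct components of $G-X$ are automatically nonadjacent in~$G$, which gives us nonadjacencies for free.

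Now suppose, for contradiction, that there is a third vertex $t\in N(v)\setminus(M(v)\cup X)$. Since $t\notin X$, it belongs to some component $T\in\hv$. I would then split into three cases according to whether $T=U$, $T=W$, or $T\notin\{U,W\}$. In the first case, $t$ and $u$ both lie in $U$, while $w\in W\neq U$ is nonadjacent to both; applying the Phase~1 criterion with $w$ in the role of the central vertex produces the required nonadjacent pair, so Phase~1 would still fire. The case $T=W$ is symmetric with $u$ in the central role. In the third case, $t$ lies in a third component, hence is nonadjacent to both $u$ and $w$, and Phase~1 fires with $t$ as the central vertex. Each case contradicts the fact that Phase~1 has terminated, so no such $t$ exists and $|N(v)\setminus(M(v)\cup X)|=2$.

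I do not foresee a real obstacle here: the argument is essentially a bookkeeping exercise in the stopping condition of Phase~1 together with the trivial observation that vertices in distinct components of $G-X$ are nonadjacent. The only thing to be slightly careful about is that $t$ really lies in some $H\in\hv$, which is ensured by $t\notin X$, and that the three vertices $t,u,w$ are pairwise distinct, which follows from $t\notin\{u,w\}$ (as $t$ is the third vertex) and from $u\in U$, $w\in W$ lying in distinct components.
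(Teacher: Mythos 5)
Your proof is correct and takes essentially the same route as the paper: assume a third neighbor outside $M(v)\cup X$ exists and show that the Phase~1 criterion would still apply because vertices from distinct components of $G-X$ are nonadjacent, contradicting termination of Phase~1. The only cosmetic difference is that you spell out the three-component case directly, while the paper dispatches it by citing the argument from \autoref{two-sets}.
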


\begin{proof}
  Assume that the vertex~$v\in X$ has three neighbors~$t,u,w\notin M(v)\cup X$, as shown in \autoref{fig:pass1}. According to the proof of \autoref{two-sets}, there are at most two sets~$U,W\in\hv$ such that~$v$ is adjacent to~$U\setminus M(v)$ and~$W\setminus M(v)$. Without loss of generality, assume that $t,w\in W\setminus M(v)$ and~$u\in U\setminus M(v)$. The vertices~$t,u,$ and $w$ are neighbors of~$v$ and $u$~is neither adjacent to~$t$~nor~$w$. Phase 1 of \autoref{mark-neighbors} would have added~$t,u,$~and~$w$ to~$M(v)$. This contradicts the assumption that~$t,u,w\notin M(v)\cup X$.
\end{proof}

\begin{lemma}\label{mark-neighbors-time}
  Given a solution~$X$, \autoref{mark-neighbors} can be carried out in $O(|X|n^2)$~time.
\end{lemma}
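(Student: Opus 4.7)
The plan is to preprocess $G$ and $X$ once and then to process each $v\in X$ separately across all three phases. Preprocessing computes $\hv$ by breadth-first search on $G-X$ and builds an $n\times n$ adjacency table so that adjacency queries and membership-in-$M(v)$ queries (via a boolean array per~$v$) all take $O(1)$~time; this takes $O(n+m)\subseteq O(n^2)$ time.

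For Phase~1 at a fixed $v\in X$, the key is a one-pass implementation. We iterate through $N(v)\setminus X$ once; for each vertex $u$ not yet in $M(v)$, we scan $N(v)\setminus(M(v)\cup X)$ looking for two non-neighbors of $u$, and if two such vertices $t,w$ are found we add $\{t,u,w\}$ to $M(v)$. A single pass suffices because if $u$ fails to yield two non-neighbors at the moment it is examined, the same is true forever afterwards, as $N(v)\setminus(M(v)\cup X)$ only shrinks over time. Each scan costs $O(n)$ and there are $O(n)$ of them, so Phase~1 costs $O(n^2)$ per $v$ and $O(|X|n^2)$ overall.

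For Phase~2 at a fixed $v\in X$ we process the sets of $\hv$ independently. For each $H\in\hv$ we partition $H\setminus M(v)$ in $O(|H|)$ time into a list of neighbors and a list of non-neighbors of $v$, and then repeatedly pop one neighbor $u$ and two non-neighbors $t,w$ and add them to $M(v)$ as long as both lists have the required sizes. Since $\sum_{H\in\hv}|H|\leq n$, Phase~2 costs $O(n)$ per $v$, for $O(|X|n)$ total. For Phase~3 at a fixed $v\in X$, \autoref{two-neighbors} shows that if two sets of $\hv$ are adjacent to $v$ through vertices outside $M(v)$ at the end of Phase~1, then already $|N(v)\setminus(M(v)\cup X)|=2$; Phase~2 can only shrink this set. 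Hence at the start of Phase~3 there are at most two candidate neighbors $u,w$, and a single $O(n)$ sweep through their neighborhoods both checks the loop condition and finds a suitable $t$. A successful iteration places both $u$ and $w$ into $M(v)$, so the loop terminates after at most one iteration. Phase~3 thus costs $O(n)$ per $v$, for $O(|X|n)$ total.

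Adding up preprocessing and the three phases yields the claimed $O(|X|n^2)$ bound. The main obstacle is Phase~1: a naive ``while a triple exists, find one and add it'' implementation would rescan $N(v)\setminus(M(v)\cup X)$ after every addition and spend $\Theta(n^3)$ per $v$, so arguing that a vertex $u$ which fails once cannot succeed later is what cuts a factor of~$n$.
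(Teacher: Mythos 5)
Your proof is correct and follows essentially the same route as the paper's: preprocess the connected components of $G-X$ once, then spend $O(n^2)$ per vertex $v\in X$ on Phase~1 and use \autoref{two-neighbors} to make Phase~3 cheap. Your explicit monotonicity argument that a single pass over potential centers $u$ exhausts Phase~1 (and your per-component organization of Phase~2) are only minor refinements of what the paper's proof does implicitly.
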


\begin{proof}
  Given a graph~$G$ and a solution~$X$, we first compute the graph~$G-X$ in ${O(n+m)}$~time. We can then compute the collection~$\hv$ of vertex sets of the connected components in~$G-X$. This can be done in $O(n+m)$~time using breadth-first search. During the construction of~$\hv$, we construct a table~$T$ that stores, for each vertex~$u$, the set~$\hp\in\hv$ with $u\in \hp$. We assume that set membership can be tested in constant time and that elements can be added to sets in constant time.  For each vertex~$v\in X$, we now execute the three phases:

  In Phase 1, we construct the set~$N(v)\setminus (M(v)\cup X)$ in $O(n)$~time. For each vertex $u\in N(v)\setminus (M(v)\cup X)$, we scan the set~$N(v)\setminus (M(v)\cup X)$ again to find two vertices nonadjacent to~$u$. Therefore, Phase 1 runs in $O(n^2)$~time for each vertex~$v\in X$.

  In Phase 2, for each~$u\in N(v)$, we can (using the table~$T$) find~$\hp\in\hv$ with~$u\in \hp$ in constant time. If~$u\in X$ or~$u\in M(v)$, then we proceed with the next~$u\in N(v)$. Otherwise, in $O(n)$~time, we scan~$\hp\setminus M(v)$ for two vertices that are nonadjacent to~$v$. The running time for one vertex~$u\in N(v)$ is thus $O(n)$, resulting in a running time of~$O(n^2)$ for each~$v\in X$.

  In Phase 3, we first construct the set $N(v)\setminus(M(v)\cup X)$ in $O(n)$~time. According to \autoref{two-neighbors}, if we have ${|N(v)\setminus(M(v)\cup X)|\ne 2}$, then there is at most one set~$\hp\in\hv$ such that~$\hp\setminus M(v)$ is adjacent to~$v$. Thus, we continue with the next~$v\in X$. Otherwise, let~$u,w\in N(v)\setminus(M(v)\cup X)$. In constant time, we check (using the table~$T$) if the vertices $u$~and~$w$ are in different sets in~$\hv$. If so, we scan the neighborhoods of~$u$ and~$w$ for a vertex~$t\notin X\cup M(v)$ in $O(n)$~time. Thus, the total running time of Phase~3 is~$O(n)$ for each~$v$. \autoref{mark-neighbors} has a worst-case running time of~$O(|X|n^2)$.
\end{proof}

\noindent Note that, given a vertex~$v$ of a solution~$X$, \autoref{mark-neighbors} only finds a \FISG{}~$F$ containing~$v$ if the vertices in~$F-\{v\}$ are neighbors of~$v$ or if at least two vertices of~$F-\{v\}$ are in distinct connected components in~$G-X$. This is not the case for the \FISG{}s shown in \autoref{fig:no-hit}. Thus, \autoref{mark-neighbors} does not necessarily find them. We could search for these \FISG{}s, but this would presumably increase the asymptotic running time of \autoref{mark-neighbors}. It would not improve the worst-case size of our problem kernel.

\subsection{Reducing the Number of Connected
  Components}\label{unmarked-bounds}
In this section, given a solution~$X$ for the graph~$G$, we present data reduction rules to bound the number of connected components in~${G-X}$ by a function only depending on the parameter~$k$. To this end, we bound the size of the peripheral set constructed by \autoref{mark-neighbors} using the following data reduction rule, which is based on \autoref{high-occurence}.

\begin{rul}\label{high-occurence-fast}
  \premv{mark-neighbors} If there exist a vertex~$v\in X$ such that~$|M(v)|>3k$, then delete~$v$ from~$G$ and~$X$ and decrement~$k$ by one.
\end{rul}

\begin{lemma}\label{high-occurence-time}
  \autoref{high-occurence-fast} is correct. Given a solution~$X$ and the set~$M(v)$ constructed by \autoref{mark-neighbors} for each vertex~$v\in X$, we can exhaustively apply \autoref{high-occurence-fast} in $O(|X|n+m)$~time.
\end{lemma}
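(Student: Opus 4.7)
The plan is to handle correctness as a direct invocation of \autoref{high-occurence} and then to obtain the running time by a straightforward iterative scan of $X$, charging work to $|X|n$ for repeated scans and to $m$ for vertex deletions from adjacency lists. For correctness, I would first observe that each augmentation of $M(v)$ performed by any phase of \autoref{mark-neighbors} adds exactly three vertices $t, u, w$, and that the arguments preceding Phases 1, 2, 3 in the text establish that together with $v$ they induce a FISG; moreover, by the firing condition of each phase these three vertices are disjoint from all previously added triples. Hence every augmentation produces a fresh FISG containing $v$, the resulting family pairwise intersects only in $v$, and $|M(v)| > 3k$ therefore witnesses strictly more than $k$ such FISGs, so that \autoref{high-occurence} yields the equivalence of $(G,k)$ and $(G-\{v\}, k-1)$ required for the rule's correctness.

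For the running time, I would maintain the values $|M(v)|$ in a table with $O(1)$ access and iterate: scan $X$ for a vertex with $|M(v)| > 3k$; on a hit, delete it from $G$ in $O(\deg(v))$ time via adjacency-list removal, drop it from $X$ in $O(1)$, and decrement $k$; otherwise stop. Each successful iteration removes a distinct vertex of $X$, so at most $|X|+1$ scans occur, contributing $O(|X|^2) \subseteq O(|X|n)$, while the cumulative deletion cost is bounded by $\sum \deg(v) \leq 2m$. Combined, this yields the claimed $O(|X|n + m)$ bound.

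The subtle point I would address explicitly---and essentially the only obstacle to a naive iteration---is that the precomputed values $|M(u)|$ for $u \in X \setminus \{v\}$ remain valid witnesses after $v$ has been deleted. This holds because each FISG counted in $M(u)$ involves $u$ together with three vertices outside $X$, and $v \in X$ is not among them; the FISG therefore persists as an induced subgraph of $G - \{v\}$ and retains its pairwise-only-in-$u$ intersection pattern. This observation permits iterating without any recomputation of peripheral sets and is what keeps the procedure within the claimed $O(|X|n + m)$ time.
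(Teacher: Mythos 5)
Your proposal is correct and follows essentially the same route as the paper: correctness by observing that each augmentation of $M(v)$ by \autoref{mark-neighbors} adds three vertices of a fresh \FISG{} disjoint (apart from $v$) from all previously found ones, so $|M(v)|>3k$ witnesses more than $k$ \FISG{}s pairwisely intersecting only in $v$ and \autoref{high-occurence} applies, and the running time by counting the sets and deleting the qualifying vertices. Your explicit treatment of the iteration and of why the witnesses $M(u)$ for $u\ne v$ persist after deleting $v$ (all their vertices besides $u$ lie outside $X$) is slightly more careful than the paper's terse proof, which makes the analogous invariance observation only in the remark following the lemma.
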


\begin{proof}
  If \autoref{mark-neighbors} adds vertices to~$M(v)$ for a vertex~$v\in X$, then it has found a \FISG{} that contains no vertices from~$M(v)$. That is, apart from~$v$, this \FISG{} does not contain vertices from previously found \FISG{}s. Thus, if $|M(v)|>3k$, then $M(v)$~contains vertices of more than $k$~\FISG{}s that pairwisely intersect only in the vertex~$v$. According to \autoref{high-occurence}, we can delete~$v$ from~$G$ and decrement the parameter~$k$~by one. For each vertex~$v\in X$, the elements in~$M(v)$ can be counted in $O(n)$~time. The deletion of all vertices~$v\in X$ with~$|M(v)|>3k$ is possible in $O(n+m)$~time.
\end{proof}

\noindent Observe that for each vertex~$v$ in a solution $X$, \autoref{high-occurence-fast} does not change the set~$M(v)$ constructed by \autoref{mark-neighbors}. Also, the graph~$G-X$ is invariant under \autoref{high-occurence-fast}; so is the set~$\hv$. We can conclude that, after we have applied \autoref{high-occurence-fast} to~$G$ and~$X$, the proof of \autoref{two-sets} is still valid and shows that the set $\bigcup_{v\in X}M(v)$ is still peripheral by \autoref{per}. Therefore, \autoref{high-occurence-fast} does not only reduce the size of~$G$ and~$X$; we also obtain a smaller peripheral set. This is because after the exhaustive application of \autoref{high-occurence-fast}, for each vertex~$v\in X$, the set~$M(v)$ contains at most~$3k$ vertices.

\begin{korollar}\label{size-M}
  Let~$X$ be a solution for~$G$. For each vertex~$v\in X$, let~$M(v)$ be the set constructed by \autoref{mark-neighbors}. After exhaustively applying \autoref{high-occurence-fast} to~$G$ and~$X$, the peripheral set $M:=\bigcup_{v\in X} M(v)$ contains at most $3k|X|$ vertices.
\end{korollar}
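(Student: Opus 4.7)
The plan is to bound the size of $M$ by a straightforward union bound, leveraging the fact that Reduction Rule 1.1 is exhaustively applied to eliminate any vertex $v \in X$ with $|M(v)| > 3k$. First I would observe that after exhaustive application of \autoref{high-occurence-fast}, every remaining vertex $v \in X$ satisfies $|M(v)| \leq 3k$: otherwise the rule would still be applicable, contradicting exhaustiveness. This uses the remark immediately preceding the corollary, which asserts that \autoref{high-occurence-fast} does not modify the sets $M(v)$ for vertices $v$ that remain in $X$, so the bound $|M(v)|\leq 3k$ really does hold for all surviving $v$ after the rule has been applied.

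Given this per-vertex bound, the global bound follows by a trivial union argument:
\[
|M| = \Bigl|\bigcup_{v \in X} M(v)\Bigr| \leq \sum_{v \in X} |M(v)| \leq 3k \cdot |X|.
\]
This completes the bound claimed in the corollary.

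The only step requiring any thought at all is confirming that the sets $M(v)$ whose union defines $M$ are indeed the sets produced by \autoref{mark-neighbors} on the \emph{reduced} graph and solution. Here the remark preceding the corollary is doing the essential work: because \autoref{high-occurence-fast} only removes vertices $v$ for which $|M(v)| > 3k$, and because the graph $G - X$ and hence the collection $\hv$ are unchanged by removing such $v$ from both $G$ and $X$, the sets $M(v)$ for surviving vertices are identical to those originally constructed. Thus no re-computation is needed and the bound $|M(v)| \leq 3k$ transfers directly. I do not expect any genuine obstacle here; the corollary is essentially a bookkeeping consequence of \autoref{high-occurence-fast} together with \autoref{high-occurence-time}.
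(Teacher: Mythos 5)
Your proof is correct and matches the paper's reasoning: the paper derives this corollary from exactly the observation you cite (the remark that \autoref{high-occurence-fast} leaves the sets~$M(v)$ and~$\hv$ unchanged, so after exhaustive application every surviving $v\in X$ has $|M(v)|\leq 3k$), followed by the same union bound $|M|\leq\sum_{v\in X}|M(v)|\leq 3k|X|$.
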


\noindent 
Now that we have bounded the size of the peripheral set, we can, given a solution~$X$, bound the number of connected components in~${G-X}$. First, we remove connected components from~$G-X$, which are induced by the vertex sets in~$\hv$, according to the following data reduction rule. Then, we use a peripheral set to show a bound on the number of the remaining connected components.

\begin{rul}\label{isolated}
  Let~$X$ be a solution. If there exists a set~$\hp\in\hv$ that is nonadjacent to~$X$, then remove the vertices in~$\hp$ from~$G$.
\end{rul}

\begin{lemma}\label{isolated-time}
  \autoref{isolated} is correct. Given a solution~$X$, we can exhaustively apply \autoref{isolated} in $O(n+m)$~time.
\end{lemma}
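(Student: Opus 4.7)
The plan is to reduce correctness to the observation that any $H\in\hv$ nonadjacent to~$X$ is already a connected component of the whole graph~$G$, not merely of~$G-X$: since~$H$ is connected in~$G-X$ and has no edges to~$X$, no edge of~$G$ leaves~$H$. Because~$X$ is a solution, $G-X$ is a \pcg2, hence $G[H]$ itself is already a \pl2.

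For the equivalence, I would argue both directions directly. If $(G,k)$ admits a solution $S\subseteq V$ with $|S|\leq k$, then $(G-H)-(S\setminus H)$ is an induced subgraph of the \pcg2~$G-S$ and therefore again a \pcg2; thus $S\setminus H$ is a solution of size at most~$k$ for~$G-H$. Conversely, given a solution~$S$ for~$G-H$ with $|S|\leq k$, the graph~$G-S$ consists of $(G-H)-S$, which is a \pcg2, together with the already separated \pl2~$G[H]$; so $G-S$ is itself a \pcg2 and hence $S$ is a solution for~$G$.

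For the running time, the plan is to perform all deletions in a single pass. I would first compute~$\hv$ by breadth-first search on~$G-X$ and, during the same pass, flag each component that contains a vertex having a neighbor in~$X$; all unflagged components are then deleted in one sweep. This is sound because removing components nonadjacent to~$X$ does not change the adjacency between~$X$ and the surviving components, so no further application of \autoref{isolated} is enabled.

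The main obstacle, insofar as there is one, is to make the running-time claim rigorous: one must notice that ``exhaustively apply'' does not entail repeated linear-time passes, since the rule's trigger condition is unaffected by its own application. Once this is observed, a single breadth-first search identifying all simultaneously deletable components suffices in $O(n+m)$~time, and correctness follows directly from the definitions of~$\hv$ and of a solution.
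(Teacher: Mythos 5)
Your proposal is correct and follows essentially the same route as the paper: both directions of the equivalence are argued directly (the backward direction resting on the fact that a component of~$G-X$ nonadjacent to~$X$ is an isolated \pl2 in~$G$, so putting it back cannot create a \FISG{}), and the running time is obtained by one linear-time pass that flags the components of~$G-X$ having a neighbor in~$X$ and deletes the rest in a single sweep. Your explicit remarks that $S\setminus H$ handles a solution intersecting~$H$ and that the rule's trigger condition is unaffected by its own application are just slightly more detailed versions of what the paper does implicitly.
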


\begin{proof}
  Let~$\hp\in\hv$ be the set of vertices chosen for removal by \autoref{isolated} and let~${G':=G-\hp}$. To prove the correctness of \autoref{isolated}, we have to show that $(G',k)$~is a yes-instance if and only if $(G,k)$~is a yes-instance. If $(G,k)$~is a yes-instance, then there exists a solution~$S$ with~$|S|\leq k$ for~$G$. Since $G-S$~is a \pcg 2, $G'-S$~is a \pcg 2 as well. Thus, $(G',k)$~is a yes-instance.

  If $(G',k)$~is a yes-instance, then there exists a solution~$S$ with~$|S|\leq k$ for~$G'$. Because \autoref{isolated} chooses to remove the vertices in~$H$ from~$G$, the set~$\hp$ is nonadjacent to the solution $X$. Therefore, $\hp$~induces an isolated 2-plex in~$G$. It can therefore not contain vertices of a \FISG{}. Thus, also $G-S$~is a 2-plex cluster graph and $(G,k)$~is a yes-instance.

  Considering the running time, we can obtain the set~$\hv$ in $O(n+m)$~time. During the construction of~$\hv$, we use a table~$T$ to store for each vertex~$u$ the set~${\hp\in\hv}$ with~$u\in \hp$. We have already used this technique in the proof of \autoref{mark-neighbors-time}. We construct a further table~$T'$ as follows: for each vertex~$v\in X$ and for each vertex~$u\in N(v)\setminus X$, we set~${T'[T[u]]=1}$. This can be done in~$O(n+m)$ time. Then, the sets~${\hp\in\hv}$ with~$T'[\hp]=0$ are known to have no neighbor in~$X$. These can be removed from~$G$ in~$O(n+m)$ time.
\end{proof}

\begin{figure}
  \centering

  \subfigure[A set~$H\in\hv$ with~$H\setminus M$ nonadjacent to~$X$]{ \hspace{1cm}\includegraphics{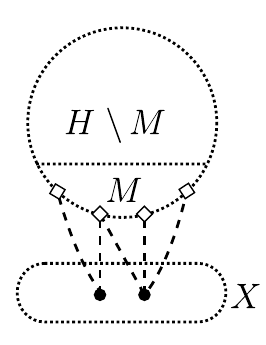}
    \label{fig:h0}\hspace{1cm}}\hspace{1cm}
  \subfigure[A set~$H\in\hv$ with~$H\setminus M$ adjacent to~$X$]{ \hspace{1cm} \includegraphics{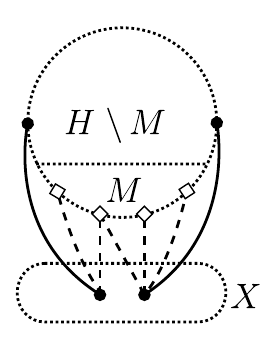}\hspace{1cm}
    \label{fig:h1}}
  \caption{A solution~$X$ and a vertex set~$M$. The big circles represent sets in~$\hv$, or connected components in~$G-X$, likewise.}
  \label{teilung-h1h0}
\end{figure}
\noindent 
Given a solution~$X$ and a vertex set~$M$, there are two possible scenarios for a connected component in~$G-X$. Consider the vertex set~$H\in\hv$ of such a connected component. As shown in \autoref{fig:h0}, it might be the case that the edges between the set~${H\cap M}$ and the solution~$X$ separate the vertices in~$H$ from the vertices in~$X$. That is, the set~$H\setminus M$ might be nonadjacent to~$X$. As shown in \autoref{fig:h1}, it might also be the case that for a set~$H\in\hv$, the set~$H\setminus M$ is adjacent to~$X$. According to \autoref{per}(\ref{per1}), if~$M$ is peripheral, then there are at most~$2|X|$ sets~$H\in\hv$ such that~$H\setminus M$ is adjacent to~$X$. To bound the total number of connected components in~$G-X$, it is left to bound the number of sets~$H\in\hv$ such that~$H\setminus M$ is nonadjacent to~$X$.

\begin{lemma}\label{sets-in-hm}
  Let~$X$ be a solution and let~$M$ be a vertex set. After applying \autoref{isolated}, there are at most $|M|$~sets~$\hp\in\hv$ such that~$\hp\setminus M$ is nonadjacent to~$X$.
\end{lemma}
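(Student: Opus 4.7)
The plan is to exploit the fact that after \autoref{isolated} has been applied exhaustively, every set $\hp\in\hv$ is adjacent to~$X$ (otherwise the rule would have removed it). So if $\hp\setminus M$ is \emph{nonadjacent} to~$X$, then every edge between~$\hp$ and~$X$ must have its $\hp$-endpoint in~$\hp\cap M$, which forces $\hp\cap M\ne\emptyset$.

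From this I would build an injection $\varphi$ from the collection $\mathcal{S}:=\{\hp\in\hv\mid \hp\setminus M\text{ is nonadjacent to }X\}$ into~$M$ by sending each~$\hp\in\mathcal{S}$ to an arbitrarily chosen vertex in~$\hp\cap M$. Since the sets in~$\hv$ are vertex sets of \emph{distinct} connected components of~$G-X$, they are pairwise disjoint; hence the chosen vertices are pairwise distinct and $\varphi$ is indeed injective. The bound $|\mathcal{S}|\leq|M|$ follows immediately.

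No real obstacle is anticipated: the argument is a one-step counting observation once the post-\autoref{isolated} invariant (``every connected component of~$G-X$ has a neighbor in~$X$'') is noted and combined with the disjointness of the sets in~$\hv$. The only subtle point to state cleanly is that the hypothesis ``$\hp\setminus M$ nonadjacent to~$X$'' together with ``$\hp$ adjacent to~$X$'' forces $\hp\cap M$ to contain at least one vertex that has a neighbor in~$X$, so the injection is well defined.
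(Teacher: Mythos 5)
Your proof is correct and follows essentially the same argument as the paper: exhaustive application of the reduction rule forces each such set~$\hp$ to be adjacent to~$X$ via a vertex of~$\hp\cap M$, and the pairwise disjointness of the sets in~$\hv$ turns this into the bound~$|M|$. The injection~$\varphi$ is just a slightly more formal phrasing of the paper's counting step.
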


\begin{proof}
  Let~$\hp\in\hv$ such that~$\hp\setminus M$ is nonadjacent to the solution~$X$. Because \autoref{isolated} has been applied, the set $\hp$ must be adjacent to~$X$. Otherwise, \autoref{isolated} would have removed~$\hp$. Because the set~$\hp\setminus M$ is nonadjacent to~$X$, the set $\hp$ must contain a vertex from~$M$ that is adjacent to~$X$. Because a vertex in~$M$ can be contained in only one set in~$\hv$, there can be at most~$|M|$~sets $H\in\hv$ such that~$H\setminus M$ is nonadjacent to~$X$.
\end{proof}

\noindent Given a solution~$X$ and a peripheral set~$M$, we conclude from \autoref{per}(\ref{per1}) and \autoref{sets-in-hm} that the number of the connected components in~$G-X$ is at most~$2|X|+|M|$.

\section{Bounding the Sizes of Connected Components}
\label{redundant}
In this section, given a solution~$X$ for~$G$, we bound the \emph{sizes} of the connected components in~$G-X$ by functions only depending on the parameter~$k$. Because we have already bounded the size of~$X$ and the \emph{number} of connected components in~$G-X$, this will finally lead to a problem kernel, as we have discussed in the beginning of \autoref{2pvd}. In \autoref{strong}, we present a generalization of the \emph{module} concept\cite{Gal67,DBLP:journals/dm/McConnellS99}. Based on this, we develop a data reduction rule to reduce the sizes of the connected components in~$G-X$. \autoref{ddot-mu} deals with the efficient execution of this data reduction rule and uses a peripheral set~$M$ to bound the sizes of the connected components.  In \autoref{weak}, we present an additional data reduction rule that is only applicable to connected components induced by sets~$H\in\hv$ such that~$H\setminus M$ is nonadjacent to~$X$. We have already specially handled this type of connected components in \autoref{unmarked-bounds}, where we bounded the \emph{number} of connected components in~$G-X$. We use the fact that the edges between the set~$H\cap M$ and the solution~$X$ separate the vertices in~$H$ from the vertices in~$X$, as shown in \autoref{fig:h0}. We will see that the additional data reduction rule presented in \autoref{weak} is necessary to obtain an~$O(k^2)$-vertex problem kernel.

\subsection{Data Reduction Based on Modules}
\label{strong}
Given a solution~$X$, we now develop a characterization of vertices that can be removed from the connected components in~$G-X$. This characterization is based on so-called \emph{modules}\cite{Gal67,DBLP:journals/dm/McConnellS99}. For a graph with the vertex set~$V$, a vertex subset~$Z\subseteq V$ is called a \emph{module}, if any two vertices~$u,v\in Z$ satisfy~$N(v)\setminus Z=N(u)\setminus Z$. That is, a vertex not in~$Z$ is adjacent to either to all or to no vertices in~$Z$. For example, the two vertices~$w$ and~$x$ in \autoref{fig:redundant1} form a module. Modules also serve as the base of the \emph{critical clique} concept introduced by Guo\cite{Guo09} to kernelize the \name{Cluster Editing} problem.

Given a vertex set~$W\subseteq V$, we generalize the module concept and introduce the \emph{\md W}. We call a vertex set~$Z\subseteq V$ a \emph{\md{W}}, if any two vertices~$u,v\in Z$ satisfy~$N(u)\cap W=N(v)\cap W$. That is, a vertex in~$W$ is either adjacent to all or to no vertices in~$Z$. \autoref{fig:redundant} shows examples for~$W$-modules. Observe that if~$Z\subseteq V$ is a $(V\!\setminus\!Z)$-module, then~$Z$ is a module. Every subset of a $W$-module is again a $W$-module.

For a graph~$G$ and a solution~$X$, we use the fact that the vertices in an \md X are equivalent with respect to their neighborhood in~$X$. The idea is, informally, to represent a large \md X by one of its subsets and to replace the \md X by its representative.
\begin{figure}[t]
  \centering
    \subfigure[Graph prior to reduction.]{
     \hspace{0.5cm} \includegraphics{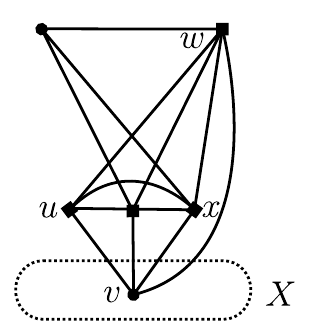}\hspace{0.5cm}
      \label{fig:redundant1}}\hspace{1cm}
    \subfigure[Removed~$w$ and~$x$: valid data reduction.]{
      \hspace{0.5cm}\includegraphics{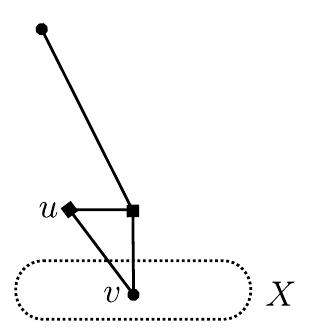}\hspace{0.5cm}
      \label{fig:redundant2}}\hspace{1.0cm}
    \subfigure[Removed~$u$: wrong data reduction.]{
      \includegraphics{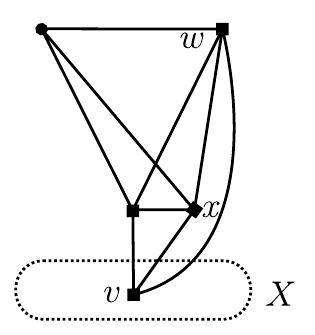}
      \label{fig:redundant3}}
    \caption{In each displayed graph, the vertices drawn as squares
      form an \md X.}
    \label{fig:redundant}
\end{figure}
Consider the following example, which also shows that we cannot choose an arbitrary subset of an \md X as representative: the graph shown in \autoref{fig:redundant1}, call it~$G'$, requires one vertex deletion to transform it into a \pcg 2. The vertices~$u,w,$ and~$x$ are part of an~$X$-module. Observe that also for~$G'-\{w,x\}$ shown in \autoref{fig:redundant2}, one vertex deletion is required to transform it into a \pcg2. It follows that $(G',k)$ is a yes-instance if and only if~$(G'-\{w,x\},k)$~is. Therefore, it is valid to remove~$w$ and~$x$ from~$G'$ to obtain the graph shown in \autoref{fig:redundant2}. In contrast, the graph~$G'-\{u\}$ shown in \autoref{fig:redundant3} is a \pcg2. Because $G'-\{u\}$ can be transformed into a \pcg 2 with less vertex deletions than~$G'$, we may not remove $u$ from~$G'$. To circumvent this problem, we give a constraint on the vertices that may be removed from an \md X in~$G$. Recall that the connected components in~$G-X$ are induced by vertex sets in~$\hv$.

\begin{definition}\label{redundant-def}
  Let~$X$ be a solution. For $\hp\in\hv$, let~$R(H)\subseteq H$ be an \md X. We call~$R(H)$ \emph{redundant} if there exists an \md X $Z(H)$ with $R(H)\subseteq Z(H)\subseteq H$ that contains all vertices from~$H$ that are nonadjacent to a vertex in~$R(H)$.
\end{definition}

\begin{rul}\label{rul:goodconnected}
  Let~$X$ be a solution, let $\hp\in\hv$ and let $R(\hp)$ be a redundant subset of~$H$. If $|R(\hp)|> k+3$, then choose an arbitrary vertex from~$R(\hp)$ and remove it from~$G$.
\end{rul}

\noindent In \autoref{ddot-mu} we construct a redundant set $R(H)$ for each vertex set~${H\in\hv}$ so that we can give a bound on the size of~$\hp\setminus R(\hp)$. Using \autoref{rul:goodconnected}, we can then bound the size of~$R(H)$.  To prove the correctness of the above data reduction rule, we assume that \autoref{rul:goodconnected} chooses to remove a vertex~$u$ from~$G$ and show that~$(G,k)$ is a yes-instance if and only if~$(G-\{u\},k)$ is a yes-instance. To this end, we need three further observations, which we present in the following lemmas.

\begin{lemma}\label{FISG-has-v}
  Let~$\mathcal G$ be an arbitrary graph and let $v$~be a vertex of~$\mathcal G$. If $\mathcal G-\{v\}$ but not~$\cal G$~is a 2-plex cluster graph, then $\cal G$~contains a \FISG{} including the vertex~$v$.
\end{lemma}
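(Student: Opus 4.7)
The plan is to argue by contradiction using the forbidden induced subgraph characterization from \autoref{fisg-char}. Since $\mathcal G$ is not a 2-plex cluster graph, by \autoref{fisg-char} it contains some connected induced subgraph $F$ in which a vertex is nonadjacent to two others; that is, $\mathcal G$ contains some \FISG{} as induced subgraph.

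Now suppose, for the sake of contradiction, that no \FISG{} of $\mathcal G$ contains $v$. Then the \FISG{} $F$ guaranteed above is an induced subgraph of $\mathcal G - \{v\}$, because deleting a vertex not in $V(F)$ leaves the induced subgraph on $V(F)$ unchanged. Hence $\mathcal G - \{v\}$ contains a \FISG{} as induced subgraph as well.

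By hypothesis, however, $\mathcal G - \{v\}$ is a 2-plex cluster graph, so by the ``only if'' direction of \autoref{fisg-char} it cannot contain any \FISG{} as induced subgraph. This is the desired contradiction, so $\mathcal G$ must contain a \FISG{} that includes $v$.

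I do not expect any real obstacle here: the entire argument is a direct application of \autoref{fisg-char} in both directions, combined with the observation that induced subgraphs not involving $v$ are preserved under deletion of $v$. The only subtlety is to make sure the size constraint in \autoref{fisg-char} (at most $|V(\mathcal G)|$ vertices) does not cause issues when passing to $\mathcal G-\{v\}$; this is fine because any \FISG{} of $\mathcal G$ not containing $v$ has at most $|V(\mathcal G)|-1 = |V(\mathcal G-\{v\})|$ vertices, so it qualifies as a \FISG{} of $\mathcal G - \{v\}$.
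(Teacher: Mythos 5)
Your proof is correct and follows essentially the same route as the paper: both argue by contradiction that if no \FISG{} of~$\mathcal G$ contained~$v$, then such a \FISG{} would survive in~$\mathcal G-\{v\}$, contradicting the assumption that $\mathcal G-\{v\}$ is a 2-plex cluster graph via \autoref{fisg-char}. Your additional remark on the vertex-count bound in \autoref{fisg-char} is a fine (if minor) piece of extra care that the paper leaves implicit.
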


\begin{proof}
  Because $\cal G$~is not a 2-plex cluster graph, it contains a \FISG{}.  If all \FISG{}s in~$\cal G$ did not contain~$v$, then no \FISG{} could be destroyed by removing $v$~from~$\cal G$. Thus, $\mathcal G-\{v\}$~would not be a 2-plex cluster graph, contradicting our assumption.
\end{proof}

\noindent Additionally to the assumption that \autoref{rul:goodconnected} chooses to remove a vertex~$u$ from~$G$, we now assume that~$(G-\{u\},k)$ is a yes-instance and show two further lemmas. Finally, we prove the correctness of \autoref{rul:goodconnected}.

\begin{asu}\label{working-asu}
  Let~$X$ be a solution and let $R(\hp)$~be a redundant subset of ${\hp\in\hv}$. Assume that \autoref{rul:goodconnected} chooses to remove the vertex ${u\in R(\hp)}$ from~$G$. Further, assume that~$(G-\{u\},k)$ is a yes-instance, that is, that there exists a solution~$S$ with~${|S|\leq k}$ for the graph~$G-\{u\}$.
\end{asu}

\noindent In the following, we write $G'$~for~$G-\{u\}$. Because we assume that \autoref{rul:goodconnected} chooses to remove~$u$ from~$R(H)$, the set $R(H)$ must contain more than~$k+3$ vertices, which implies $|R(\hp)\setminus(S\cup\{u\})|\geq 3$.
Because~$G[H]$ is a \pl 2, $G[R(\hp)\setminus(S\cup\{u\})]$ is a \pl 2 containing at least three vertices. We can conclude that $G[R(\hp)\setminus(S\cup\{u\})]$~is connected. The graph $G[\hp\setminus(S\cup\{u\})]$ is connected for the same reason.

\begin{lemma}\label{connected}
  Under \autoref{working-asu}, let $G-S$~contain a \FISG{}~$F$ including~$u$. Then in~$G'-S$, the vertices of~$F-\{u\}$ are connected to all vertices in~$\hp\setminus(S\cup\{u\})$.
\end{lemma}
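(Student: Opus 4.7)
The plan is to show, for each $v\in V(F)\setminus\{u\}$, that $v$ lies in the same connected component of $G'-S$ as the vertex set $H\setminus(S\cup\{u\})$. Since $G[H\setminus(S\cup\{u\})]$ is connected (as noted just before the lemma statement), this is exactly what the lemma asserts. Fix such a $v$. If $v\in H$, then $v\notin S$ (because $V(F)\cap S=\emptyset$) and $v\neq u$, so $v\in H\setminus(S\cup\{u\})$ and there is nothing to prove.

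Assume therefore $v\notin H$. Because $F$ is connected, it contains a path from $v$ to $u$; let $w$ denote the neighbor of $u$ on this path. Then $w\in V(F)\setminus\{u\}$, the vertex $w$ is adjacent to $u$ in $G$, and the portion of the path from $v$ to $w$ avoids $u$, giving a walk from $v$ to $w$ in $F-\{u\}\subseteq G'-S$. It therefore suffices to show that $w$ itself connects to $H\setminus(S\cup\{u\})$ in $G'-S$.

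Since $w$ is adjacent to $u\in H$ and $H$ induces a connected component of $G-X$, we must have $w\in H\cup X$. If $w\in H$, then, just as for $v$ above, $w\in H\setminus(S\cup\{u\})$ and we are done. If $w\in X$, the defining property of the $X$-module $R(H)$ forces $w$ to be adjacent in $G$ to every vertex of $R(H)$, and in particular to every vertex of $R(H)\setminus(S\cup\{u\})$. This set is contained in $H\setminus(S\cup\{u\})$ and has size at least three, as observed just before the lemma statement, so $w$ has a neighbor in $H\setminus(S\cup\{u\})$ surviving in $G'-S$, finishing the argument.

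The main obstacle is to produce, in the case $v\notin H$, a replacement route from $v$ to $H\setminus(S\cup\{u\})$ that avoids the deleted vertex $u$. The $X$-module property of $R(H)$ is precisely what enables this rerouting: an $X$-neighbor of $u$ is automatically adjacent to every other vertex of $R(H)$, and the cardinality bound $|R(H)|>k+3$ built into \autoref{working-asu} guarantees that at least one such replacement vertex survives the deletion of $S\cup\{u\}$.
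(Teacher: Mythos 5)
Your proof is correct and takes essentially the same route as the paper's: follow the path in $F$ from $v$ to $u$, take the neighbor $w$ of $u$ on it, and distinguish $w\in H$ (then $w\in H\setminus(S\cup\{u\})$ and the connectedness of $G[H\setminus(S\cup\{u\})]$ finishes) from $w\in X$ (then the \md{X} property of $R(H)$ together with $|R(H)\setminus(S\cup\{u\})|\geq 3$ yields a surviving neighbor). The only differences are presentational: you dispose of the trivial case $v\in H$ up front and state explicitly that the path segment from $v$ to $w$ avoids $u$, points the paper leaves implicit.
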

\begin{proof}
  Let $v$~be a vertex of~$F-\{u\}$. Because $F$~is connected, there exists a path in~$G-S$, connecting $v$~to~$u$. This path has to use a neighbor~$w$ of~$u$ (possibly, $v=w$). We now distinguish between the two cases~$w\in H\setminus S$ and~$w\notin H\setminus S$.
\begin{figure}[t]
  \centering\hspace{0.25cm}\subfigure[Case~$w\in \hp\setminus S$. Because ${\hp\setminus (S\cup\{u\})}$ contains three vertices, the vertices in ${\hp\setminus (S\cup\{u\})}$ are connected.]{ \hspace{1.5cm} \includegraphics{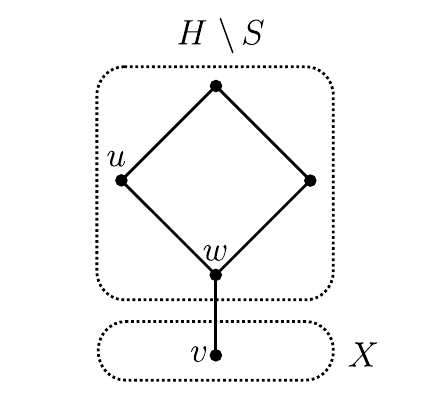}\hfill
    \label{fig:connected-indh}} \hspace{1cm} \subfigure[Case~$w\in X$. Because~$R(H)\setminus S$ is an \md X and~$u$ is adjacent to~$w\in X$, all vertices in~$R(\hp)\setminus S$ are adjacent to~$w$.]{ \hspace{1.25cm} \includegraphics{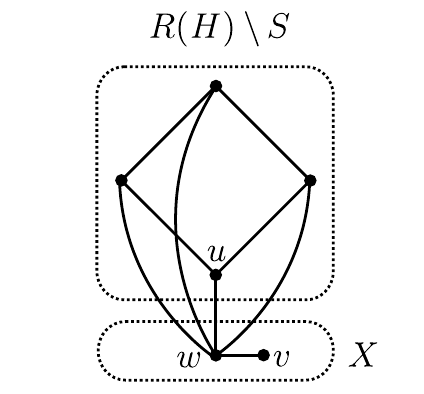}\hspace{1.25cm}
    \label{fig:connected-outdh}}
  \caption{The vertices~$u,v$, and $w$ are named as in the proof of \autoref{connected}. Note that in either case, $v$~is connected to all vertices in~$\hp\setminus(S\cup\{u\})$ even if $u$~is removed. Also note that the vertex $v$~is not necessarily~in~$X$.}
\end{figure}

According to \autoref{working-asu}, $G[\hp\setminus(S\cup\{u\})]$~is connected.  So if $w\in \hp\setminus S$, as shown in \autoref{fig:connected-indh}, then $w$~is connected to every other vertex in~$\hp\setminus(S\cup\{u\})$. That is, $w$ connects~$v$ to the vertices in~$\hp\setminus(S\cup \{u\})$ even when~$u$ is removed.

Because $w$~is in~$G-S$, we have $w\notin S$. That is, if $w\notin \hp\setminus S$, then $w\notin \hp$. Because~$w$ is adjacent to~$u\in R(H)$ and because there are no edges between distinct sets in~$\hv$, we have~$w\in X$, as shown in \autoref{fig:connected-outdh}. Because $u$ is the neighbor of~$w\in X$ and~$u$ is in the \md X~$R(H)$, it follows that all vertices in~$R(\hp)\setminus(S\cup \{u\})$ are neighbors of~$w$ in~$G'-S$. So $w$ connects~$v$ to the vertices in~$\hp\setminus(S\cup\{u\})$ even when~$u$ is removed.
\end{proof}

\begin{lemma}\label{in-hi}
  Under \autoref{working-asu}, let $Z(\hp)$ be an \md X with ${R(H)\subseteq Z(H)\subseteq H}$ and let~$F$ be a \FISG{} in~$G-S$ including~$u$. If a vertex~$v$ of~$F$ is nonadjacent to a vertex~$w\in Z(H)\setminus S$, then~$v\in \hp\setminus S$.
\end{lemma}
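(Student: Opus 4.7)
The plan is to argue by contradiction: I assume $v\notin H\setminus S$ and derive a violation of the $2$-plex property in $G'-S$, where $G':=G-\{u\}$. Since $v\in F\subseteq G-S$ gives $v\notin S$ for free, and since the case $v=u$ is immediate from $u\in H\setminus S$, I may assume $v\ne u$ and that $v$ is either in $X$ or in some $H'\in\hv\setminus\{H\}$. In both remaining subcases I will exhibit at least three vertices of $v$'s connected component of $G'-S$ to which $v$ is nonadjacent.

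The enabling tool is \autoref{connected}: it tells me that in $G'-S$ the vertex $v$, being in $F-\{u\}$, is connected to every vertex of $H\setminus(S\cup\{u\})$. Because $S$ is a solution for $G'$, every connected component of $G'-S$ induces a \pl 2, so the component $P$ of $G'-S$ containing $v$ induces a \pl 2, and by \autoref{connected} we have $H\setminus(S\cup\{u\})\subseteq P$. Using $|R(H)|>k+3$ and $|S|\le k$, I also get $|R(H)\setminus(S\cup\{u\})|\ge 3$, so at least three vertices of $R(H)\subseteq Z(H)$ sit inside $P$.

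If $v\in H'$ with $H'\ne H$, then there are no edges in $G$ between $H$ and $H'$, so $v$ is nonadjacent in $G[P]$ to every vertex of $H\setminus(S\cup\{u\})\subseteq P$; this yields $\ge 3$ nonadjacencies for $v$ inside $P$, contradicting $G[P]$ being a \pl 2. If instead $v\in X$, then the \md X property of $Z(H)$, combined with the nonadjacency of $v$ to $w\in Z(H)\setminus S$, forces $v$ to be nonadjacent to every vertex of $Z(H)$, and in particular to every vertex of $R(H)\setminus(S\cup\{u\})\subseteq Z(H)\cap P$, again a set of at least three vertices of $P$; the \pl 2 property of $P$ is violated once more.

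The main obstacle is recognising that both ``outside $H$'' subcases collapse to the same mechanism: \autoref{connected} pulls a large chunk of $H\setminus(S\cup\{u\})$ into $v$'s component of $G'-S$, while the very reason $v$ lies outside $H$ forces $v$ to be nonadjacent to that entire chunk. The redundancy of $R(H)$, channelled through the \md X $Z(H)$, is precisely what makes the $v\in X$ subcase inherit the same nonadjacency structure as the $v\in H'$ subcase, and the size bound $|R(H)|>k+3$ is exactly what guarantees that the forced nonadjacencies exceed what a $2$-plex can tolerate.
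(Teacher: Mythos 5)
Your proof is correct and essentially the same as the paper's: both hinge on \autoref{connected} pulling the at least three vertices of $R(H)\setminus(S\cup\{u\})$ into $v$'s connected component of $G'-S$, with the \md X property of $Z(H)$ forcing nonadjacency when $v\in X$, contradicting the 2-plex property (equivalently, \autoref{fisg-char}) under \autoref{working-asu}. Your explicit split into $v\in X$ versus $v$ in another set of $\hv$ is just a repackaging of the paper's step showing $v$ is nonadjacent to all of $Z(H)\setminus S$, and your remark that $v=u$ is trivial is a harmless addition.
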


\begin{proof}
\begin{figure}
  \centering
  \includegraphics{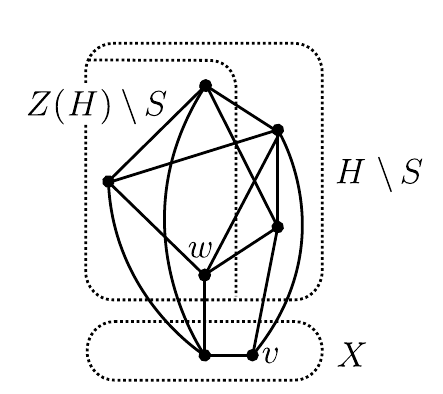}
  \caption{Because $v\in X$~is nonadjacent to the vertex~$w$ of the \md X $Z(H)\setminus S$, the vertex $v$~can not be adjacent to any vertex in~$Z(H)\setminus S$. These are more than three vertices. But $v$ is connected to all vertices in~$\hp\setminus S$, including~$Z(H)\setminus S$.}
  \label{fig:in-hi}
\end{figure}
Assume that a vertex~$v\notin \hp\setminus S$ of~$F$ is nonadjacent to the vertex~${w\in Z(H)\setminus S}$. This situation is shown in \autoref{fig:in-hi}. Because~$v$ is in~$G-S$, we have $v\notin S$ and therefore~${v\notin \hp}$. We first show that~$v$ is nonadjacent to the \md X~$Z(H)\setminus S$.

Assume that~$v$ is adjacent to the \md X~$Z(\hp)\setminus S$. This implies~$v\in X$, because there are no edges between distinct sets in $\hv$ and~$v\notin H$. Because~$w$ is in the \md X~$Z(H)\setminus S$ and because~$v\in X$ is adjacent to $Z(H)\setminus S$, the vertex $v$ must also be adjacent to~$w$. This is by our assumption not the case, so $v$ is nonadjacent to the \md X~${Z(\hp)\setminus S}$. In particular, $v$ is nonadjacent to its subset~$R(\hp)\setminus (S\cup\{u\})$.

According to \autoref{connected}, the vertex $v$~is connected to all vertices in~$R(\hp)\setminus(S\cup\{u\})$ in~$G'-S$.  By \autoref{working-asu}, there are at least three vertices in~$R(\hp)\setminus(S\cup\{u\})$. These are connected but nonadjacent to~$v$ in~$G'-S$. By \autoref{fisg-char}, this implies that there exists a \FISG{} in~$G'-S$, contradicting \autoref{working-asu}.
\end{proof}

\begin{lemma}\label{goodconnected-correct}
  \autoref{rul:goodconnected} is correct.
\end{lemma}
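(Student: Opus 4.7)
The plan is to prove both directions separately. The forward direction, that $(G,k)$ being a yes-instance implies $(G-\{u\},k)$ is, is immediate from heredity of the $2$-plex cluster graph property (\autoref{fisg-char}): if $S$ is a size-$\leq k$ solution for $G$, then $S\setminus\{u\}$ is a size-$\leq k$ solution for $G-\{u\}$, because $(G-\{u\})-(S\setminus\{u\})$ is an induced subgraph of the $2$-plex cluster graph $G-S$.

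For the backward direction, let $S$ be a solution of size at most $k$ for $G-\{u\}$; I claim $S$ is also a solution for $G$. Suppose not: then $G-S$ is not a $2$-plex cluster graph whereas $(G-\{u\})-S$ is, so by \autoref{FISG-has-v}, $G-S$ contains a FISG $F$ that includes $u$, placing us in \autoref{working-asu}. My goal is to show that every vertex of $F$ lies in $H\setminus S$. Once this is established, $F$ is an induced subgraph of $G[H\setminus S]$, which is a $2$-plex (as an induced subgraph of the connected component $G[H]$ of $G-X$, itself a $2$-plex, using the fact that $2$-plexes are closed under induced subgraphs), and therefore contains no FISG as induced subgraph --- a contradiction. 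To place the vertices of $F$, note that $u\in R(H)\setminus S\subseteq H\setminus S$ trivially. For $v\in F\setminus\{u\}$, I apply \autoref{in-hi}: if $v$ is nonadjacent to $u\in Z(H)\setminus S$, the lemma puts $v$ into $H\setminus S$, and in fact into $Z(H)\setminus S$ since $v$ is nonadjacent to $u\in R(H)$. For $v$ adjacent to $u$, I bootstrap: I use a non-neighbor of $v$ among the vertices already placed into $Z(H)\setminus S$ as the witness for a second application of \autoref{in-hi}.

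\textbf{Main obstacle.} This bootstrap fails precisely when $u$ is adjacent to every other vertex of $F$, which happens exactly when $F$ is $K_{1,3}$ with $u$ at the center or the paw with $u$ as the degree-$3$ hub: no vertex of $F\setminus\{u\}$ is nonadjacent to $u$, so the initial step places nothing, and indeed we cannot hope for $F\subseteq H\setminus S$ in the $K_{1,3}$-case, since the three non-$u$ vertices are pairwise nonadjacent and would immediately violate the $2$-plex property of $G[H]$. For these cases I would instead derive the contradiction by constructing a FISG directly in $(G-\{u\})-S$, contradicting that $S$ solves $G-\{u\}$. The construction exploits $|R(H)\setminus(S\cup\{u\})|\geq 3$: because $R(H)$ is an $X$-module, each of these three vertices shares $u$'s neighborhood in $X$; because $G[H]$ is a $2$-plex, each vertex of $F\cap H$ has at most one non-neighbor in $H$. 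A short pigeonhole then produces an $r\in R(H)\setminus(S\cup\{u\})$ adjacent to every vertex of $F\setminus\{u\}$, so that replacing $u$ by $r$ in $F$ transfers the forbidden-subgraph structure into $(G-\{u\})-S$, yielding the desired contradiction.
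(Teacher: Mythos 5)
Your forward direction is fine, and your toolkit (\autoref{working-asu}, \autoref{in-hi}, the $X$-module property from \autoref{redundant-def}, the bound $|R(\hp)\setminus(S\cup\{u\})|\geq 3$) is the right one, but the backward direction has a genuine gap: the case distinction is not exhaustive, because the goal ``every vertex of $F$ lies in $\hp\setminus S$'' is both stronger than what your bootstrap delivers and false in general. First, \autoref{FISG-has-v} only provides \emph{some} \FISG{} containing $u$, not one of the three four-vertex graphs of \autoref{forbidden}, so identifying the problematic cases with ``$K_{1,3}$ with $u$ at the center or the paw with $u$ as hub'' is unjustified as written. More importantly, even for four-vertex \FISG{}s there are configurations covered by neither of your two arguments: let $F$ be the paw with $u$ the pendant vertex (similarly $K_{1,3}$ with $u$ a leaf, or $P_4$ with $u$ an interior vertex). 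Here $u$ is not adjacent to all of $F\setminus\{u\}$, so your replacement argument is not invoked; the first bootstrap round places the non-neighbors of $u$ into $Z(\hp)\setminus S$, but the remaining vertex (the hub) is adjacent to $u$ and to every placed vertex and never acquires a witness for \autoref{in-hi}. Note also that vertices placed in later rounds are only shown to lie in $\hp\setminus S$, not in $Z(\hp)\setminus S$ (membership in $Z(\hp)$ via \autoref{redundant-def} is guaranteed only for vertices nonadjacent to some vertex of $R(\hp)$), so they cannot be reused as witnesses. Indeed the hub may genuinely lie in $X\setminus S$, so $F\subseteq \hp\setminus S$ is simply false in such configurations, not merely unproven.

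The repair is to aim for less, which is how the paper argues: $F$ contains a vertex $v$ that is connected but nonadjacent to two vertices $w,x$ of $F$, and it suffices to locate the triple $\{v,w,x\}$ relative to $u$. If $u\in\{v,w,x\}$, then one or two applications of \autoref{in-hi} (the second enabled, exactly as in your first round, by the fact that a vertex of $\hp$ nonadjacent to $u\in R(\hp)$ lies in $Z(\hp)$) put the other triple vertices into $\hp\setminus S$, and then one vertex of the triple has two non-neighbors inside $\hp$, contradicting that $G[\hp]$ is a 2-plex; this disposes of all the ``intermediate'' configurations above. If $u\notin\{v,w,x\}$ (your $K_{1,3}$-center/paw-hub situation), the paper uses \autoref{connected} to show that $v,w,x$ remain connected in $(G-\{u\})-S$, which by \autoref{fisg-char} yields a \FISG{} there, a contradiction. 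Your replacement-by-$r$ idea could substitute for \autoref{connected} in that case, but the ``short pigeonhole'' needs more than stated: with only three guaranteed candidates in $R(\hp)\setminus(S\cup\{u\})$ you must additionally use that not all of $F\setminus\{u\}$ can lie in $\hp$ (otherwise $G[\hp]$ already fails to be a 2-plex) and that a vertex of $F\setminus\{u\}$ nonadjacent to a neighbor of $u$ in $X$ cannot itself belong to the $X$-module $R(\hp)$, in order to rule out that all three candidates are excluded by non-adjacency or by coinciding with vertices of $F$.
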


\begin{proof}
  Assume that \autoref{rul:goodconnected} chooses to remove a vertex~$u$ from~$G$. Let $G'$ denote the graph $G-\{u\}$. We have to show that~$(G,k)$ is a yes-instance if and only if~$(G',k)$ is a yes-instance. If $(G,k)$ is a yes-instance, then there exists a solution~$S$ with~$|S|\leq k$ such that $G-S$~is a 2-plex cluster graph. Then, also $G'-S$~is a 2-plex cluster graph and $(G',k)$~is a yes-instance.

  If $(G',k)$~is a yes-instance, then there exists a solution~$S$ with~$|S|\leq k$ such that $G'-S$~is a 2-plex cluster graph, implying that \autoref{working-asu} is true. Assume that $G-S$~contains a \FISG{}. By \autoref{FISG-has-v}, there exists a \FISG{}~$F$ in~$G-S$ containing the vertex~$u$.  Because~$F$ is a \FISG{}, it contains a vertex~$v$ that is connected but nonadjacent to two vertices $w,x$ in~$F$.

  If~$u\notin\{v,w,x\}$, then \autoref{connected} shows that the vertices~$v,w,x$ are connected to all vertices in~$\hp\setminus(S\cup\{u\})$ in~$G'-S$. Thus, the vertices~$v,w,$ and~$x$ would exist in~$G'-S$ and would be connected. That contradicts $G'-S$~being a 2-plex cluster graph, because~$v$ is nonadjacent but connected to the vertices~$w$ and~$x$. Thus, $u$~must be one of~$v,w$ or~$x$.

  First, assume that $u=v$. That is, the vertex $u\in R(\hp)$ is nonadjacent to the vertices~$w$ and~$x$. From \autoref{in-hi}, we can conclude that $w,x\in \hp\setminus S$. Because also~$u\in \hp\setminus S$, this contradicts the graph $G[\hp\setminus S]$~being a 2-plex. So~$u$ must either be~$w$ or~$x$.

  Without loss of generality, assume that~$u=w$. That is, the vertex $u\in R(H)$ is nonadjacent to~$v$. By \autoref{in-hi}, we have $v\in\hp\setminus S$. By \autoref{redundant-def}, there exists an \md X $Z(H)$ with~$R(H)\subseteq Z(H)\subseteq H$ and $v\in Z(H)$, because the vertex~$v\in H\setminus S$ is nonadjacent to the vertex~$u\in R(H)$. But then, because the vertex~$v\in Z(H)$ is nonadjacent to~$x$, the vertex $x$~must also be in~$\hp\setminus S$ by \autoref{in-hi}. This again contradicts~$G[\hp\setminus S]$ being a 2-plex. We conclude that $G-S$~must be a 2-plex cluster graph. Thus, $(G',k)$~is a yes-instance.
\end{proof}

\subsection{Constructing Redundant Sets}
\label{ddot-mu}
In this section, we show how to efficiently find redundant sets as defined in \autoref{redundant-def}. Our goal is, given a solution~$X$ and the vertex set~$H\in\hv$ of a connected component in~$G-X$, to construct a redundant subset~$R(H)\subseteq H$ so that the size of~$H\setminus R(H)$ is bounded by a function only depending on the parameter~$k$. Then, we can apply \autoref{rul:goodconnected} to~$R(H)$ to bound the overall size of~$H$.

To this end, we employ a peripheral set~$M$. Using \autoref{size-M}, we can bound the size of~$M$ by $3k|X|$. Thus, for each~$H\in\hv$, we only need to bound the size of the set~$H\setminus M$. \autoref{per}(\ref{per2}) for peripheral sets guarantees that if a vertex~$v\in X$ is adjacent to~$H\setminus M$, then there is at most one vertex in~$H\setminus M$ that is nonadjacent to the vertex~$v$. Thus, the number of vertices in~$H\setminus M$ that are nonadjacent to a vertex in~$\nx{H\setminus M}$ cannot exceed~$|X|$. The size of~$X$ is in turn bounded by~$4k$ in \autoref{X-boundary}. It follows that we only have to bound the number of vertices in~$H\setminus M$ that are adjacent to \emph{all} vertices in~$\nx{H\setminus M}$. We show that we can obtain a redundant set from such vertices by employing the following~algorithm:

\begin{proc}\label{construct-D}
  Given a set~$M$ that is peripheral with respect to a solution~$X$, for each~$\hp\in\hv$, first find all vertices belonging to~$\hp\cap M$ and $\nx{\hp\setminus M}$. Then, construct the sets
  \begin{align*}
    A(\hp)&:=\{u\in \hp\mid\exists w\in \hp\cap M:u\text{ is nonadjacent to }w\}\text{,}\\
    B(\hp)&:=\{u\in \hp\mid\exists w\in \nx{\hp\setminus M}:u\text{ is nonadjacent to }w\}\text{, and}\\
    C(\hp)&:=\{u\in \hp\mid\exists w\in B(\hp):u\text{ is nonadjacent to }w\}\text{.}
  \end{align*}
  Return~$R(\hp):=\hp\setminus\bar R(\hp)$, where $\bar R(\hp):=A(\hp)\cup B(\hp)\cup C(\hp)\cup(\hp\cap M)$.
\end{proc}

\begin{lemma}\label{construct-D-correct}
  Given a set~$M$ that is peripheral with respect to a solution~$X$, for~$\hp\in\hv$, let~$R(\hp)$ be the set constructed by \autoref{construct-D}. The set~$R(H)$ is redundant.
\end{lemma}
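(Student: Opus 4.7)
The plan is to verify the two requirements of Definition~\ref{redundant-def}: first that $R(H)$ is itself an \md{X}, and second that there exists an \md{X} $Z(H)$ with $R(H)\subseteq Z(H)\subseteq H$ containing every vertex of~$H$ that is nonadjacent to some vertex of~$R(H)$.

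For the first requirement, I would show that every $r\in R(H)$ satisfies $\nx{r}=\nx{H\setminus M}$, so that the vertices of $R(H)$ share a common $X$-neighborhood. The inclusion ``$\subseteq$'' is immediate because $R(H)\subseteq H\setminus M$. For ``$\supseteq$'', note that $r\notin B(H)$ by construction, so $r$ is adjacent to every vertex of $\nx{H\setminus M}$. With $R(H)$ thus established as an \md{X}, the natural candidate for the witness module is
\[
Z(H):=R(H)\cup\{u\in H\mid u\text{ is nonadjacent to some }r\in R(H)\}.
\]
This set satisfies $R(H)\subseteq Z(H)\subseteq H$ and by construction contains every vertex of~$H$ nonadjacent to a vertex of~$R(H)$; the only remaining obligation is to show that $Z(H)$ itself is an \md{X}.

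The main obstacle is therefore to prove that every $u\in Z(H)\setminus R(H)$ satisfies $\nx{u}=\nx{H\setminus M}$. Given such a $u$, pick $r\in R(H)$ with $u$ nonadjacent to~$r$. First I would rule out $u\in H\cap M$: since $r\notin A(H)$, the vertex $r$ is adjacent to every vertex of $H\cap M$, so $u\in H\cap M$ would contradict the nonadjacency of $u$ and~$r$. Hence $u\in H\setminus M$, which yields $\nx{u}\subseteq\nx{H\setminus M}$ automatically. For the reverse inclusion, I would argue by contradiction: suppose some $v\in\nx{H\setminus M}$ is nonadjacent to~$u$. Then $u\in B(H)$ by the definition of~$B(H)$. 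But $r\notin C(H)$ implies that $r$ is adjacent to every vertex of $B(H)$, again contradicting the nonadjacency of $u$ and~$r$. Thus $\nx{u}=\nx{H\setminus M}$, so $Z(H)$ is an \md{X}, completing the verification that $R(H)$ is redundant.
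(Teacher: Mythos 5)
Your proposal is correct and follows essentially the same route as the paper: both proofs exploit exactly the disjointness of~$R(H)$ from $A(H)$, $B(H)$, $C(H)$, and~$M$ to deduce that the relevant vertices have $X$-neighborhood equal to~$\nx{H\setminus M}$. The only cosmetic difference is the choice of witness: the paper takes~$Z(H)$ to be all vertices of~$H\setminus M$ with $X$-neighborhood exactly~$\nx{H\setminus M}$ (so the \md X property is automatic and the work lies in showing containment), whereas you take the minimal candidate~$R(H)$ plus the nonadjacent vertices (so containment is automatic and the work lies in verifying the \md X property) --- the underlying adjacency arguments are identical.
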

 
\begin{proof}
  According to \autoref{redundant-def}, we have to show that there exists an \md X $Z(H)$ with $R(H)\subseteq Z(H)\subseteq H$ that contains all vertices in~$H$ that are nonadjacent to a vertex in~$R(H)$. Because~$G[H]$ is a \pl2, we could choose~$Z(H):=R(H)$. But with \pl ses in mind, we present a proof that does not rely on the fact that~$G[H]$~is~a~\pl2.

Consider the set $Z(H):=\{u\in H\setminus M\mid \nx{u} = \nx{\hp\setminus M}\}$. For any two vertices $u,v\in Z(H)$, we have that $\nx{u}=\nx{H\setminus M}=\nx v$. Thus, the set~$Z(H)\subseteq H$ is an \md X.  To show that a vertex~$u$ is in~$Z(H)$, it is sufficient to show~$u\in H\setminus M$ and $\nx{\hp\setminus M}\subseteq \nx{u}$. The opposite inclusion $\nx{H\setminus M}\supseteq \nx{u}$ follows directly from~$u\in H\setminus M$.

  We first show that~$R(H)\subseteq Z(H)$. Because~$R(\hp)\cap M=\emptyset$, every vertex in~${R(\hp)}$ is in~$H\setminus M$. Because~$R(\hp)\cap B(\hp)=\emptyset$, for a vertex~$w\in \nx{\hp\setminus M}$, each vertex~${u\in R(\hp)}$ is adjacent to~$w$. Otherwise, $u$~would be in~$B(\hp)$. From this, we can conclude that~${\nx{H\setminus M}\subseteq \nx{u}}$. This implies~$u\in Z(H)$.
 
  Now assume that there exists a vertex $u\in R(\hp)$~and a vertex~$w\in \hp$ such that~$u$ and~$w$ are nonadjacent. From $R(\hp)\cap A(\hp)=\emptyset$ follows that $w\notin M$. Otherwise, $u$~would be in~$A(\hp)$. Because $R(\hp)\cap C(\hp)=\emptyset$, for a vertex~$v\in \nx{\hp\setminus M}$, the vertex $w$~is adjacent to~$v$. Otherwise, $w\in B(\hp)$~and therefore~$u\in C(\hp)$. Thus, we have $\nx{\hp\setminus M}\subseteq \nx{w}$ and~$w\in Z(H)$.
\end{proof}

\begin{lemma}\label{construct-D-time}
  Given a set~$M$ that is peripheral with respect to a solution~$X$, \autoref{construct-D} can be carried out in $O(n^2)$~time.
\end{lemma}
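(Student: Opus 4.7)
The plan is to first build an adjacency matrix for $G$ in $O(n^2)$ time so that any (non)adjacency query can be answered in $O(1)$, and to build characteristic arrays for $X$ and $M$ so that the tests ``$v\in X$'' and ``$v\in M$'' each take $O(1)$ time. Using breadth-first search on $G-X$, one computes the collection $\hv$ of connected components in $O(n+m)=O(n^2)$ time, simultaneously storing with each vertex $u\notin X$ a pointer to the set $\hp\in\hv$ containing it (the same table $T$ as used in the proof of \autoref{mark-neighbors-time}).

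For each component $\hp\in\hv$ I would then proceed as follows. First compute $\hp\cap M$ by a single scan of $\hp$, taking $O(|\hp|)$ time. Next compute $\nx{\hp\setminus M}$ by scanning every pair $(u,v)\in(\hp\setminus M)\times X$ and querying the adjacency matrix; this costs $O(|\hp|\cdot|X|)$ time. With these two sets in hand, the construction of $A(\hp)$, $B(\hp)$ and $C(\hp)$ reduces, for each $u\in\hp$, to checking at most $|\hp\cap M|$, $|\nx{\hp\setminus M}|$ and $|B(\hp)|$ candidate witnesses $w$, using one $O(1)$ nonadjacency query per pair. Since $B(\hp)\subseteq\hp$ and $\hp\cap M\subseteq \hp$, the cost for a single component is bounded by $O(|\hp|^2+|\hp|\cdot|X|)$.

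The step that needs care is amortization over all components. Since the sets in $\hv$ are pairwise disjoint subsets of $V\setminus X$, one has $\sum_{\hp\in\hv}|\hp|\leq n$, which gives
\[
\sum_{\hp\in\hv}|\hp|^2 \;\leq\; n\cdot \max_{\hp\in\hv}|\hp| \;\leq\; n^2 \qquad\text{and}\qquad \sum_{\hp\in\hv}|\hp|\cdot|X| \;\leq\; n\cdot|X| \;\leq\; n^2.
\]
Combined with the $O(n^2)$ preprocessing, this yields the claimed $O(n^2)$ total running time. The only subtlety is recognising that, although a single component can be of size $\Theta(n)$, the quadratic-in-$|\hp|$ work summed across all components still telescopes to $O(n^2)$ via the bound above.
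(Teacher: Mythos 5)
Your proof is correct and follows essentially the same route as the paper: compute $\hv$ once, then for each component build $\hp\cap M$, $\nx{\hp\setminus M}$, $A(\hp)$, $B(\hp)$, $C(\hp)$ by pairwise (non)adjacency tests, and amortize to $O(n^2)$. The only differences are cosmetic --- you make the constant-time adjacency tests explicit via an adjacency matrix and amortize per component with a sum-of-squares bound, whereas the paper charges $O(n)$ work to each of the at most $n$ vertices scanned; both accountings give the same bound.
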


\begin{proof}
  Observe that we can construct the set~$\hv$ in $O(n+m)$~time. During the construction of~$\hv$, we use a table~$T$ to store for each vertex~$u$ the set~$\hp\in\hv$ with~$u\in \hp$.  We now scan each~$\hp\in\hv$ in four passes, classifying each vertex~$u\in H$ as follows:

  The first pass constructs the sets~$\hp\cap M$ and~$\nx{\hp\setminus M}$. If $u\in M$, we memorize the vertex~$u$ to belong to~$\hp\cap M$. If $u\notin M$, we memorize its neighbors in~$X$ to belong to~$\nx{\hp\setminus M}$. Finding~$u$'s neighbors in~$X$ can take~$O(|X|)$ time.

  The second pass constructs the sets~$A(\hp)$ and~$B(\hp)$ with the results from the first pass as follows: if the vertex~$u$ is nonadjacent to a vertex in~$\hp\cap M$, then add~$u$ to~$A(\hp)$. This works in $O(|\hp\cap M|)$~time.
If the vertex~$u$ is nonadjacent to a vertex in~$\nx{\hp\setminus M}$, which can be checked in $O(|X|)$~time, then add $u$ to~$B(\hp)$. 

  The third pass is similar to the second pass and constructs $C(\hp)$ from~$B(\hp)$ in $O(|B(\hp)|)$~time. In a final pass, we add all vertices~$u$ that are not in~$A(\hp),B(\hp),C(\hp)$ or~$M$ to~$R(\hp)$. This can be done in constant time for each vertex~$u$.

  Finally, we encounter at most~$n$ vertices scanning through each~$\hp\in\hv$, yielding a total running time of~$O(n^2)$.
\end{proof}

\begin{lemma}\label{goodconnected-time}
  Given a set~$M$ that is peripheral with respect to a solution~$X$, we can exhaustively apply \autoref{rul:goodconnected} in $O(n^2)$~time.
\end{lemma}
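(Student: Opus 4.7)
The plan is to combine \autoref{construct-D} with bulk deletions. First, for each $\hp \in \hv$, I would compute the redundant subset $R(\hp) \subseteq \hp$ using \autoref{construct-D}; \autoref{construct-D-correct} guarantees that $R(\hp)$ is redundant, and \autoref{construct-D-time} bounds the total time for this step by $O(n^2)$. Then, for each $\hp$ in turn, while $|R(\hp)| > k+3$, I would repeatedly pick an arbitrary vertex $u \in R(\hp)$ and delete it both from $G$ and from $R(\hp)$. The deletion phase removes at most $\sum_{\hp \in \hv}|R(\hp)| \leq n$ vertices in total, and each single deletion costs $O(n)$ time for updating adjacency data, contributing $O(n^2)$ overall.

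The one step requiring care is justifying that each successive deletion is still a legal application of \autoref{rul:goodconnected} against the updated $R(\hp)$, so that \autoref{construct-D} does not have to be rerun after every single removal. For this I would verify that after deleting $u \in R(\hp)$, the trimmed set $R(\hp)\setminus\{u\}$ remains a redundant subset of the trimmed component $\hp\setminus\{u\}$. Every subset of an \md{X} is itself an \md{X}, so $R(\hp)\setminus\{u\}$ is an \md{X}; if $Z(\hp)$ witnesses the redundancy of $R(\hp)$ according to \autoref{redundant-def}, then $Z(\hp)\setminus\{u\}$ witnesses redundancy of $R(\hp)\setminus\{u\}$, because $R(\hp)\setminus\{u\}\subseteq Z(\hp)\setminus\{u\}\subseteq \hp\setminus\{u\}$, and every vertex of $\hp\setminus\{u\}$ nonadjacent to some vertex of $R(\hp)\setminus\{u\}\subseteq R(\hp)$ already lies in~$Z(\hp)$. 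Moreover, since $|R(\hp)| > k+3 \geq 3$ throughout the deletion loop, $G[\hp\setminus\{u\}]$ stays a connected \pl2, so $\hp\setminus\{u\}$ remains a member of $\hv$ in the reduced graph, and the bookkeeping of $\hv$ needs no refresh either.

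The main obstacle is precisely this invariance argument. Without it, each deletion might seem to demand a fresh $O(n^2)$-time rerun of \autoref{construct-D}, which would push the total running time to $\Theta(n^3)$. With the invariance in hand, a single run of \autoref{construct-D} per component suffices, and the subsequent deletions are charged $O(n)$ per removed vertex, giving the claimed $O(n^2)$ bound.
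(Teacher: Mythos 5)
Your proposal matches the paper's proof: one run of \autoref{construct-D} per component (justified by \autoref{construct-D-correct} and \autoref{construct-D-time}), followed by the key observation that $R(\hp)\setminus\{u\}$ stays redundant after each removal, so the rule can be applied exhaustively without reconstructing redundant sets. The only difference is cosmetic — the paper deletes the excess vertices of each $R(\hp)$ as one bulk removal in $O(n+m)$ time, while you charge $O(n)$ per deleted vertex — and your explicit verification of the invariance (subsets of \md{X}s are \md{X}s, and the witness $Z(\hp)$ survives the deletion) is a correct elaboration of what the paper merely asserts.
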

\begin{proof}
  We first, for all~$\hp\in\hv$, use \autoref{construct-D} on the sets~$X$ and~$M$ to construct the sets~$R(\hp)$ in $O(n^2)$~time (\autoref{construct-D-time}). According to \autoref{construct-D-correct}, these sets are redundant. Thus, \autoref{rul:goodconnected} can be applied.

  Observe that after \autoref{rul:goodconnected} removes a vertex~$u\in R(H)$ from~$G$, the set $R(H)\setminus\{u\}$ is still redundant. Thus, we can remove a whole subset of~$R(H)$ from~$G$ without constructing new redundant sets between vertex deletions.

  For each~$\hp\in\hv$, we can count the number of vertices in $R(\hp)$ in $O(|R(\hp)|$)~time. Removing a set of vertices works in $O(n+m)$~time.
\end{proof}

\noindent Given an instance~$(G,k)$ and a solution~$X$ for~$G$, we can now bound the sizes of the connected components in~$G-X$ by a function that only depends on the parameter~$k$.

\begin{lemma}\label{h1-vertices}
  Let the set~$M$ be peripheral with respect to a solution~$X$.  For a set ${\hp\in\hv}$, let~$R(H)$ be the redundant subset constructed by \autoref{construct-D}. After exhaustively applying \autoref{rul:goodconnected} using~$R(H)$, the number of vertices in~$\hp\setminus M$ is at most $|\hp\cap M|+2|\nx{\hp\setminus M}|+k+3$.
\end{lemma}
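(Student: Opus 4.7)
The plan is to use \autoref{construct-D} to partition $H\setminus M$ into the two disjoint pieces $R(H)$ and $\bar R(H)\setminus M=(A(H)\cup B(H)\cup C(H))\setminus M$; disjointness follows from $R(H)\cap\bar R(H)=\emptyset$ by construction and, since $H\cap M\subseteq\bar R(H)$, also $R(H)\cap M=\emptyset$. Once \autoref{rul:goodconnected} has been applied exhaustively, the first piece satisfies $|R(H)|\le k+3$, so it will suffice to show $|(A(H)\cup B(H)\cup C(H))\setminus M|\le|H\cap M|+2|\nx{H\setminus M}|$.

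The contribution of $B(H)$ will be handled by the assumption that $M$ is peripheral. By \autoref{per}(\ref{per2}), every $v\in\nx{H\setminus M}$ is nonadjacent to at most one vertex of $H\setminus M$; since each vertex of $B(H)\setminus M$ is witnessed by such a nonadjacency, this gives $|B(H)\setminus M|\le|\nx{H\setminus M}|$.

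For $A(H)\cup C(H)$, I will exploit that $G[H]$ is a $2$-plex, so every vertex of $H$ has at most one nonneighbour in $H$. For $u\in(A(H)\cup C(H))\setminus M$ let $w(u)$ denote this unique nonneighbour in $H$; by the definitions of $A(H)$ and $C(H)$, we have $w(u)\in(H\cap M)\cup B(H)$. The map $u\mapsto w(u)$ is injective, because an equality $w(u)=w(u')$ would give $w(u)$ two distinct nonneighbours $u\ne u'$ in $H$, again contradicting the $2$-plex property. Hence $|(A(H)\cup C(H))\setminus M|\le|(H\cap M)\cup B(H)|=|H\cap M|+|B(H)\setminus M|$, where the last equality uses $B(H)\cap M\subseteq H\cap M$. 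Summing with the bound on $|B(H)\setminus M|$ gives $|(A(H)\cup B(H)\cup C(H))\setminus M|\le|H\cap M|+2|\nx{H\setminus M}|$, and combined with $|R(H)|\le k+3$ yields the claim.

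The main obstacle will be the interaction between $A(H)$ and $C(H)$: the naive estimate $|C(H)|\le|B(H)|$ would reintroduce a $|B(H)\cap M|$ contribution that cannot be controlled by $|\nx{H\setminus M}|$. Grouping $A(H)\cup C(H)$ and routing the injection into the common target $(H\cap M)\cup B(H)$ is what absorbs the $M$-part of $B(H)$ cleanly into the $|H\cap M|$ term and preserves the factor $2$ in front of $|\nx{H\setminus M}|$.
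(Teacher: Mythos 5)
Your proof is correct and follows the paper's overall strategy: split $\hp\setminus M$ into $R(\hp)$ (at most $k+3$ vertices after exhaustive application of \autoref{rul:goodconnected}) and $(A(\hp)\cup B(\hp)\cup C(\hp))\setminus M$, bound the $B$-part via \autoref{per}(\ref{per2}) and the rest via the \pl2 property of $G[\hp]$. Where you genuinely diverge is the counting step, and your version is the more careful one. The paper bounds the three sets separately, asserting $|A(\hp)|\leq|\hp\cap M|$ and $|C(\hp)|\leq|B(\hp)|\leq|\nx{\hp\setminus M}|$; since $B(\hp)$ is defined over all of $\hp$, the peripheral property only justifies $|B(\hp)\setminus M|\leq|\nx{\hp\setminus M}|$, so the paper's chain is precisely the ``naive estimate'' you flag, and it does not address the possible $B(\hp)\cap M$ contribution. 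Its bound nevertheless survives, because a vertex of $C(\hp)\setminus M$ whose witness lies in $B(\hp)\cap M\subseteq\hp\cap M$ already belongs to $A(\hp)$ and is therefore absorbed into the $|\hp\cap M|$ term. Your single injection $u\mapsto w(u)$ from $(A(\hp)\cup C(\hp))\setminus M$ into $(\hp\cap M)\cup B(\hp)$ makes exactly this absorption explicit: it buys a self-contained argument with no implicit case to supply, at the cost of a slightly longer setup, and both routes land on the same bound $|\hp\cap M|+2|\nx{\hp\setminus M}|+k+3$.
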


\begin{proof}
  To prove the above lemma, we study the sets constructed in \autoref{construct-D}. By construction of~$R(\hp)$, we have~$R(\hp)=\hp\setminus\bar R(\hp)$. Observe that because~${\bar R(\hp)\subseteq \hp}$, we also have ${\hp\setminus R(\hp)=\bar R(\hp)}$. Because $G[\hp]$~is a 2-plex, there exists at most one vertex~${u\in \hp}$ for every vertex~$w\in \hp\cap M$ such that $u$~and~$w$ are nonadjacent. Thus, we have ${|A(\hp)|\leq|\hp\cap M|}$. Because~$M$ is peripheral, we can conclude from \autoref{per}(\ref{per2}) that for each vertex~$w\in \nx{\hp\setminus M}$, there is at most one vertex~$u\in \hp\setminus M$ such that~$u$ and~$w$ are nonadjacent. If~$\nx{\hp\setminus M}=\emptyset$, then~$B(\hp)=\emptyset$.  Thus, we have $|B(\hp)|\leq |\nx{\hp\setminus M}|$. Now, again because~$G[\hp]$ is a \pl 2, there exists at most one vertex~$u\in \hp$ for every vertex~$w\in B(\hp)$ such that~$u$ and~$w$ are nonadjacent.  Thus, we have $|C(\hp)|\leq |B(\hp)| \leq |\nx{\hp\setminus M}|$.  This shows that the number of vertices in~$\hp\setminus (R(\hp)\cup M)$ cannot exceed $|\hp\cap M|+2|\nx{\hp\setminus M}|$. To get the total number of vertices in~$\hp\setminus M$, we must add~$|R(H)|$. \autoref{rul:goodconnected} bounds~$|R(\hp)|$ to~$k+3$.
\end{proof}

\subsection{Data Reduction Based on Separators}
\label{weak}
In the previous section, we have, given a solution~$X$, bounded the sizes of the connected components in~$G-X$. Given a peripheral set~$M$, we now present an additional data reduction rule to further reduce the sizes of the connected components induced by vertex sets from the collection $\hneg := \{H\in\hv\mid H\setminus M\text{ is nonadjacent to }X\}$. The vertices in a set~$H\in\hneg$ are \emph{separated} from the vertices in the solution~$X$ by the edges between~$M$ and~$X$, as shown in \autoref{fig:h0}. \autoref{fig:h1} shows an example for a vertex set that is \emph{not} in $\hneg$.  The following observation makes clear why an additional data reduction rule for sets in $\hneg$~is~necessary.

According to \autoref{size-M}, if~$k$ is our parameter, we can employ \autoref{high-occurence-fast} to obtain a peripheral set~$M$ containing at most~$3k|X|$ vertices. By \autoref{sets-in-hm}, exhaustively applying \autoref{isolated} gives us a bound of $|M|$ on the number of sets in $\hneg$. Since we have $|M|\leq3k|X|$, if we bound the size of each set in~$\hneg$ by a function linear in~$k$, then the total number of vertices in sets in $\hneg$ is~$O(|X|k^2)$. To conclude an~$O(|X|k)$-vertex problem kernel, we have to provide a data reduction rule additionally to \autoref{rul:goodconnected}.

For each connected component in~$G-X$ that is induced by a set~$H\in\hneg$, we now bound~$|\hp|$ by a function linear in~$|\hp\cap M|$. Thus, we effectively bound the total number of vertices in sets in $\hneg$ by~$O(|M|)$.  Observe that since~$X$ is a solution, every \FISG{} that contains a vertex from a set~${\hp\in\hneg}$ must also contain a vertex from~$X$. Because $\hp\setminus M$ is nonadjacent to~$X$, the \FISG{} $F$~must also contain a vertex from~$\hp\cap M$. The following data reduction rule is based on the idea that if~$|H\setminus M|$ is too large and contains vertices of \FISG{}s, then we can find a small solution containing the vertices in~$H\cap M$.

\begin{rul}\label{ruecken}
  Let~$X$ be a solution and let~$H\in\hv$. Given a vertex set~$M$
  such that~$H\setminus M$ is nonadjacent
  to~$X$, if $|\hp\setminus M|>|\hp\cap M|+1$, then choose
  a vertex from~$\hp\setminus M$ and remove it from~$G$.
\end{rul}

\noindent To prove the correctness of this data reduction rule, we need a series of observations. To this end, we use the following definition:

\begin{definition}
  For two vertex sets~$U$ and~$W$, we introduce the set~$E(U,W)$ of edges between~$U$ and~$W$. That is, $E(U,W)=\{\{u,w\}\mid u\in U\text{ and }w\in W\text{ are adjacent in }G\}$.  We say that a solution \emph{destroys} an edge~$e$, if the solution contains a vertex incident to~$e$.
\end{definition}

\noindent For a solution~$X$ and the vertex set~$H\in\hv$ of a connected component in~$G-X$, the edges in~$E(H,X)$ \emph{separate} the vertices in~$H$ from the vertices in~$X$. This is shown in \autoref{ruecken-proof}. If a solution~$S$ destroys all edges in~$E(H,X)$, then $G[H\setminus S]$ is an isolated~2-plex.

\begin{lemma}\label{ruecken-lemma}
  Let~$S$ and~$X$ be solutions. Assume that there is a vertex set~$M$
  and a set~$\hp\in\hv$ such that~$H\setminus M$ is nonadjacent
  to~$X$. If~$S$ does not destroy all edges in~$E(\hp,X)$, then it
  contains~$|\hp\setminus M|-1$ vertices from~$\hp\setminus M$.
\end{lemma}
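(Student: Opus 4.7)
The plan is to argue by contradiction: assume that two distinct vertices of $H \setminus M$ both avoid $S$, and derive a contradiction from $G - S$ being a 2-plex cluster graph. By hypothesis there is an edge $\{h,x\} \in E(H,X)$ with $h \notin S$ and $x \notin S$; since $H \setminus M$ is nonadjacent to $X$, the endpoint $h$ necessarily lies in $H \cap M$. Suppose then, for contradiction, that distinct vertices $v, w \in (H \setminus M) \setminus S$ exist.

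The heart of the proof is to show that $v$, $w$, and $x$ all lie in a common connected component $C$ of $G - S$. Since $X$ is a solution and $H$ induces a component of $G - X$, the graph $G[H]$ is a 2-plex, so every vertex of $H$ has at most one nonneighbor in $H$. In particular $h$ is adjacent to at least one of $v, w$; without loss of generality to $v$. Then $v$-$h$-$x$ is a path in $G - S$, so $v \in C$. For $w$ I would split cases: if $w$ is adjacent to $h$, the path $w$-$h$-$x$ puts $w \in C$; otherwise $w$ is the unique nonneighbor of $h$ in $H$, and applying the 2-plex property of $G[H]$ a second time, now to $w$---whose only permitted nonneighbor in $H$ is $h$---forces $w$ adjacent to $v$, yielding the path $w$-$v$-$h$-$x$, and again $w \in C$.

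Finally, since $G - S$ is a 2-plex cluster graph, $C$ is itself a 2-plex, and $|C| \geq 3$ because $v, w, x \in C$ are distinct. But $v, w \in H \setminus M$ are nonadjacent to $X$, and in particular both nonadjacent to $x$, so $x$ has at least two nonneighbors in $C$, contradicting the 2-plex property of $C$. This contradiction shows that $(H \setminus M) \setminus S$ has at most one element, i.e.\ $S$ contains at least $|H \setminus M| - 1$ vertices from $H \setminus M$.

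The main obstacle I expect is the subcase where $w$ is not adjacent to $h$: here one must apply the 2-plex property of $G[H]$ a second time, to $w$, concluding that $v$ and $w$ must be adjacent and thereby producing the alternative path to $x$ through $v$. This double invocation of the 2-plex property---first on $G[H]$ to establish connectedness in $G-S$, then on the component $C$ in $G-S$ to derive the contradiction from the nonedges of $x$---is the crux of the argument.
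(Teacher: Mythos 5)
Your proof is correct and follows essentially the same route as the paper's: both assume an undestroyed edge $\{h,x\}\in E(H,X)$ and two surviving vertices of $H\setminus M$, and derive a contradiction from $x$ being connected in $G-S$ to two vertices it is nonadjacent to. The only cosmetic difference is that you build the connecting paths explicitly from the 2-plex property of $G[H]$ and contradict the 2-plex property of the component of $G-S$ directly, whereas the paper notes that $G[H\setminus S]$ is a connected 2-plex (having at least three vertices) and then invokes the forbidden-induced-subgraph characterization.
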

\begin{proof}
  Because the solution $S$ does not destroy all edges in~$E(\hp,X)$,
  there must exist an edge~${e\in E(\hp\setminus S, X\setminus
    S)}$.
\begin{figure}
  \centering
  \includegraphics{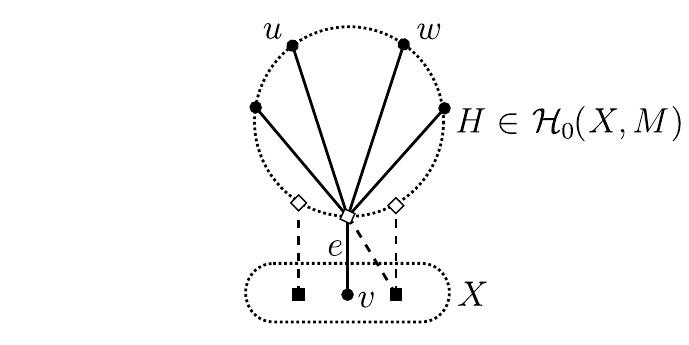}
  \caption{Empty squares are the vertices in the set~$M$. Filled squares are in the solution~$S$. Dashed edges are destroyed by~$S$. Note that there are no edges from $X$ to vertices in~$H\setminus M$. Shown are \FISG{}s that result if a solution~$S$ does not destroy an edge~${e\in E(\hp,X)}$ and if~$S$ does not contain all but one~vertex~in~${\hp\setminus M}$.}
\label{ruecken-proof}
\end{figure}
Now assume that $S$ does not contain two distinct
  vertices~${u,w\in \hp\setminus M}$, as shown in
  \autoref{ruecken-proof}. Because~$u,w\notin M$ and because~$H\setminus M$ is
  nonadjacent to~$X$, the vertex $v\in X\setminus S$ incident to the
  edge~$e$ cannot be adjacent to the vertices~${u,w\in
    \hp\setminus (S\cup M)}$.  But~$\hp\setminus S$ contains at least
  three vertices:~$u,w$ and at least one vertex from~$H\cap M$. Thus,
  the vertex $v$~is connected but nonadjacent to~$u$ and~$w$. We can
  conclude from \autoref{fisg-char} that they are part of a
  \FISG{}. This contradicts $S$ being a solution.
\end{proof}

\begin{lemma}\label{exist-sol}
  Let~$S$ and~$X$ be solutions.  Assume that there is a vertex set~$M$
  and a set~$\hp\in\hv$ such that~$H\setminus M$ is nonadjacent
  to~$X$. If $|\hp\setminus
  M|\geq|\hp\cap M|+1$, then there exists a solution~$S'$
  with~$|S'|\leq|S|$ that destroys all edges in~$E(\hp,X)$.
\end{lemma}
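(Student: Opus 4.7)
The plan is to split into two cases. If $S$ already destroys every edge in $E(\hp,X)$, then $S':=S$ suffices. Otherwise, I would build $S'$ by swapping out the many vertices of $\hp\setminus M$ that $S$ is forced (by \autoref{ruecken-lemma}) to contain for the (fewer) vertices of $\hp\cap M$; concretely,
\[
S':=(S\setminus(\hp\setminus M))\cup(\hp\cap M).
\]

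For the size bound, a routine set-theoretic computation gives $|S'|=|S|-|S\cap \hp|+|\hp\cap M|$. Since by assumption $S$ does not destroy all edges in $E(\hp,X)$, \autoref{ruecken-lemma} yields $|S\cap(\hp\setminus M)|\geq|\hp\setminus M|-1$, and therefore $|S\cap \hp|\geq|\hp\setminus M|-1$. Combined with the hypothesis $|\hp\setminus M|\geq|\hp\cap M|+1$, this gives $|S'|\leq|S|+|\hp\cap M|-|\hp\setminus M|+1\leq|S|$. That $S'$ destroys every edge in $E(\hp,X)$ is immediate: the $\hp$-endpoint of any such edge must lie in $\hp\cap M$ (because $\hp\setminus M$ is nonadjacent to~$X$), and $\hp\cap M\subseteq S'$.

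I expect the main obstacle to be verifying that $S'$ is actually a solution, i.e.\ that $G-S'$ is a \pcg 2. I would do this by decomposing the vertex set $V\setminus S'$ as the disjoint union $A\cup B$, where $A:=\hp\setminus M$ and $B:=V\setminus(S\cup \hp)$; a short set chase shows that these two sets indeed partition $V\setminus S'$. No edge of $G$ crosses between $A$ and $B$: edges leaving $\hp$ must land in $X$ (since $\hp$ induces a connected component of $G-X$), and $\hp\setminus M$ is nonadjacent to~$X$ by hypothesis. Hence $G-S'$ splits as the disjoint union of $G[A]$ and $G[B]$. Now $G[B]=(G-S)-\hp$ is an induced subgraph of the \pcg 2 $G-S$, so by heredity of the property it is again a \pcg 2. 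And $G[A]$ is an induced subgraph of the \pl 2 $G[\hp]$, hence itself a \pl 2 and trivially a \pcg 2. Consequently $G-S'$ is a \pcg 2, so $S'$ is a solution and the proof concludes.
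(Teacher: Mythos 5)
Your proof is correct and follows essentially the same route as the paper: the same exchange set $S'=(S\setminus(\hp\setminus M))\cup(\hp\cap M)$, the size bound via \autoref{ruecken-lemma} together with $|\hp\setminus M|\geq|\hp\cap M|+1$, and the observation that $G-S'$ splits into the induced subgraph $(G-S)-\hp$ and the isolated \pl2 $G[\hp\setminus M]$. Your write-up is merely a bit more explicit (the set-theoretic size computation and the clean partition $V\setminus S'=(\hp\setminus M)\cup(V\setminus(S\cup\hp))$) than the paper's somewhat terser phrasing.
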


\begin{proof}
  Assume that~$S$ does not destroy all edges in~$E(\hp,X)$. From
  \autoref{ruecken-lemma} and from $|\hp\setminus M|\geq |\hp\cap
  M|+1$, we can conclude that there are at least $|\hp\cap M|$
  vertices from $\hp\setminus M$ in~$S$. The set~$S':=S\cup(\hp\cap M)\setminus(\hp\setminus M)$ destroys all edges in~$E(\hp,X)$. Because~$S$
  contains at least $|\hp\cap M|$ vertices from~$\hp\setminus M$
  and~$S'$ instead contains~$\hp\cap M$, the set $S'$ is not larger
  than~$S$. The set $S'$~is a solution,
  because~$G':=G-(S\cup(\hp\cap M))$ is a 2-plex cluster graph and because
  $G-S'$~is $G'$~with the additional connected component formed
  by the \pl2~$G[\hp\setminus M]$.
\end{proof}

\begin{lemma}\label{ruecken-time}
  \autoref{ruecken} is correct. Given a vertex set~$M$ and a solution~$X$, we can exhaustively apply \autoref{ruecken} in $O(n+m)$~time.
\end{lemma}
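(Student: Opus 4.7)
To prove correctness, the forward direction is immediate: if $S\subseteq V$ solves $(G,k)$, then $S\setminus\{u\}$ solves $(G-\{u\},k)$ because induced subgraphs of 2-plex cluster graphs are 2-plex cluster graphs. For the backward direction, I start from a solution $S$ for $G-\{u\}$ with $|S|\leq k$, note that $S_G:=S\cup\{u\}$ is then a solution for $G$ of size at most $k+1$, and aim to extract a solution for $G$ of size at most $k$ via a case split on whether $S_G$ destroys every edge in $E(H,X)$.

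In the first case, when $S_G$ destroys every edge of $E(H,X)$, I would show that $S$ itself is already a solution for $G$. Since $u\in H\setminus M$ and $H\setminus M$ is nonadjacent to $X$, the vertex $u$ is incident to no edge in $E(H,X)$, so dropping $u$ from $S_G$ preserves the destruction of $E(H,X)$. Consequently, in $G-S$ the set $H\setminus S$ is disconnected from $X\setminus S$, and $u$'s connected component lies inside $G[H\setminus S]$; as an induced subgraph of the 2-plex $G[H]$ it is a 2-plex, while the remaining components coincide with those of $(G-\{u\})-S$ and are 2-plexes by hypothesis. In the second case, where some edge of $E(H,X)$ survives $S_G$, I would invoke \autoref{ruecken-lemma} applied to $G$ with solution $S_G$ to conclude that $S_G$ contains at least $|H\setminus M|-1$ vertices of $H\setminus M$. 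Since $u\in H\setminus M$ itself accounts for one of these, $S=S_G\setminus\{u\}$ already contains at least $|H\setminus M|-2\geq|H\cap M|$ vertices of $H\setminus M$, using exactly the precondition $|H\setminus M|>|H\cap M|+1$. Following the construction in \autoref{exist-sol}, I set $S':=(S\setminus(H\setminus M))\cup(H\cap M)$, observe $|S'|\leq|S|\leq k$, and verify that $G-S'$ decomposes as the disjoint union of the induced subgraph $G[V\setminus(H\cup S)]$ of the 2-plex cluster graph $(G-\{u\})-S$ and the isolated 2-plex $G[H\setminus M]$, where isolation is guaranteed by $H\cap M\subseteq S'$ and $H\setminus M$ being nonadjacent to $X$.

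For the running time, I would first compute $\hv$ together with a lookup table from every vertex to its component in $G-X$ via breadth-first search in $O(n+m)$ time; then, for every $v\in X$ and every neighbor $w\in N(v)\setminus(X\cup M)$, mark the component containing $w$ as having $H\setminus M$ adjacent to $X$, another $O(n+m)$ pass; finally, for each unmarked component $H$, tally $|H\cap M|$ and $|H\setminus M|$ and delete $|H\setminus M|-|H\cap M|-1$ arbitrary vertices of $H\setminus M$ whenever the threshold is exceeded. The main subtlety of the correctness argument is why the threshold must be exactly $|H\setminus M|>|H\cap M|+1$: applying \autoref{exist-sol} as a black box to $S_G$ would only guarantee a solution of size at most $|S|+1$, and the extra unit of savings comes precisely from $u\in H\setminus M$, which forces $u$ to be one of the $|H\setminus M|-1$ vertices that \autoref{ruecken-lemma} packs into $S_G$, so that the swap $S_G\mapsto(S_G\setminus(H\setminus M))\cup(H\cap M)$ evicts $u$ for free.
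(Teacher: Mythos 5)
Your proposal is correct, and its backward direction takes a genuinely different route from the paper. The paper first invokes \autoref{exist-sol} to assume without loss of generality that the solution~$S$ for~$G-\{u\}$ destroys all edges in~$E(\hp,X)$, and then derives a contradiction via \autoref{FISG-has-v}: a \FISG{} in~$G-S$ would have to contain both~$u$ and a vertex of~$X\setminus S$, yet these are disconnected once~$E(\hp,X)$ is destroyed, so~$S$ itself already solves~$(G,k)$. You instead pass to the size-$(k{+}1)$ solution~$S\cup\{u\}$ of~$G$, split on whether it destroys~$E(\hp,X)$, and in each case explicitly exhibit a size-$\leq k$ solution for~$G$ by a component decomposition: in the first case~$S$ itself works because~$u$ is incident to no edge of~$E(\hp,X)$, and in the second case you apply \autoref{ruecken-lemma} directly and redo the exchange from the proof of \autoref{exist-sol} with a sharper count, the strict threshold~$|\hp\setminus M|>|\hp\cap M|+1$ paying for the extra vertex~$u$. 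What each buys: the paper's WLOG route is shorter given \autoref{exist-sol} as a black box, but one must observe that that lemma (stated for solutions of~$G$ and the components of~$G-X$) is being applied to a solution of~$G-\{u\}$, via the component~$\hp\setminus\{u\}$, where the rule's strict inequality supplies exactly the hypothesis~$|\hp\setminus M|-1\geq|\hp\cap M|+1$; your version keeps all reasoning inside~$G$, avoids \autoref{FISG-has-v} entirely, and is more self-contained and constructive, at the cost of re-proving the exchange argument inline. Your running-time argument matches the paper's (component table plus marking neighbors of~$X$, then per-component counting), so that part is essentially identical.
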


\begin{proof}
  Let~$u$ be the vertex chosen by \autoref{ruecken} and let~$G':=G-\{u\}$.
  We have to show that~$(G',k)$ is a yes-instance if and only
  if~$(G,k)$ is a yes-instance. If~$(G,k)$ is a yes-instance, then there
  exists a solution~$S$ with~$|S|\leq k$ for~$G$. Since~$G-S$ is
  a 2-plex cluster graph, $G'-S$ is a \pcg2 as well. Thus, $(G',k)$~is a
  yes-instance.

  If $(G',k)$~is a yes-instance, then there exists a solution~$S$
  with~$|S|\leq k$ for~$G'$. From \autoref{exist-sol}, we can without
  loss of generality assume that the solution~$S$ destroys all edges
  in~$E(\hp,X)$.  Now assume that~$G-S$ is not a 2-plex cluster
  graph. From \autoref{FISG-has-v}, we can conclude that $G$ contains
  a \FISG{}~$F$ including~$u$. The \FISG{}~$F$ also contains a vertex~$v\in
  X\setminus S$, because~$X$ is a solution. However, the vertices~$u$
  and~$v$ are not connected in~$G-S$, because $S$~destroys all edges
  in~$E(\hp,X)$. Therefore, $F$~cannot exist in~$G-S$ and $S$~must be
  a solution for~$G$. Because~$|S|\leq k$, it follows that $(G,k)$~is
  a yes-instance.

  To prove the running time, recall that we can construct the set~$\hv$ in ${O(n+m)}$ time. Then, in~$O(n+m)$ time, we construct a table $T$ so that for every neighbor~$v$ of~$X$, we have~$T[v]=1$.  For each~$H\in\hv$, we now count the number of vertices in~$H\setminus M$ and ${H\cap M}$ in $O(|H|)$~time. If in the counting process, we find a vertex~$v\in H\setminus M$ with~$T[v]=1$, then $H\setminus M$ is adjacent to~$X$. This implies that \autoref{ruecken} is not applicable for~$H$; we continue with the next set in $\hv$. The removal of vertices works in $O(n+m)$~time.
\end{proof}

\begin{korollar}\label{h0-vertices}
  Let~$X$ be a solution. Assume that there is a vertex set~$M$ and a
  set~${\hp\in\hv}$ such that~$H\setminus M$ is nonadjacent to~$X$.
  After exhaustively applying \autoref{ruecken} given~$M$, the set~$\hp\setminus M$ contains at most $|\hp\cap
  M|+1$ vertices.
\end{korollar}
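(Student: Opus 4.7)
The plan is to prove this as an immediate consequence of the stopping condition of \autoref{ruecken}. The rule's precondition is precisely $|\hp\setminus M| > |\hp\cap M|+1$ together with $\hp\setminus M$ being nonadjacent to~$X$. Exhaustive application means that we apply the rule until its precondition fails for every $\hp\in\hv$ satisfying the nonadjacency hypothesis. Hence the argument will be a one-line contradiction: if after termination some $\hp\in\hv$ with $\hp\setminus M$ nonadjacent to~$X$ still had $|\hp\setminus M| > |\hp\cap M| + 1$, then \autoref{ruecken} would still fire, contradicting the assumption that the rule has been applied exhaustively. Correctness of the rule itself is not at issue here, since that is already established in \autoref{ruecken-time}.

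The one technical point I need to address is that a vertex deletion inside $\hp\setminus M$ may alter $\hv$ itself: removing $u\in \hp\setminus M$ could in principle split the induced 2-plex $G[\hp]$ into several smaller connected components, so the set denoted by $\hp$ must be interpreted as referring to the components of the \emph{current} graph after each application. This causes no trouble, because any component $\hp'$ arising from such a split satisfies $\hp'\setminus M\subseteq \hp\setminus M$, and since deleting vertices only destroys edges, $\hp'\setminus M$ is still nonadjacent to~$X$ whenever $\hp\setminus M$ was. Hence the invariant needed to invoke \autoref{ruecken} is preserved throughout the process, and the contradiction argument goes through for every component in the final graph.

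I do not anticipate any substantive obstacle; the corollary is essentially a reformulation of the rule's precondition, with the non-trivial work having already been carried out in \autoref{ruecken-lemma}, \autoref{exist-sol}, and \autoref{ruecken-time}.
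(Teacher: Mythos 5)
Your proposal is correct and matches the paper, which states this corollary without a separate proof precisely because it is the immediate negation of the applicability condition of \autoref{ruecken} after exhaustive application. Your extra remark about components possibly splitting and the nonadjacency invariant being preserved is a harmless (and valid) elaboration of what the paper leaves implicit.
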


\section{Kernel Size}
\label{result}
In this section, we count the total number of vertices remaining in a graph~$G$ after all data reduction rules have been applied. To this end, we assume that we have a solution~$X$ and a set~$M$ that is peripheral with reference to~$X$. Then, we count the vertices in~$X$, the vertices in~$M$ and the vertices in the connected components in~$G-X$ that are not in~$M$.

Observe that to bound the sizes of the connected components in~$G-X$, which are induced by sets in~$\hv$, we have presented \emph{two} data reduction rules in \autoref{redundant}. \autoref{rul:goodconnected} is applicable to all sets in~$\hv$. The additional \autoref{ruecken} is only applicable to sets in the collection ${\hneg:=\{H\in\hv\mid H\setminus M}$ is nonadjacent~to~$X\}$. Thus, we independently count the vertices in the sets in~$\hneg$ and the vertices in the sets in~$\hpos:=\{H\in\hv\mid H\setminus M$ is adjacent~to~$X\}$. \autoref{fig:h0} shows an example for a set in $\hneg$, \autoref{fig:h1} shows an example for a set in $\hpos$. We have already made this distinction when we bounded the number of sets in~$\hv$ in \autoref{unmarked-bounds}; it is not the only distinction~we~make:

\autoref{per}(\ref{per3}) for peripheral sets ensures that if there is more than one set $H\in\hpos$ such that a vertex~$v\in X$ is adjacent to~$H\setminus M$, then each such set~$H$ satisfies~$|H\setminus M|\leq 2$. To allow for a tighter worst-case analysis, we count the vertices in such sets independently. To this end, we use the following lemma:

\begin{lemma}\label{nx-sum}
Let the set~$M$ be peripheral with respect to a solution~$X$.
For the sets
  \begin{align*}
    X_1 &:=\{v\in X\mid\text{there is exactly one set~$\hp\in\hv$ such that~$\hp\setminus M$ is adjacent to~$v$}\}\text{,}\\
    X_2 &:=\{v\in X\mid\text{there are two or no sets~$\hp\in\hv$ such that~$H\setminus M$ is adjacent to~$v$}\}\\
    &\phantom:=X\setminus X_1\text{ (because~$M$ is peripheral and because of \autoref{per}(\ref{per1})) and}\\
    \mathcal{\tilde H} &:= \{\hp\in\hpos\mid \hp\setminus M \text{ is adjacent to only vertices in }X_1\}\text{,}
  \end{align*}
the following relations hold:
  \begin{align*}
    \sum_{\hp\in\mathcal{\tilde H}}|\nx{\hp\setminus M}|&=|X_1| && \text{and} &
    |\mathcal{\tilde H}| &\leq|X_1| && \text{and} & |\hpos\setminus\mathcal{\tilde
      H}|&\leq 2|X_2|\text{.}
  \end{align*}
\end{lemma}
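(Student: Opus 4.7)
The plan is to prove all three relations by combinatorial counting, leveraging the peripheral properties of $M$ from \autoref{per}. In each case I would view the relevant quantity as a count of incidence pairs $(H,v)$ with $H$ in the collection of interest and $v \in \nx{H \setminus M}$, and bound this count from both sides.

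For the equality, I would expand the left-hand side as the number of pairs $(H,v)$ with $H \in \mathcal{\tilde H}$ and $v \in \nx{H \setminus M}$. By the defining property of $\mathcal{\tilde H}$, any such $v$ lies in $X_1$; and by the defining property of $X_1$, each $v \in X_1$ is adjacent (via $H \setminus M$) to a unique set $H(v) \in \hv$, so it participates in at most one pair. To reach equality, the crucial step is to show that $H(v) \in \mathcal{\tilde H}$ for every $v \in X_1$, turning the map $v \mapsto H(v)$ into a bijection between $X_1$ and the pairs counted by the left-hand side. I would argue this by contradiction: supposing some $v' \in X_2$ is adjacent to $H(v) \setminus M$, the vertex $v'$ must be adjacent to two sets of $\hv$ (since $v' \in X_2$ and at least one such set exists), so \autoref{per}(\ref{per3}) forces $|H(v) \setminus M| \leq 2$; then \autoref{per}(\ref{per2}) applied to $v$ and to $v'$ on this small set should yield the contradiction.

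For $|\mathcal{\tilde H}| \leq |X_1|$, I would exhibit an injection $\iota : \mathcal{\tilde H} \to X_1$ sending each $H \in \mathcal{\tilde H}$ to an arbitrary vertex of $\nx{H \setminus M}$; this is well-defined because $H \in \hpos$ forces $\nx{H \setminus M} \neq \emptyset$, and the image lies in $X_1$ because $H \in \mathcal{\tilde H}$. Injectivity is immediate from the defining property of $X_1$, since a vertex $v \in X_1$ is adjacent to a unique set of $\hv$. For the third inequality, I would double-count pairs $(H,v)$ with $H \in \hpos \setminus \mathcal{\tilde H}$ and $v \in \nx{H \setminus M} \cap X_2$. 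Every $H \in \hpos \setminus \mathcal{\tilde H}$ contributes at least one such pair, because the failure of the $\mathcal{\tilde H}$-condition forces some $X$-neighbor of $H \setminus M$ outside $X_1$, which must then lie in $X_2$; on the other hand, \autoref{per}(\ref{per1}) guarantees that every $v \in X_2$ participates in at most two such pairs, giving the claimed bound $2|X_2|$.

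The main obstacle I anticipate is the bijectivity step in the proof of the equality: establishing $H(v) \in \mathcal{\tilde H}$ for every $v \in X_1$ is the only place where the argument needs to extract a sharper structural consequence from the combination of \autoref{per}(\ref{per2}) and \autoref{per}(\ref{per3}); the injection argument for the second relation and the pair-counting argument for the third use only the respective numerical consequences of \autoref{per}(\ref{per1}) and the definitions.
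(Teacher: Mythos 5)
Your treatment of the second and third relations, and of the direction $\sum_{\hp\in\mathcal{\tilde H}}|\nx{\hp\setminus M}|\leq|X_1|$ in the first, is exactly the paper's counting argument and is fine. The genuine problem is the step you yourself flag as the main obstacle: showing $H(v)\in\mathcal{\tilde H}$ for every $v\in X_1$. The contradiction you sketch does not materialize. If some $v'\in X_2$ is adjacent to $H(v)\setminus M$, then \autoref{per}(\ref{per3}) indeed forces $|H(v)\setminus M|\leq 2$, but \autoref{per}(\ref{per2}) then only requires that $v$ and $v'$ each be nonadjacent to at most one vertex of this two-element set, which is perfectly satisfiable when each of them has exactly one neighbor there. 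Concretely, take $M=\emptyset$, $X=\{v,v'\}$, and two connected components $\{a,b\}$ and $\{c,d\}$ of $G-X$ (each a single edge), with the edges $va$, $v'b$, $v'c$. All three properties of \autoref{per} hold, $X_1=\{v\}$ and $X_2=\{v'\}$, yet $\mathcal{\tilde H}=\emptyset$, so the left-hand side of the first relation is $0$ while $|X_1|=1$. Hence the claim that the unique component of a vertex in $X_1$ lies in $\mathcal{\tilde H}$ cannot be derived from \autoref{per} alone, and the stated equality fails for an arbitrary peripheral set; no refinement of your per-property argument can close this step.

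What this means in practice: only the inequality $\sum_{\hp\in\mathcal{\tilde H}}|\nx{\hp\setminus M}|\leq|X_1|$ is available (every counted vertex lies in $X_1$ by the definition of $\mathcal{\tilde H}$, and it is counted at most once because a vertex of $X_1$ is adjacent to exactly one set of $\hv$ after removing $M$), and this is all that is used later: in \autoref{unmarked-boundary} the lemma serves solely to upper-bound the sum, $|\mathcal{\tilde H}|$, and $|\hpos\setminus\mathcal{\tilde H}|$. The paper's own proof asserts that every vertex of $X_1$ is counted exactly once without justifying the "at least once" direction, so it too really establishes no more than the inequality. The right move for you is therefore not to hunt for a bijection but to weaken the first relation to "$\leq$" and note that this suffices for the kernel-size bound.
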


\begin{proof}
  Let $\hp\in\mathcal{\tilde H}$ be a set such that~$H\setminus M$ is only adjacent to vertices in~$X_1$.  For a vertex~$v\in X_1$ that is adjacent to~$\hp\setminus M$, there is by definition of~$X_1$ no other set~$\hp'\in\hpos$ such that~$\hp'\setminus M$ is adjacent to~$v$. Thus, if we count the number of vertices in~$\nx{\hp\setminus M}$ for all~$\hp\in\mathcal{\tilde H}$, then we count every vertex~$v\in X_1$ exactly once. This proves the first relation.

  For each~$\hp\in \mathcal{\tilde H}\subseteq\hpos$, there is by definition of $\hpos$ at least one vertex~$v\in X_1$ such that~$\hp\setminus M$ is adjacent to~$v$. Thus, $|\mathcal{\tilde H}|\leq|X_1|$.

  According to \autoref{per}(\ref{per1}), there are at most two sets~$\hp\in\hv$ such that~$\hp\setminus M$ is adjacent to~$v$. The set $\hpos\setminus\mathcal{\tilde H}$ only contains sets~${H\in\hpos}$ such that a vertex in~${X_2}$ is adjacent to~$H\setminus M$. This yields $|\hpos\setminus\mathcal{\tilde H}|\leq 2|X_2|$.
\end{proof}

\noindent Given a solution~$X$ and a set~$M$ that is peripheral with respect to~$X$, we now assume that all data reduction rules have been exhaustively applied to our input graph~$G$ and count the vertices in the connected components in~$G-X$ that are not in~$M$.

\begin{lemma}\label{unmarked-boundary}
  Let~$X$ be a solution and let the set~$M$ be peripheral with respect to~$X$. After exhaustively applying \autoref{isolated}, \autoref{rul:goodconnected} and \autoref{ruecken}, it holds that
  \[\big|\smashoperator{\bigcup_{\hp\in\hv}}\,(\hp\setminus M)\big|\leq
  (k+5)|X|+2|M|.\]
\end{lemma}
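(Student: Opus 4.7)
The plan is to partition $\hv$ into the three disjoint pieces $\hneg$, $\mathcal{\tilde H}$, and $\hpos\setminus\mathcal{\tilde H}$ (with $\mathcal{\tilde H}$ as introduced in \autoref{nx-sum}), and to bound the number of vertices in $\bigcup_{H\in\hv}(H\setminus M)$ contributed by each piece separately. Each of the three reduction rules produces a different component-size bound well suited to one of these three cases, so the partition is exactly what lets the individual estimates combine cleanly in the end.

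For $H\in\hneg$ I would invoke \autoref{h0-vertices} (applicable because \autoref{ruecken} is exhausted) to get $|H\setminus M|\leq|H\cap M|+1$, and bound $|\hneg|\leq|M|$ using \autoref{sets-in-hm} (applicable because \autoref{isolated} is exhausted); summation yields a contribution of at most $\sum_{H\in\hneg}|H\cap M|+|M|$. For $H\in\mathcal{\tilde H}\subseteq\hpos$ I would apply \autoref{h1-vertices} (applicable because \autoref{rul:goodconnected} is exhausted) to get $|H\setminus M|\leq|H\cap M|+2|\nx{H\setminus M}|+k+3$, and then use both $\sum_{H\in\mathcal{\tilde H}}|\nx{H\setminus M}|=|X_1|$ and $|\mathcal{\tilde H}|\leq|X_1|$ from \autoref{nx-sum} to bound this contribution by $\sum_{H\in\mathcal{\tilde H}}|H\cap M|+(k+5)|X_1|$. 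For $H\in\hpos\setminus\mathcal{\tilde H}$ some vertex $v\in X_2$ is adjacent to $H\setminus M$; by the definition of $X_2$ this vertex is adjacent to exactly two sets in $\hv$, so \autoref{per}(\ref{per3}) forces $|H\setminus M|\leq 2$, and combined with $|\hpos\setminus\mathcal{\tilde H}|\leq 2|X_2|$ from \autoref{nx-sum} the contribution from this piece is at most $4|X_2|$.

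Adding the three contributions, the two $\sum|H\cap M|$ terms combine to at most $|M|$ because the sets $H\in\hv$ are pairwise disjoint and $M$ is partitioned across them, which together with the stray $|M|$ accounts for $2|M|$. What remains is $(k+5)|X_1|+4|X_2|$, and applying $4\leq k+5$ together with $|X_1|+|X_2|=|X|$ yields the claimed bound $(k+5)|X|+2|M|$. The main subtlety is the bookkeeping of the $|H\cap M|$ contributions across the three pieces: treating the sets in $\hpos\setminus\mathcal{\tilde H}$ by \autoref{h1-vertices} instead of by the peripherality property would add a redundant $k+3$ per such component and so blow the coefficient of $|X_2|$ past $k+5$, which is exactly why the split of $\hpos$ into $\mathcal{\tilde H}$ and its complement is essential.
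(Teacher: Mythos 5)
Your proposal is correct and follows essentially the same route as the paper: the same split of $\hv$ into $\hneg$, $\mathcal{\tilde H}$, and $\hpos\setminus\mathcal{\tilde H}$, bounded via \autoref{h0-vertices}, \autoref{h1-vertices}, \autoref{sets-in-hm}, \autoref{nx-sum}, and \autoref{per}(\ref{per3}), with the same final step of absorbing $4|X_2|$ into $(k+5)|X|$. The only (immaterial) difference is bookkeeping: the paper first sums the \autoref{h1-vertices} bound over all of $\hpos$ and then downgrades the $\hpos\setminus\mathcal{\tilde H}$ terms to $|H\setminus M|\leq 2$, whereas you partition from the start.
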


\begin{proof}Let~$\mathcal{\tilde H}, X_1,$ and~$X_2$ be as
  defined in \autoref{nx-sum}. We can conclude from
  \autoref{h1-vertices} and \autoref{h0-vertices} that
  $|\bigcup_{\hp\in\hv} \hp\setminus
s M|$ is upper-bounded by
  \begin{align*}
    & \smashoperator{\sum_{\hp\in\hpos}}\big(|\hp\cap M|+2\,|\nx{\hp\setminus
      M}|+k+3\big)+\smashoperator{\sum_{\hp\in\hneg}}\big(|\hp\cap
    M|+1\big)\text{.}\\
    \intertext{Because the sets in~$\hv$ are pairwise disjoint, the
      two occurrences of~$|\hp\cap M|$ sum up to a total of~$|M|$,
      yielding}
    & \smashoperator{\sum_{\hp\in\hpos}}\big(2|\nx{\hp\setminus M}|+k+3\big)+|M|+|\hneg|.\\
    \intertext{By \autoref{sets-in-hm}, we have
      that~$|\hneg|\leq|M|$. Thus, the above term is bounded by}
    & \smashoperator{\sum_{\hp\in\hpos}}\big(2|\nx{\hp\setminus M}|+k+3\big) + 2|M|.\\
    \intertext{For each set~$H\in\hpos\setminus\mathcal{\tilde H}$,
      the set~$H\setminus M$ must be adjacent to a vertex
      from~$X_2$. 
This follows from the definition of~$\mathcal{\tilde H}$ in
      \autoref{nx-sum} and by definition of $\hpos$.
 From \autoref{per}(\ref{per3}), we can conclude
      that~$|H\setminus M|\leq 2$, implying that only sets in $\mathcal{\tilde H}$ may actually contain $2|\nx{\hp\setminus M}|+k+3$ vertices that are not in~$M$. We obtain}
    \big|\smashoperator{\bigcup_{H\in\hv}}\,(H\setminus M)\big|&\leq\smashoperator{\sum_{\hp\in\mathcal{\tilde H}}}\big(2|\nx{\hp\setminus
      M}|+k+3\big) + 2|\hpos\setminus\mathcal{\tilde H}| +2|M|.\\
    \intertext{Applying \autoref{nx-sum}, we can bound this by}
    & 2|X_1|+|X_1|(k+3)+4|X_2|+2|M| \leq(5+k)|X_1|+4|X_2| + 2|M|
  \end{align*}
  We can interpret this term as a function in~$|X_1|$
  and~$|X_2|$ with fixed~$|X|$ and~$k\geq 0$. Subject to the
  constraint~$|X_1|+|X_2|=|X|$, it is maximal for~$|X_1|=|X|$ and~$|X_2|=0$. This
  yields the desired result.
\end{proof}

\begin{satz}\label{kern-size}
  \pvd 2 has a problem kernel containing $(10k+6)|X|\leq 40k^2+24k$
  vertices. It can be found in $O(kn^2)$~time.
\end{satz}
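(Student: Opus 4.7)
The plan is to run the algorithms and reduction rules in the appropriate order and then sum up both the vertex counts and the running times; the theorem is essentially a bookkeeping result that aggregates bounds already established in \autoref{greedy}, \autoref{mark}, and \autoref{redundant}.

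First I would invoke \autoref{find-X} to compute an initial factor-4 approximate solution $X$, halting immediately with a no-instance output if $|X|$ ever exceeds $4k$ during the computation. This early-termination modification is justified by \autoref{X-boundary} and reduces the running time of \autoref{find-X} to $O(k(n+m)) = O(kn^2)$; henceforth $|X| \leq 4k$.

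Next I would apply \autoref{mark-neighbors} to compute the sets $M(v)$ for $v \in X$ in $O(|X|n^2) = O(kn^2)$ time by \autoref{mark-neighbors-time}, and then exhaustively apply \autoref{high-occurence-fast} in $O(|X|n + m)$ time by \autoref{high-occurence-time}. The remark following \autoref{high-occurence-time} ensures that this preserves the peripheral property of $M := \bigcup_{v\in X} M(v)$ with respect to the (possibly shrunken) solution $X$, and \autoref{size-M} then gives $|M| \leq 3k|X|$. Finally I would exhaustively apply \autoref{isolated}, \autoref{rul:goodconnected}, and \autoref{ruecken}, with running times $O(n+m)$, $O(n^2)$, and $O(n+m)$, respectively.

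For the size bound, \autoref{unmarked-boundary} gives $\left|\bigcup_{H\in\hv}(H\setminus M)\right| \leq (k+5)|X| + 2|M|$, and summing over the three disjoint vertex groups yields $|X| + |M| + (k+5)|X| + 2|M| = (k+6)|X| + 3|M| \leq (k+6)|X| + 9k|X| = (10k+6)|X|$, which is at most $40k^2 + 24k$ since $|X| \leq 4k$. The total running time is dominated by the $O(kn^2)$ term from \autoref{mark-neighbors}. No step is a real obstacle: the only subtlety is to confirm that $G-X$ and $\hv$ are left invariant by \autoref{high-occurence-fast}, so that $M$ remains peripheral and can be reused by the later rules without recomputation.
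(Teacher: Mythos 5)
Your proposal is correct and follows essentially the same route as the paper's proof: compute the factor-4 approximate solution with early termination, build the peripheral set via \autoref{mark-neighbors}, shrink it with \autoref{high-occurence-fast} (noting, as the paper does, that $G-X$ and $\hv$ are unchanged so $M$ stays peripheral), apply \autoref{isolated}, \autoref{rul:goodconnected}, and \autoref{ruecken}, and then combine \autoref{X-boundary}, \autoref{size-M}, and \autoref{unmarked-boundary} to get $(10k+6)|X|\leq 40k^2+24k$ vertices in $O(kn^2)$ time. The only cosmetic difference is that the paper also explicitly cites the correctness lemmas for the applied rules and notes the decremented parameter $k'\leq k$, which you use implicitly.
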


\begin{proof}
  Given a \pvd 2 instance~$(G,k)$, we first compute a constant-factor approximate solution~$X$ using \autoref{find-X}. Then, we compute a set that is peripheral with respect to~$X$ using \autoref{mark-neighbors}. We apply \autoref{high-occurence-fast}, from which we obtain a new parameter~$k'\leq k$ and a peripheral set~$M$ with~$|M|\leq 3k|X|$ according to \autoref{size-M}. Finally, we apply \autoref{isolated}, \autoref{rul:goodconnected}, and \autoref{ruecken} to~$G$. The so-obtained graph and the new parameter~$k'$ constitute our problem kernel.

  We first show that after applying all data reduction rules to~$G$, the size of~$G$ only depends on the parameter~$k$. To this end, we count the vertices in the solution~$X$, the vertices in the peripheral set~$M$ and the vertices in~$G-X$ that are not in the peripheral set~$M$. If~$(G,k)$ is a yes-instance, then \autoref{X-boundary} gives an upper bound of~$4k$ on the number of vertices in the constant-factor approximate solution~$X$. If~$X$ is larger, we terminate our kernelization algorithm and output that~$(G,k)$ is a no-instance. By applying \autoref{high-occurence-fast}, we obtain a peripheral set~$M$ that contains at most~$3k|X|$ vertices according to \autoref{size-M}.  By exhaustively applying \autoref{isolated}, \autoref{rul:goodconnected}, and \autoref{ruecken} to~$G$, we can use \autoref{unmarked-boundary} to give a bound of $(k+5)|X|+2|M|=(7k+5)|X|$ on the number of vertices in~$G-X$ that are not in the peripheral set~$M$. Adding~$|X|$ and~$|M|$, we conclude that~$G$ contains at most~${(10k+6)|X|=40k^2+24k}$~vertices.

The correctness of \autoref{high-occurence-fast}, \autoref{isolated}, \autoref{rul:goodconnected}, and \autoref{ruecken}, has been shown in \autoref{high-occurence}, \autoref{isolated-time}, \autoref{goodconnected-correct}, and \autoref{ruecken-time}, respectively.

Finally, we show the running time of our kernelization algorithm. When we construct an approximate solution~$X$ using \autoref{find-X}, we can stop after finding more than~$k$ pairwise vertex-disjoint \FISG{}s, because this implies that~$(G,k)$ is a no-instance. Analog\-ously to the proof of \autoref{find-X-time}, it follows that we can construct~$X$ in $O(k(n+m))$~time. \autoref{mark-neighbors}, \autoref{high-occurence-fast}, \autoref{isolated}, \autoref{rul:goodconnected}, and \autoref{ruecken} run in~$O(kn^2)$ time according to \autoref{mark-neighbors-time}, \autoref{high-occurence-time}, \autoref{isolated-time}, \autoref{goodconnected-time}, and \autoref{ruecken-time}, respectively.
\end{proof}

\noindent To solve a \pvd 2 instance, we can compute a problem kernel with~$O(k^2)$ vertices and reduce this problem kernel to a \hs 4 instance with~$O((k^2)^4)$ sets, as discussed in \autoref{introsec}. Then, we can solve this \hs 4 instance by combining Wahlström's algorithm for \hs 3\cite{Wah07} with iterative compression, as discussed by Dom et al.\cite{DGHNT09}.

\begin{korollar}\label{interleave}
Using \hs 4, we can solve \pvd 2 in ${O(3.076^k+k^8+kn^2)}$ time.
\end{korollar}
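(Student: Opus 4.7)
The plan is to chain together the kernelization from Theorem~\ref{kern-size} with the reduction to \hs 4 that was sketched in the introduction, and then invoke the known \hs 4 algorithm. First, I would apply the kernelization algorithm of Theorem~\ref{kern-size} to the input instance $(G,k)$. This runs in $O(kn^2)$ time and either rejects the instance outright (when the approximate solution $X$ already has more than $4k$ vertices) or returns an equivalent instance $(G',k')$ with $k'\leq k$ and $|V(G')|\in O(k^2)$.

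Next, I would transform $(G',k')$ into a \hs 4 instance $(H,\mathcal C,k')$ in the way described in \autoref{introsec}: take $H$ to be the vertex set of $G'$ and, for every $V'\subseteq V(G')$ with $|V'|\le 4$, test in constant time whether $G'[V']$ is a \FISG{} for \pcg 2 (which by \autoref{fisg-char} and \autoref{forbidden} means one of the three four-vertex graphs depicted there) and, if so, add $V'$ to $\mathcal C$. Since $|V(G')|\in O(k^2)$, we have $|\mathcal C|=O((k^2)^4)=O(k^8)$, and the enumeration runs in $O(k^8)$ time. The correctness of this reduction is exactly \autoref{fisg-char}: a set $S$ makes $G'-S$ a 2-plex cluster graph if and only if $S$ hits every four-vertex \FISG{} in~$G'$.

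Finally, I would solve the resulting \hs 4 instance with the algorithm of Dom et al.\cite{DGHNT09}, which combines Wahlström's $O(2.076^k+|\mathcal C|)$-time algorithm for \hs 3\cite{Wah07} with iterative compression to obtain an $O(3.076^k+|\mathcal C|)$-time algorithm for \hs 4. With $|\mathcal C|\in O(k^8)$ and $k'\le k$, this step takes $O(3.076^k+k^8)$ time. Adding the three phases yields the claimed bound $O(3.076^k+k^8+kn^2)$.

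I do not expect a serious obstacle: the kernelization does all of the combinatorial work, while the remaining steps are a routine enumeration of bounded-size vertex subsets followed by a black-box call to a known \hs 4 algorithm. The only point that needs a brief justification is that the $O(k^8)$ bound on $|\mathcal C|$ dominates the cost of producing $\mathcal C$ in the first place, so that the hitting-set stage runs on an input of size polynomial in~$k$ rather than in~$n$; this is precisely the benefit of kernelizing before translating to \hs 4, as already emphasized~in~\autoref{introsec}.
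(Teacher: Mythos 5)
Your proposal is correct and follows essentially the same route as the paper: kernelize via \autoref{kern-size} in $O(kn^2)$ time to an instance with $O(k^2)$ vertices, enumerate the $O((k^2)^4)=O(k^8)$ four-vertex \FISG{}s to build the \hs 4 instance, and solve it in $O(3.076^{k}+k^8)$ time via Wahlström's \hs 3 algorithm combined with iterative compression as in Dom et al. No substantive difference from the paper's argument.
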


\paragraph{Concluding Remarks.} Peripheral sets played a central role in all stages of our kernelization algorithm. After constructing a peripheral set~$M$ with respect to a solution~$X$ using \autoref{mark-neighbors}, the peripheral set~$M$ helps us to bound the number of the connected components in~$G-X$ in \autoref{unmarked-bounds}. For a connected component in~$G-X$, in \autoref{ddot-mu} we use the peripheral set~$M$ to bound the number of vertices that are \emph{not} in the redundant set constructed by \autoref{construct-D}. Then, we remove vertices from that redundant set to bound the overall size of the connected component. In \autoref{weak}, we use the set of edges between~$M$ and~$X$ as a separator to develop an additional data reduction rule to further reduce the sizes of the connected components in~$G-X$.

To construct a set~$M$ that is peripheral with respect to a solution~$X$, we employ \autoref{mark-neighbors}. We could also construct~$M$ by enumerating all minimal \FISG{}s in~$G$, which are shown in \autoref{forbidden}. Then, for each vertex~$v\in X$, we could pick an inclusion-maximal set of \FISG{}s that pairwisely intersect only in~$v$. However, because each minimal \FISG{} contains four vertices, the total number of minimal \FISG{}s in a graph with~$n$ vertices is~$O(n^4)$. In contrast, \autoref{mark-neighbors} finds at most~$O(n)$ \FISG{}s for each vertex~$v\in X$. It runs in~$O(kn^2)$ time. Therefore, the running time of enumerating all minimal \FISG{}s in a graph might be significantly worse that of \autoref{mark-neighbors}.

\chapter[Kernelization for \pvd s]{Kernelization for \boldmath$s$-Plex Cluster Vertex Deletion}
In this chapter, we generalize the problem kernel for \pvd 2 to \pvd s. We will see that many definitions and lemmas that we have worked out for the case~$s=2$ also work for general $s$ if we modify them slightly. In \autoref{per-sec}, we first show how to find an approximate solution~$X$ for a graph~$G$, so that~$G-X$ is an \pcg s. Then, we generalize our concept of a peripheral set and show how to find one. In \autoref{bounds-s}, we revise our data reduction rules to bound the number and the sizes of the connected components in~$G-X$.  In \autoref{result-s}, we conclude a problem kernel with~$O(k^2s^3)$ vertices for \pvd s.

We now turn our attention to the main difference between \pvd 2 and \pvd s.  For \pvd 2, we used the fact that a \pl 2 containing at least three vertices is connected. We used this fact to construct a peripheral set using \autoref{mark-neighbors}, in the correctness proof of \autoref{rul:goodconnected}, and in the correctness proof of \autoref{ruecken}. To generalize these proofs, we need the following result:
 
\begin{lemma}\label{disconnected-size}
  An \pl s containing at least $2s-1$ vertices is a connected graph.
\end{lemma}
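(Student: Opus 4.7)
The plan is to argue by contradiction: assume $G$ is a disconnected $s$-plex on at least $2s-1$ vertices, and derive a vertex with too many non-neighbors.

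First, since $G$ is disconnected, its vertex set $V$ can be partitioned into two non-empty sets $A$ and $B$ such that no edge of $G$ runs between $A$ and $B$. In particular, every vertex in $A$ has all its neighbors inside $A$, and every vertex in $B$ has all its neighbors inside $B$.

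Next, I would use the $s$-plex property to bound $|A|$ and $|B|$ separately. Pick any $v \in A$. Since $v$ has no neighbors in $B$, every vertex of $B$ is a non-neighbor of $v$, so $v$ has at least $|B|$ non-neighbors in $V \setminus \{v\}$. By the definition of an $s$-plex, $v$ is adjacent to at least $|V|-s$ other vertices, hence has at most $s-1$ non-neighbors. This forces $|B| \leq s-1$. By the symmetric argument applied to any vertex of $B$, we also obtain $|A| \leq s-1$.

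Combining these bounds gives $|V| = |A| + |B| \leq 2(s-1) = 2s - 2$, contradicting the hypothesis $|V| \geq 2s-1$. Hence $G$ must be connected. I do not anticipate a real obstacle here; the only mild subtlety is that one must bound both sides of the partition, rather than just one, to pin down $|V|$ correctly — using only one side would yield the weaker threshold $|V| \geq s$, which is insufficient.
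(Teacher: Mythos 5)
Your proof is correct and follows essentially the same route as the paper's: the paper picks the vertex set $W$ of one connected component and bounds both $|W|\leq s-1$ and $|V\setminus W|\leq s-1$ via the $s$-plex degree condition, exactly as you do with your partition into $A$ and $B$. No gaps; your remark about needing to bound both sides matches the paper's argument.
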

 
\begin{proof}
  Let $G=(V,E)$ be an \pl s with more than one connected component. Because $G$ is an \pl s, a vertex in $G$ is nonadjacent to at most $s-1$ other vertices in~$G$.

  Let~$W\subseteq V$ be the vertex set of a connected component of~$G$. Because a vertex in~$W$ is nonadjacent to all vertices in~$V\setminus W$, we have that~$|V\setminus W|\leq s-1$ and~$|W|\leq s-1$. Therefore, it holds that~$|V|\leq 2s-2$. Thus, if an \pl s contains at least~$2s-1$ vertices, it must be a connected graph.
\end{proof}
 
\noindent Note that the bound given in \autoref{disconnected-size} is tight. Consider two cliques with $s-1$ vertices
  each. These two cliques can still be considered as one single~$s$-plex with $2s-2$ vertices.

\section{Approximate Solutions and Peripheral Sets}\label{per-sec}
Given an \pvd s instance $(G,k)$, in this section we first show how to
find an approximate solution~$X$ for~$G$. We then generalize our
concept of peripheral sets and construct a set that is peripheral with
respect to the solution~$X$.

Similarly to the case~$s=2$, we can easily find a constant-factor approximate solution for \pvd s using the algorithm by Guo et al.\cite{DBLP:conf/aaim/GuoKNU09}, which finds an $O(s+\sqrt s)$-vertex \FISG{} in ${O(s(n+m))}$ time if we apply it to a graph that is not a \pcg{s}. In particular, if~$T_s$ is the maximum integer satisfying ${T_s\cdot(T_s+1)\leq s}$, then Guo et al.\cite{DBLP:conf/aaim/GuoKNU09} show that their algorithm finds a \FISG{} with at most~$s+1+T_s$ vertices. Similarly to \autoref{find-X-time}, we can show that \autoref{find-X} computes a constant-factor approximate solution for \pvd s.
 
\begin{lemma}\label{find-X-time-s}
  There is a factor-$(s+1+T_s)$ approximate solution for \pvd s and it can be found in $O(ns(n+m))$~time.
\end{lemma}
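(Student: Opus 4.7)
The plan is to mimic the proof of \autoref{find-X-time}, since \autoref{find-X} is phrased generically in terms of the algorithm by Guo et al.\ for finding a \FISG. First I would run \autoref{find-X} on the input graph~$G$, invoking the generalization of the Guo et al.\ algorithm that locates a \FISG{} with at most $s+1+T_s$ vertices in $O(s(n+m))$ time whenever the current graph~$H=G-X$ is not a \pcg s. The algorithm stops precisely when no \FISG{} can be found in~$H$, so by \autoref{fisg-char}, the set~$X$ returned is a solution.

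For the approximation guarantee, I would let~$F$ denote the collection of \FISG{}s collected during the execution of \autoref{find-X}. Because \autoref{find-X} removes the vertices of a \FISG{} from~$H$ as soon as it is discovered, the \FISG{}s in~$F$ are pairwise vertex-disjoint. Any solution has to contain at least one vertex from each such \FISG, so the size of any optimal solution is at least~$|F|$. Each \FISG{} in~$F$ contains at most $s+1+T_s$ vertices, so the vertex set~$X$ accumulated by the algorithm has size at most $(s+1+T_s)|F|$, which is at most $(s+1+T_s)$ times the size of an optimal solution.

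For the running time, each iteration executes one invocation of the Guo et al.\ algorithm in $O(s(n+m))$ time and then removes at least one vertex from~$H$. Hence the loop runs at most $n$ times, yielding a total running time of $O(ns(n+m))$.

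The only mild subtlety, compared to the $s=2$ case, is that the per-iteration \FISG{} size is now $s+1+T_s$ rather than a constant, which is exactly what drives the factor in the approximation ratio; no new obstacle arises because \autoref{fisg-char} continues to guarantee that a \FISG{} is found iff $H$ is not an \pcg s, and the bound from Guo et al.\ on the size and running time of \FISG{}-search is already stated for arbitrary~$s$.
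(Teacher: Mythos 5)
Your proposal is correct and follows essentially the same route as the paper, which gives no separate argument for this lemma but explicitly defers to the proof of \autoref{find-X-time}: greedily extract vertex-disjoint \FISG{}s of at most $s+1+T_s$ vertices via the Guo et al.\ algorithm, lower-bound any optimal solution by the number of disjoint \FISG{}s found, and charge $O(s(n+m))$ per iteration over at most $n$ iterations. No gaps.
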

\begin{korollar}\label{X-boundary-s}
  Let $(G,k)$ be a yes-instance and let $X$ be a factor-$(s+1+T_s)$ approximate solution for~$G$. Then, $X$ contains $O(sk)$~vertices.
\end{korollar}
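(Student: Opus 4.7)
The plan is to invoke \autoref{find-X-time-s} exactly as \autoref{X-boundary} was derived from \autoref{find-X-time} in the $s=2$ case, and then to estimate the approximation factor $s+1+T_s$ asymptotically in~$s$.

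First I would observe that since $(G,k)$ is a yes-instance, by definition there exists a solution $S^{*}$ with $|S^{*}|\leq k$, that is, a vertex set of size at most~$k$ whose removal turns~$G$ into an \pcg s. Let $|S_{\text{opt}}|$ denote the size of a minimum such solution; clearly $|S_{\text{opt}}|\leq k$. By \autoref{find-X-time-s}, the set $X$ is a factor-$(s+1+T_s)$ approximate solution, so $|X|\leq (s+1+T_s)\cdot |S_{\text{opt}}|\leq (s+1+T_s)\cdot k$.

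It remains to bound $s+1+T_s$ asymptotically in~$s$. Recall that $T_s$ is defined as the maximum integer with $T_s\cdot(T_s+1)\leq s$. This immediately yields $T_s^{2}\leq T_s(T_s+1)\leq s$, hence $T_s\leq \sqrt{s}$. Consequently $s+1+T_s\leq s+1+\sqrt{s}\in O(s)$, and therefore $|X|\in O(sk)$.

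The argument is essentially a one-line calculation, so the only mild subtlety is making explicit that $T_s=O(\sqrt{s})$ so that $s+1+T_s=O(s)$; once this is noted, the corollary follows immediately from the approximation guarantee of \autoref{find-X-time-s} together with the existence of a size-$k$ solution guaranteed by $(G,k)$ being a yes-instance. No new machinery is required, and there is no real obstacle beyond recording this estimate.
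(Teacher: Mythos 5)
Your proposal is correct and matches the paper's (implicit) reasoning: the paper states this corollary without proof, as it follows immediately from the factor-$(s+1+T_s)$ approximation guarantee of \autoref{find-X-time-s} applied to an optimal solution of size at most~$k$, together with the observation $T_s\leq\sqrt s$, so $s+1+T_s\in O(s)$. Your explicit note that $T_s=O(\sqrt s)$ is exactly the small estimate the paper relies on when citing Guo et al.'s $O(s+\sqrt s)$-vertex forbidden subgraph bound.
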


\noindent We now construct a set that is peripheral with respect to a
solution~$X$. To this end, we modify \autoref{per}.

\begin{definition}\label{per-s}Let~$X$ be a solution. We call a vertex
  set~$M$ with the following properties \emph{peripheral (with
    respect to $X$)}:
  \begin{compactenum}
  \item\label{per-s1} For each vertex~$v\in X$, there are at most $s$ sets~$\hp\in\hv$ such that~$\hp\setminus M$ is adjacent~to~$v$.
  \item\label{per-s2} If there is a vertex~$v\in X$ and a set~$\hp\in\hv$ such that~$H\setminus M$ is adjacent to~$v$, then~$v$ is nonadjacent to at most $2s-3$ vertices in~$H\setminus M$.
  \item\label{per-s3} For each vertex~$v\in X$, if there is more than one set~$H\in\hv$ such that~$H\setminus M$ is adjacent to~$v$, then each such set~$H$ satisfies~$|\hp\setminus M|\leq 2s-2$.
  \end{compactenum}
\end{definition}

\noindent To construct a peripheral set, we proceed analogously to
\autoref{mark}: for each vertex~$v$ in a given solution~$X$, we find a
\FISG{}~$F$ including~$v$ that contains no vertices from~$M(v)$. Then, we add the vertices of~$F-\{v\}$ to~$M(v)$. We find such
\FISG{}s by three observations, each leading to one of three phases of
an algorithm that constructs the sets~$M(v)$.

We now turn to our first observation. Given a solution~$X$, assume that there exists a vertex~$v\in X$ and a set~${U\subseteq N(v)\setminus X}$ of~$s+1$ neighbors of~$v$ such that~$U$ contains a vertex~$u$ that is nonadjacent to~$U\setminus\{u\}$. Then, the vertex~$u$ is connected to every vertex in~$U$, because the vertices in~$U$ are neighbors of~$v$. The vertex~$u$ is nonadjacent to the~$s$ vertices in~$U\setminus\{u\}$. By \autoref{fisg-char}, the graph~$G[\{v\}\cup U]$ is a \FISG{}.
 
\begin{proc}[Phase 1]\label{mark-neighbors-s}
  \begin{samepage}
    Given a graph~$G$ and a solution~$X$, for each vertex~$v\in X$, let~$M(v)=\emptyset$. For each~$v\in X$, as long as there is a set~${U\subseteq N(v)\setminus (X\cup M(v))}$ such~that
    \begin{compactenum}

    \item $|U|=s+1$ and
    \item there exists a vertex~$u\in U$ that is nonadjacent to~$U\setminus\{u\}$,
    \end{compactenum}
    add the vertices in~$U$ to~$M(v)$.
  \end{samepage}
\end{proc}\setcounter{proc}{0}

\noindent Now, for each vertex~$v$ in the solution~$X$, let $M(v)$ be
the set constructed by Phase~1 of \autoref{mark-neighbors-s}. For a
vertex~$v\in X$, assume that there exists a set~$\hp\in\hv$ such
that~$v$ is adjacent to a vertex~$u\in H\setminus M(v)$. Further,
assume that the vertex~$v$ is nonadjacent to a set~$W\subseteq
H\setminus M(v)$ of $2s-2$ vertices. Then, the graph~$G[\{u\}\cup W]$
is an induced subgraph of the \pl s~$G[\hp]$ and contains $2s-1$
vertices. According to \autoref{disconnected-size}, it is
connected. The vertex~$v$ is, because it is a neighbor of~$u$ and because~$u$ is adjacent to~$W$, connected but nonadjacent to
the~$2s-2$ vertices in~$W$. By \autoref{fisg-char}, the
graph~$G[\{u,v\}\cup W]$ is a \FISG{}. We continue
\autoref{mark-neighbors-s} as follows:
 
\begin{proc}[Phase 2]
  \begin{samepage}
    For each~$v\in X$, as long as there is a set~$\hp\in\hv$ such~that
    \begin{compactenum}
    \item the vertex~$v$ is adjacent to a vertex~$u\in H\setminus M(v)$ and
    \item the vertex~$v$ is nonadjacent to a set~$W\subseteq H\setminus M(v)$ of vertices with~$|W|=2s-2$,
    \end{compactenum}
    add the vertex $u$ and the vertices in~$W$ to~$M(v)$.
  \end{samepage}
\end{proc}\setcounter{proc}{0}

\noindent Now, for each vertex~$v$ in the solution~$X$, let $M(v)$ be
the set constructed by Phase~1 and Phase~2 of
\autoref{mark-neighbors-s}. Assume that for a vertex~$v\in X$, there
are two sets~${U,W\in\hv}$ such that~$v$ is adjacent to the
vertices~$u\in U\setminus M(v)$ and~$w\in W\setminus M(v)$. Further,
assume that~$W\setminus M(v)$ contains at least $2s-1$ vertices. Then,
$G[W\setminus M(v)]$ is a 2-plex containing at least~$2s-1$
vertices. According to \autoref{disconnected-size}, it is
connected. The vertex~$u\in U\setminus M(v)$ is nonadjacent to
at least~$2s-1$ vertices in~$W\setminus M(v)$, but~$F:=G[\{u,v\}\cup W]$ is
connected. According to \autoref{fisg-char}, it is a \FISG{}.
 
\begin{proc}[Phase~3]
  For each~$v\in X$, as long as there are~$U,W\in\hv$~such~that
  \begin{compactenum}
  \item $|W\setminus M(v)|\geq 2s-1$ and
  \item the vertex~$v$ has neighbors~$u\in U\setminus M(v)$ and~$w\in W\setminus
  M(v)$,
  \end{compactenum}
add the vertices~$u$, $w$,
  and~$2s-2$ other vertices from~$W\setminus M(v)$ to~$M(v)$.
\end{proc}
 
\noindent Note that in contrast to \autoref{mark-neighbors}, Phase~2 and Phase~3 of \autoref{mark-neighbors-s} do not necessarily find minimal \FISG{}s. That is, there exist \FISG{}s found by Phase~2 and Phase~3 such that we could remove a vertex from them and they would still be \FISG{}s. For running time considerations, we construct \FISG{}s from parts of \pl{s}es that contain enough vertices to derive their connectedness from \autoref{disconnected-size}. Thus, we do not have to explicitly check whether the subgraphs that we find are connected.
 
\begin{lemma}\label{s-sets}
 \prem{mark-neighbors-s} The set~$M$ is peripheral with respect
  to~$X$.
\end{lemma}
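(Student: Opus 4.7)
The plan is to follow the proof of \autoref{two-sets} almost verbatim: the three phases of \autoref{mark-neighbors-s} have been tailored so that their stopping conditions correspond exactly to the three clauses of \autoref{per-s}. Because $M(v)\subseteq M$ for every $v\in X$, it suffices to prove that each individual~$M(v)$ satisfies clauses~(\ref{per-s1}), (\ref{per-s2}), and~(\ref{per-s3}); the same properties then transfer to~$M$. Each clause will be handled separately by contradiction with the termination of the corresponding phase.

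For clause~(\ref{per-s1}), I would suppose that some~$v\in X$ is adjacent to $s+1$ distinct sets in~$\hv$, picking a neighbor $u_i\in H_i\setminus M(v)$ in each. Vertices from distinct components of~$G-X$ are pairwise nonadjacent, so within $U:=\{u_1,\dots,u_{s+1}\}$ every vertex is nonadjacent to the other~$s$, meeting the precondition of Phase~1; hence all of~$U$ would have been added to~$M(v)$, a contradiction. For clause~(\ref{per-s2}), I would suppose that $v\in X$ is adjacent to some $u\in H\setminus M(v)$ and nonadjacent to a set $W\subseteq H\setminus M(v)$ with $|W|=2s-2$; the precondition of Phase~2 is satisfied for $v,u,W$, so $u$ together with~$W$ would have been added to~$M(v)$. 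For clause~(\ref{per-s3}), I would suppose that $v\in X$ has neighbors in both $U\setminus M(v)$ and $W\setminus M(v)$ for distinct $U,W\in\hv$ and, without loss of generality, $|W\setminus M(v)|\geq 2s-1$; then Phase~3 would have added a neighbor $u\in U\setminus M(v)$, a neighbor $w\in W\setminus M(v)$, and $2s-2$ further vertices of~$W\setminus M(v)$ to~$M(v)$, again a contradiction.

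The only non-routine point is to verify that the phase thresholds line up exactly with the bounds in \autoref{per-s}: clauses~(\ref{per-s2}) and~(\ref{per-s3}) demand at most $2s-3$ nonadjacent vertices and $|H\setminus M|\leq 2s-2$, respectively, matching the Phase~2 trigger $|W|=2s-2$ and the Phase~3 trigger $|W\setminus M(v)|\geq 2s-1$. These numbers are dictated by \autoref{disconnected-size}: a $(2s-1)$-vertex induced subgraph of an \pl s is connected, which is precisely what the justifications preceding Phases~2 and~3 invoke to certify that the induced subgraphs they construct are indeed \FISG{}s via \autoref{fisg-char}. Once this alignment is confirmed, the contradiction arguments from \autoref{two-sets} transfer mutatis mutandis, and I expect no further technical obstacle.
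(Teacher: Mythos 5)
Your proposal is correct and matches the paper's argument, which likewise reduces to showing that each~$M(v)$ satisfies the three clauses of \autoref{per-s} by contradiction with the termination conditions of the three phases of \autoref{mark-neighbors-s}, exactly as in the proof of \autoref{two-sets}. Your explicit check that the thresholds ($|W|=2s-2$ in Phase~2, $|W\setminus M(v)|\geq 2s-1$ in Phase~3) line up with the bounds $2s-3$ and $2s-2$ in \autoref{per-s} is precisely the detail the paper leaves implicit.
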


\begin{proof}
  The proof of this lemma is analogous to the proof of
  \autoref{two-sets}. For each vertex~${v\in X}$, the set~$M(v)$ satisfies
  all properties in \autoref{per-s}. This follows directly from the
  description of \autoref{mark-neighbors}.
\end{proof}
 
\begin{lemma}\label{mark-neighbors-time-s}
  Given a solution~$X$, \autoref{mark-neighbors-s} can be carried out in $O(|X|n^2)$~time.
\end{lemma}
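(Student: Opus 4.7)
The plan is to follow the template of the $s=2$ analysis in \autoref{mark-neighbors-time}. In $O(n+m)$ time one builds $G-X$, the collection $\hv$, and a vertex-to-component table $T$; it then suffices to show that each of the three phases of \autoref{mark-neighbors-s} runs in $O(n^2)$ time for each $v\in X$, yielding the claimed $O(|X|n^2)$ total.

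For Phase~1, for a fixed $v\in X$ I would maintain, for each $u\in N(v)\setminus X$, a counter storing the number of vertices of $N(v)\setminus(X\cup M(v))$ to which $u$ is nonadjacent. Initializing these counters takes $O(n^2)$. Whenever some counter reaches $s$, Phase~1 triggers: the witness $u$ together with $s$ of its current non-neighbors in $N(v)\setminus(X\cup M(v))$ form a valid set $U$, which can be extracted in $O(n)$ time. Updating the counters after the $s+1$ vertices of $U$ are added to $M(v)$ costs $O(n)$ per removed vertex, and since at most $n$ vertices are ever added to $M(v)$, the cumulative update cost is $O(n^2)$.

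For Phase~2, I would process each component $H\in\hv$ independently. For a fixed $H$, a single $O(|H|)$ scan detects whether some $u\in H\setminus M(v)$ is a neighbor of $v$ while $H\setminus M(v)$ contains $2s-2$ non-neighbors of $v$, and produces the witnessing $W$. Each trigger removes $2s-1$ vertices from $H\setminus M(v)$, so $H$ generates at most $|H|/(2s-1)$ triggers, contributing $O(|H|^2/s)$ work. Since the components are disjoint, $\sum_{H\in\hv}|H|^2\leq n^2$, which gives $O(n^2/s)\subseteq O(n^2)$ per $v$.

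For Phase~3, the structural shortcut of \autoref{two-neighbors} (which in the $s=2$ case forced $|N(v)\setminus(M(v)\cup X)|=2$) does not generalize cleanly, so I would amortize instead: each Phase~3 trigger adds $2s$ vertices to $M(v)$, so at most $O(n/s)$ triggers occur per $v$. Each trigger can be resolved in $O(n)$ time by a single pass through $\hv$ that records those components containing a neighbor of $v$ outside $M(v)$ and those with $|W\setminus M(v)|\geq 2s-1$, and then pairs any matching $U,W$. This yields $O(n^2/s)\subseteq O(n^2)$ per $v$. Summing the per-phase bounds across $v\in X$ gives the desired $O(|X|n^2)$ running time. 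The main obstacle is precisely Phase~3: one must give up the constant-time-per-vertex argument of the $s=2$ case and rely on the $\Theta(s)$-fold shrinkage of the candidate pool per iteration to amortize the linear-time rescan.
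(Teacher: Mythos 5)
Your proposal is correct and takes essentially the same route as the paper: the same $O(n+m)$ preprocessing of $G-X$, $\hv$, and the vertex-to-component table, followed by an $O(n^2)$ bound per vertex of~$X$ for each of the three phases, with the $s=2$ shortcut of \autoref{two-neighbors} abandoned for Phase~3. The paper is terser---it reuses the scans from \autoref{mark-neighbors-time} for Phases~1 and~2 and charges $O(n)$ per neighbor of~$v$ in Phase~3---whereas you spell out the amortization over triggers explicitly, but the underlying argument is the same.
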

 
\begin{proof}
  The running times of Phase~1 and Phase~2 of \autoref{mark-neighbors-s} can be proven in the same way as for \autoref{mark-neighbors} in \autoref{mark-neighbors-time}. We only prove the running time of the modified Phase~3. First, we construct for each vertex~$v\in X$ the set~${N(v)\setminus(M(v)\cup X)}$. The proof of \autoref{mark-neighbors-time} shows how this can be done in~$O(n)$ time. For each vertex~${u\in N(v)\setminus(M(v)\cup X)}$, we can determine the set~$\hp\in\hv$ with~$u\in\hp$ in constant time, as seen in the proof of \autoref{mark-neighbors-time}. Counting the elements in~$\hp\setminus M(v)$ takes at most~$O(n)$ time.  This yields a running time of~$O(|X|n^2)$ for Phase~3 of \autoref{mark-neighbors-s}.
\end{proof}
 
\section{Adapted Data Reduction Rules and Bounds}\label{bounds-s}
\noindent Given an \pvd s instance $(G,k)$ and a solution~$X$ for~$G$,
we now bound the number and the sizes of the connected
components in~$G-X$. To this end, we first revise
\autoref{high-occurence-fast} as shown below.

\begin{rul}\label{high-occurence-fast-s}
  \premv{mark-neighbors-s} If there exists a vertex~$v\in X$ such that~$|M(v)|>2sk$, then delete~$v$ from~$G$ and~$X$ and decrement~$k$ by one.
\end{rul}

\begin{lemma}\label{high-occurence-time-s}
  \autoref{high-occurence-fast-s} is correct. Given a solution~$X$ and the set~$M(v)$ constructed by \autoref{mark-neighbors-s} for each vertex~$v\in X$, we can exhaustively apply \autoref{high-occurence-fast-s} in $O(|X|n+m)$~time.
\end{lemma}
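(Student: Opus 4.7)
The plan is to mirror the proof of \autoref{high-occurence-time}, adapting the vertex-count per newly found \FISG{} to the three phases of \autoref{mark-neighbors-s}. For correctness, I would first observe that each time \autoref{mark-neighbors-s} augments~$M(v)$ it does so by the vertex set (minus $v$) of a \FISG{} that contains~$v$ but no other vertex from the current~$M(v)$; hence any two distinct \FISG{}s discovered for the same~$v\in X$ pairwise intersect only in~$v$.

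Next, I would bound the number of vertices added to $M(v)$ per \FISG{} by inspecting each phase separately: Phase~1 adds the $s+1$ vertices of~$U$, Phase~2 adds the vertex~$u$ together with the $2s-2$ vertices of~$W$ (so $2s-1$ vertices), and Phase~3 adds~$u$,~$w$, and $2s-2$ further vertices of~$W\setminus M(v)$ (so $2s$ vertices). The worst case is therefore $2s$ vertices per \FISG{}. Consequently $|M(v)|>2sk$ implies that strictly more than $k$ \FISG{}s pairwise intersecting only in~$v$ have been found, and applying the straightforward generalization of \autoref{high-occurence} to \pvd s --- whose proof makes no use of $s=2$ and rests solely on the pigeonhole argument that a size-$\leq k$ solution must hit each of these $k+1$ \FISG{}s, and therefore must contain~$v$ --- justifies deleting~$v$ from~$G$ and~$X$ and decrementing~$k$.

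For the running time, I would proceed exactly as in the proof of \autoref{high-occurence-time}: for each vertex~$v\in X$ the cardinality $|M(v)|$ can be determined in $O(n)$~time, contributing $O(|X|n)$ in total. Once the vertices with~$|M(v)|>2sk$ have been identified, all of them can be removed from~$G$ together with their incident edges in a single pass in $O(n+m)$~time, and the updates to~$X$ and~$k$ are absorbed into this cost. Summing the two contributions yields the claimed $O(|X|n+m)$~bound. No new algorithmic ideas beyond those used for \autoref{high-occurence-time} should be required; the only place care is needed is simply getting the per-phase vertex counts right so that $2s$ is indeed the correct divisor in the threshold~$2sk$.
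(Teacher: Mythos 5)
Your proposal is correct and follows essentially the same approach as the paper: bound the number of vertices added to $M(v)$ per discovered \FISG{} by $2s$ (your per-phase count of $s+1$, $2s-1$, and $2s$ is right), invoke \autoref{high-occurence} for the deletion of~$v$, and reuse the counting-plus-single-deletion-pass running time argument of \autoref{high-occurence-time}. Your explicit remark that \autoref{high-occurence} generalizes verbatim from $s=2$ to arbitrary~$s$ is a detail the paper leaves implicit, but it does not change the argument.
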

 
\begin{proof}
  For each vertex~$v$ in a solution~$X$, \autoref{mark-neighbors-s} adds at most~$2s$ vertices to~$M(v)$ for each found \FISG{}. If a vertex~$v\in X$ satisfies $|M(v)|>2sk$, then more than~$k$ \FISG{}s pairwisely intersect only in~$v$. According to \autoref{high-occurence}, we can delete~$v$ from~$G$ and decrement~$k$ by one. The running time can be shown analogously to \autoref{high-occurence-time}.
\end{proof}

\begin{korollar}\label{size-M-s}
  Let~$X$ be a solution for~$G$. For each~$v\in X$, let~$M(v)$ be the
  set constructed by \autoref{mark-neighbors-s}. After exhaustively
  applying \autoref{high-occurence-fast} to~$G$ and~$X$, the peripheral set
  $M:=\bigcup_{v\in X} M(v)$ contains at most $2sk|X|$ vertices.
\end{korollar}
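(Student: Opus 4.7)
The plan is to follow the blueprint of \autoref{size-M} almost verbatim, adjusting only the constants to account for the general $s$. First I would establish the invariance property that was already used implicitly in the $s=2$ case: each application of \autoref{high-occurence-fast-s} deletes some $v'\in X$ with $|M(v')|>2sk$ from both $G$ and $X$, but because $v'\in X$ the deletion leaves $G-X$ and the collection $\hv$ unchanged, and likewise leaves the ``outside-$X$'' neighborhoods $N(v)\setminus X$ untouched for every remaining $v\in X$. Since the three phases of \autoref{mark-neighbors-s} only inspect such neighborhoods and the sets in $\hv$, the $M(v)$ produced for every surviving $v$ is exactly the one the algorithm would compute on the reduced instance.

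With this invariance in place, the bound on $|M(v)|$ for each surviving $v$ is immediate: once \autoref{high-occurence-fast-s} is no longer applicable, every remaining vertex $v\in X$ must satisfy $|M(v)|\leq 2sk$, for otherwise the rule would still fire on $v$. The threshold $2sk$ is chosen exactly so that $|M(v)|/2s$ is a valid lower bound on the number of FISGs that pairwisely intersect only in $v$: Phase~1 adds an $(s+1)$-set $U$ per discovered FISG, Phase~2 adds $u$ together with a $(2s-2)$-set $W$, and Phase~3 adds $u$, $w$ and $2s-2$ further vertices of $W\setminus M(v)$, so every iteration of \autoref{mark-neighbors-s} enlarges $M(v)$ by at most $2s$ vertices.

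Finally, I would combine both observations via a trivial union bound
\[
|M| \;=\; \Bigl|\bigcup_{v\in X} M(v)\Bigr| \;\leq\; \sum_{v\in X}|M(v)| \;\leq\; 2sk\cdot|X|,
\]
which is exactly the claim. I do not expect any genuine obstacle: the only points that require care are (i) confirming that \autoref{high-occurence-fast-s} truly does not perturb the $M(v)$ of the surviving vertices and (ii) re-reading the three phases of \autoref{mark-neighbors-s} to verify the per-iteration cap of $2s$ vertices. Both are structural inspections rather than substantive proofs, so the general-$s$ corollary follows from the $s=2$ argument with only the constant $3$ replaced by $2s$.
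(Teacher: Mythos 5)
Your proposal is correct and matches the paper's (largely implicit) argument: the paper likewise relies on the observation that deleting a vertex of~$X$ leaves $G-X$, $\hv$, and the sets~$M(v)$ of the surviving vertices untouched, notes that each iteration of \autoref{mark-neighbors-s} adds at most~$2s$ vertices to~$M(v)$, and then concludes that after exhaustive application every remaining~$v\in X$ has $|M(v)|\leq 2sk$, so the union bound gives $|M|\leq 2sk|X|$. Your extra care about the invariance and the per-phase vertex counts is exactly the justification the paper spells out for the $s=2$ case and reuses here.
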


\noindent Given a graph~$G$ and a solution~$X$, we can apply
\autoref{isolated} without any changes compared to the case~$s=2$. As
we have seen in \autoref{unmarked-bounds}, \autoref{sets-in-hm} and
\autoref{per} then bound the number of connected components
in~$G-X$. It is left to bound their sizes.  To this end, we only need
to slightly change \autoref{rul:goodconnected} and
\autoref{ruecken}. Recall that the connected components in~$G-X$ are
induced by sets in the collection~$\hv$. We start with a revision of
\autoref{rul:goodconnected}:
 
\begin{rul}\label{rul:goodconnected-s}
  Let~$X$ be a solution, let $\hp\in\hv$ and let $R(\hp)$ be a
  redundant subset of~$H$ as defined in \autoref{redundant-def}. If $|R(\hp)|> k+2s-1$, choose an arbitrary vertex
  from~$R(\hp)$ and remove it from~$G$.
\end{rul}

\noindent For the correctness proof of
\autoref{rul:goodconnected-s}, observe that \autoref{connected} and \autoref{in-hi} are still valid if we prove them under the following assumption instead of proving them under \autoref{working-asu}:

\begin{asu}\label{working-asu-s}
  Let~$X$ be a solution and let $R(\hp)$~be a redundant subset of ${\hp\in\hv}$. Assume that \autoref{rul:goodconnected-s} chooses to remove ${u\in R(\hp)}$ from~$G$.  Further, assume that there exists a solution~$S$ with $|S|\leq k$ for the graph~$G-\{u\}$.
\end{asu}

\noindent \autoref{working-asu-s} implies that $|R(\hp)\setminus(S\cup\{u\})|\geq 2s-1$; otherwise, \autoref{rul:goodconnected-s} could not have been applied. Because~$G[H]$ is a 2-plex, $G[R(\hp)\setminus(S\cup\{u\})]$~is connected. The graph $G[\hp\setminus(S\cup\{u\})]$ is connected for the same reason.  In the following, we write $G'$~for~$G-\{u\}$. 

\begin{lemma}\label{goodconnected-correct-s}
  \autoref{rul:goodconnected-s} is
  correct.
\end{lemma}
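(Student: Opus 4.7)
The plan is to mirror the proof of \autoref{goodconnected-correct} almost verbatim, with one key quantitative substitution: wherever the $s=2$ proof exploits that a \pl2 on three vertices is connected, the general proof uses \autoref{disconnected-size}, which guarantees that an \pl s on $2s-1$ vertices is connected. Since \autoref{rul:goodconnected-s} only fires when $|R(\hp)|>k+2s-1$, \autoref{working-asu-s} yields $|R(\hp)\setminus(S\cup\{u\})|\geq 2s-1$, so that $G[R(\hp)\setminus(S\cup\{u\})]$ and hence $G[\hp\setminus(S\cup\{u\})]$ are both connected \pl ses; with this substitution, the statements and proofs of \autoref{connected} and \autoref{in-hi} carry over essentially unchanged to the \pvd s setting under \autoref{working-asu-s}.

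The forward direction is immediate: any solution $S$ for $G$ is also a solution for $G-\{u\}$, since $G-\{u\}-S$ is an induced subgraph of the \pcg s~$G-S$. For the reverse direction, I would assume that $(G-\{u\},k)$ is a yes-instance with solution~$S$, write $G':=G-\{u\}$, and suppose for a contradiction that $G-S$ is not an \pcg s. The obvious \pl s analogue of \autoref{FISG-has-v} (with an identical proof) then yields a \FISG{}~$F$ in $G-S$ containing~$u$, and by \autoref{fisg-char}, $F$ contains a vertex~$v$ that is connected but nonadjacent to $s$ vertices $w_1,\dots,w_s$ of~$F$. I would then split into three cases according to the role of~$u$ in this \FISG{}.

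If $u\notin\{v,w_1,\dots,w_s\}$, then \autoref{connected} shows that each of $v,w_1,\dots,w_s$ is connected in $G'-S$ to every vertex of $\hp\setminus(S\cup\{u\})$; since the non-adjacencies between $v$ and the $w_i$ survive the removal of~$u$, this exhibits a \FISG{} in $G'-S$, contradicting $S$ being a solution for~$G'$. If $u=v$, then $u\in R(\hp)$ is nonadjacent to each~$w_i$, so applying \autoref{in-hi} to each $w_i$ (playing the role of the vertex $v$ in that lemma, with the witnessing vertex of $Z(\hp)\setminus S$ taken to be $u$) forces $w_i\in \hp\setminus S$ for all~$i$; together with $u\in \hp\setminus S$ this makes $u$ a vertex with $s$ non-neighbors in $G[\hp\setminus S]$, contradicting that $G[\hp\setminus S]$ is an \pl s. Finally, if $u=w_j$ for some~$j$, then $u\in R(\hp)$ is nonadjacent to~$v$, so \autoref{in-hi} first gives $v\in \hp\setminus S$, and by \autoref{redundant-def} there exists an \md X $Z(\hp)$ with $R(\hp)\subseteq Z(\hp)\subseteq \hp$ and $v\in Z(\hp)$. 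For each $i\neq j$, the vertex~$w_i$ is nonadjacent to $v\in Z(\hp)\setminus S$, so a further application of \autoref{in-hi} yields $w_i\in \hp\setminus S$; combined with $u\in \hp\setminus S$, the vertex~$v$ now has $s$ non-neighbors inside $\hp\setminus S$, again contradicting that $G[\hp\setminus S]$ is an \pl s.

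The main obstacle is the careful re-verification of \autoref{connected} and \autoref{in-hi} under \autoref{working-asu-s}: their original proofs derive connectedness of $G[R(\hp)\setminus(S\cup\{u\})]$ and $G[\hp\setminus(S\cup\{u\})]$ from the fact that these induced \pl2es contain at least three vertices. In the \pl s setting I would replace ``three vertices'' by ``$2s-1$ vertices'' throughout and invoke \autoref{disconnected-size} to derive the same connectedness conclusions; the rest of each argument, including the reasoning that a vertex of $X$ adjacent to one vertex of the \md X $Z(\hp)\setminus S$ must be adjacent to all of them, is insensitive to the value of~$s$. Once these quantitative adjustments are in place, the case distinction above is structurally identical to that of \autoref{goodconnected-correct}.
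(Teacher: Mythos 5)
Your proposal matches the paper's proof essentially verbatim: the same forward direction, the same use of Lemma~\ref{FISG-has-v}, and the same three-case analysis on whether $u$ is the non-adjacent vertex $v$, one of its $s$ non-neighbors, or neither, with Lemmas~\ref{connected} and~\ref{in-hi} re-established under Assumption~\ref{working-asu-s} via the $2s-1$-vertex connectivity bound of Lemma~\ref{disconnected-size}. No gaps; this is the paper's argument.
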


\begin{proof}
  We have to show that~$(G,k)$ is a yes-instance if and only
  if~$(G',k)$ is a yes-instance. If $(G,k)$ is a yes-instance, then there
  exists a solution~$S$ with~$|S|\leq k$ such that $G-S$~is an \pcg
  s. Clearly, then also $G'-S$~is an \pcg s and $(G',k)$~is a
  yes-instance.

  If $(G',k)$~is a yes-instance, then there exists a solution~$S$ with~$|S|\leq k$ such that $G'-S$~is an \pcg s, implying that \autoref{working-asu-s} is true. Assume that $G-S$~contains a \FISG{}. By \autoref{FISG-has-v}, there exists a \FISG{}~$F$ in~$G-S$ containing the vertex~$u$.  Because~$F$ is a \FISG{}, it contains a vertex~$v$ that is connected but nonadjacent to a set~$W$ of $s$ other vertices in~$F$.

  If~$u\notin\{v\}\cup W$, then \autoref{connected} shows that the
  vertices in~$\{v\}\cup W$ are connected to all vertices
  in~$\hp\setminus(S\cup\{u\})$. Thus, the vertices in~$\{u\}\cup W$
  would exist in~$G'-S$ and would be connected. That contradicts
  $G'-S$~being an \pcg s, because~$v$ is nonadjacent but connected to
  the~$s$ vertices in~$W$. Thus, we have $u\in\{v\}\cup W$.

  First, assume that $u=v$. That is, the vertex~$u\in R(\hp)$ is
  nonadjacent to~$W$. From \autoref{in-hi}, we can conclude
  that~$W\subseteq \hp\setminus S$. Because also~$u\in\hp\setminus S$,
  this contradicts the graph $G[\hp\setminus S]$ being an \pl s. Thus,
  we have~$u\in W$.

  Because~$u\in W$, we have that~$u$ is nonadjacent to~$v$. From \autoref{in-hi}, we conclude that $v\in\hp\setminus S$. By \autoref{redundant-def}, there exists an \md X $Z(H)$ with~$R(H)\subseteq Z(H)\subseteq H$ and~$v\in Z(H)$, because the vertex~$v\in H\setminus S$ is nonadjacent to the vertex~$u\in R(H)$. But then, because the vertex~$v\in Z(H)$ is nonadjacent to~$W$, the vertices in~$W$ must also be in~$H\setminus S$ by \autoref{in-hi}. Because also~$v$ is in $H\setminus S$, this again contradicts~$G[\hp\setminus S]$ being an \pl s. We conclude that $G-S$~must be a \pcg s.  Thus, $(G,k)$~is a yes-instance.
\end{proof}

\noindent We employ \autoref{construct-D} to construct redundant
sets. For a solution~$X$, the bound on the number
of vertices in a connected component in~$G-X$ then changes as follows:
 
\begin{lemma}\label{h1-vertices-s}
  Let the set~$M$ be peripheral with respect to a solution~$X$.  For
  a vertex set~${\hp\in\hv}$, let~$R(H)$ be the redundant subset constructed by
  \autoref{construct-D}. After exhaustively applying
  \autoref{rul:goodconnected-s} using~$R(H)$, the number of vertices
  in~$\hp\setminus M$ is $O(s|\hp\cap
  M|+s^2|\nx{\hp\setminus M}|+k)$.
\end{lemma}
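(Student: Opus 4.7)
The plan is to follow the proof of \autoref{h1-vertices} for the case $s=2$, replacing the constants coming from the $2$-plex structure by their $s$-plex counterparts. The two ingredients I will need are that every vertex of an $s$-plex is nonadjacent to at most $s-1$ other vertices of the plex, and that by \autoref{per-s}(\ref{per-s2}) every $v \in X$ adjacent to $H \setminus M$ is nonadjacent to at most $2s-3$ vertices of $H\setminus M$.

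By construction of \autoref{construct-D}, $R(H) = H \setminus \bar R(H)$ with $\bar R(H) = A(H) \cup B(H) \cup C(H) \cup (H\cap M)$, so $|H\setminus M| = |R(H)| + |\bar R(H)\setminus M|$. Exhaustively applying \autoref{rul:goodconnected-s} bounds $|R(H)|$ by $k + 2s - 1$, which supplies the additive $k$ term. The remaining work is to bound $|\bar R(H)\setminus M|$ by $|A(H)| + |B(H)\setminus M| + |C(H)\setminus M|$.

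The three summands admit the following bounds. Since $G[H]$ is an $s$-plex, each $w\in H\cap M$ has at most $s-1$ non-neighbors in $H$, so $|A(H)| \leq (s-1)|H\cap M|$. By \autoref{per-s}(\ref{per-s2}), $|B(H)\setminus M| \leq (2s-3)|\nx{H\setminus M}|$. For $|C(H)\setminus M|$, applying the $s$-plex bound to each $w\in B(H)$ yields $|C(H)| \leq (s-1)|B(H)|$; since the peripheral property only controls $B(H)\setminus M$, the cardinality $|B(H)\cap M|$ has to be handled via $|B(H)\cap M| \leq |H\cap M|$, giving $|C(H)\setminus M| \leq (s-1)(|H\cap M| + (2s-3)|\nx{H\setminus M}|)$.

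Summing the three contributions and adding $|R(H)| \leq k + 2s - 1$ yields $2(s-1)|H\cap M| + s(2s-3)|\nx{H\setminus M}| + k + 2s - 1$, which lies in $O(s|H\cap M| + s^2|\nx{H\setminus M}| + k)$. The main subtlety is the $|C(H)|$ argument: since \autoref{per-s}(\ref{per-s2}) provides no control over $B(H)\cap M$, one must carefully split $B(H)$ so that its $M$-part is absorbed into the $|H\cap M|$ term rather than spuriously inflating the coefficient on $|\nx{H\setminus M}|$.
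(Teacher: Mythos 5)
Your proof follows essentially the same route as the paper: bound $|A(\hp)|$ by the $s$-plex property, $|B(\hp)|$ by \autoref{per-s}(\ref{per-s2}), $|C(\hp)|$ by another application of the $s$-plex property, and $|R(\hp)|$ by the threshold $k+2s-1$ of \autoref{rul:goodconnected-s}. Your extra care in splitting off $B(\hp)\cap M$ (which \autoref{per-s}(\ref{per-s2}) indeed does not control) is a slight refinement over the paper's wording, but it only shifts an $O(s|\hp\cap M|)$ term and leaves the asymptotic bound unchanged, so the argument is correct and matches the paper's proof.
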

 
\begin{proof}
  To prove the above lemma, we study the sets constructed in \autoref{construct-D}. By construction of~$R(\hp)$, we have~$R(\hp)=\hp\setminus\bar R(\hp)$. Observe that because~$\bar R(\hp)\subseteq \hp$, we also have~$\hp\setminus R(\hp)=\bar R(\hp)$. Because~$G[\hp]$ is an \pl s, there exist at most~$s-1$ vertices~$u\in\hp$ for every vertex~$w\in \hp\cap M$ such that~$u$ and~$w$ are nonadjacent. Thus, we have $|A(\hp)|\in O(s|\hp\cap M|)$.  Because~$M$ is peripheral, we can conclude from \autoref{per-s} that for each vertex~$w\in \nx{\hp\setminus M}$, there are at most ${2s-3}$~vertices~$u\in \hp$ such that~$u$ and~$w$ are nonadjacent. Thus, we have $|B(\hp)|\in O(s|\nx{\hp\setminus M}|)$.  Now, again because $G[\hp]$~is an \pl s, there exist at most $s-1$~vertices~$u\in\hp$ for every vertex~$w\in B(\hp)$ such that~$u$ and~$w$ are nonadjacent. Thus, we have $|C(\hp)|\in O(s|B(\hp)|)\subseteq O(s^2|\nx{\hp\setminus M}|)$.  This shows that the number of vertices in~$\hp\setminus (R(\hp)\cup M)$ is $O(s|\hp\cap M|+s^2|\nx{\hp\setminus M}|)$.  After applying \autoref{rul:goodconnected}, the number of vertices in~$R(H)$ is $O(k)$.
\end{proof}

\noindent Let~$M$ be peripheral with respect to a solution~$X$. We now revise \autoref{ruecken} to reduce the sizes of the connected components in~$G-X$ that are induced by sets~${H\in\hv}$ such that~$H\setminus M$ is nonadjacent to~$X$. Refer to \autoref{fig:h0} for an example.

\begin{rul}\label{ruecken-s}
  Let~$X$ be a solution and let~$H\in\hv$. Given a vertex set~$M$ such that~$H\setminus M$ is nonadjacent to~$X$, if $|\hp\setminus M|>|\hp\cap M|+2s-3$, then choose a vertex from $\hp\setminus M$ and remove it from~$G$.
\end{rul}

\begin{lemma}\label{ruecken-time-s}
  \autoref{ruecken-s} is correct. Given a vertex set~$M$, we can exhaustively apply \autoref{ruecken-s} in $O(n+m)$~time.
\end{lemma}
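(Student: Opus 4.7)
The plan is to mirror the proof of \autoref{ruecken-time}, generalising its two key auxiliary results (\autoref{ruecken-lemma} and \autoref{exist-sol}) to arbitrary $s$ by invoking \autoref{disconnected-size} wherever the $s=2$ argument relied on 2-plexes with at least three vertices being connected.

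First I would prove the $s$-plex version of \autoref{ruecken-lemma}: assuming $|H\setminus M|\geq |H\cap M|+2s-3$ and that a solution $S$ does not destroy all of $E(H,X)$, we must have $|S\cap (H\setminus M)|\geq |H\cap M|$. The argument: any undestroyed edge must have the form $\{h,v\}$ with $h\in H\cap M\setminus S$ and $v\in X\setminus S$, since $H\setminus M$ is nonadjacent to $X$. If $|S\cap (H\setminus M)|<|H\cap M|$, then $|H\setminus (S\cup M)|\geq 2s-2$ and hence $|H\setminus S|\geq 2s-1$. Because $G[H\setminus S]$ is itself an $s$-plex (induced subgraphs of $s$-plexes are $s$-plexes), \autoref{disconnected-size} makes it connected, so in $G-S$ the vertex $v$ is connected through $h$ to at least $2s-2\geq s$ non-neighbours in $H\setminus (S\cup M)$; \autoref{fisg-char} then forces a \FISG{} in $G-S$, contradicting $S$ being a solution.

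From this I would obtain the analog of \autoref{exist-sol} via the same swap as before: set $S':=(S\setminus (H\setminus M))\cup (H\cap M)$, which destroys $E(H,X)$ (every such edge has an endpoint in $H\cap M\subseteq S'$) and satisfies $|S'|\leq |S|$; it remains a solution because $G-S'$ decomposes into the isolated $s$-plex $G[H\setminus M]$ (its only outside neighbours lie in $H\cap M$ or in $X$, both removed or nonadjacent) and the induced subgraph $G[V\setminus (S\cup H)]$ of $G-S$, which is an $s$-plex cluster graph by heredity. Correctness of \autoref{ruecken-s} then follows the template of \autoref{ruecken-time}: writing $u$ for the removed vertex and $G':=G-\{u\}$, the forward direction is trivial, and in the backward direction a solution $S$ of size at most $k$ for $G'$ can be transformed, by applying the exist-sol analog inside $G'$ to $H':=H\setminus\{u\}$ (whose sizes satisfy $|H'\setminus M|\geq |H'\cap M|+2s-3$ by the rule's precondition), into a solution $S'$ of $G'$ with $|S'|\leq k$ destroying $E(H',X)=E(H,X)$ (equality because $u$ is nonadjacent to $X$). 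If $G-S'$ were not an $s$-plex cluster graph, \autoref{FISG-has-v} would supply a \FISG{} $F\ni u$ in $G-S'$ that must also contain some $v\in X\setminus S'$ (as $X$ is a solution); but with $E(H,X)$ entirely destroyed, the connected component of $u$ in $G-S'$ stays inside $H$ and cannot reach $v$, contradicting $F$ being connected.

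The running-time bound follows the proof of \autoref{ruecken-time} verbatim: compute $\hv$ in $O(n+m)$ time, mark every neighbour of $X$ in a table in $O(n+m)$ time, and then for each $H\in \hv$ scan its vertices once to count $|H\cap M|$, $|H\setminus M|$ and to detect any marked vertex in $H\setminus M$ (which would mean the rule does not apply); the final deletions take an additional $O(n+m)$ pass. The main obstacle I anticipate is pinning down the exact threshold: the bound $|H\setminus M|>|H\cap M|+2s-3$ is calibrated precisely so that after one deletion the remaining surplus of $2s-3$, together with the single endpoint $h\in H\cap M\setminus S$, lifts $|H\setminus S|$ up to $2s-1$, which is the smallest value for which \autoref{disconnected-size} can be invoked to force the \FISG{}; any weaker bound would leave a case in which $G[H\setminus S]$ is disconnected and the contradiction step fails.
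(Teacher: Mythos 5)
Your proposal is correct and follows essentially the same route as the paper, which proves \autoref{ruecken-time-s} exactly by generalizing \autoref{ruecken-lemma} (a solution not destroying all of $E(\hp,X)$ must contain at least $|\hp\setminus M|-(2s-3)$ vertices of $\hp\setminus M$, via \autoref{disconnected-size} and \autoref{fisg-char}) and \autoref{exist-sol} (the swap with $\hp\cap M$ under the threshold $|\hp\setminus M|\geq|\hp\cap M|+2s-3$), and then repeating the correctness and $O(n+m)$ running-time arguments of \autoref{ruecken-time}. Your write-up in fact supplies the details the paper leaves as ``analogous,'' including the careful check that the rule's precondition still meets the threshold after removing~$u$.
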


\begin{proof}
  Let~$S$ be a solution. First, observe that analogous to the proof of \autoref{ruecken-lemma}, we can show that if~$S$ does not destroy all edges between vertices in~$H$ and~$X$, then it must contain at least~$|\hp\setminus M|-(2s-3)$ vertices from~$\hp\setminus M$. If~$|\hp\setminus M|\geq|\hp\cap M|+2s-3$, then we can analogously to the proof of \autoref{exist-sol} find a solution~$S'$ with~$|S'|\leq|S|$ that destroys all edges between vertices in~$H$ and~$X$. From this, \autoref{ruecken-time-s} follows analogously to \autoref{ruecken-time}.
\end{proof}

\begin{korollar}\label{h0-vertices-s}
  Let~$X$ be a solution. Assume that there is a vertex set~$M$ and a set~${\hp\in\hv}$ such that~$H\setminus M$ is nonadjacent to~$X$.  After exhaustively applying \autoref{ruecken-s} given~$M$, the number of vertices in~$\hp\setminus M$ is $O(s+|\hp\cap M|)$.
\end{korollar}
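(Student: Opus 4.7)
The plan is to observe that this corollary is essentially the contrapositive statement of the termination condition of \autoref{ruecken-s}. Since the rule has already been shown to be correct in \autoref{ruecken-time-s}, exhaustive application terminates when the applicability condition fails for every $H \in \hv$ with $H \setminus M$ nonadjacent to $X$.

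More concretely, I would argue as follows. Fix a set $H \in \hv$ such that $H \setminus M$ is nonadjacent to $X$. The rule \autoref{ruecken-s} is applicable to $H$ precisely when $|H \setminus M| > |H \cap M| + 2s - 3$. Each application removes one vertex from $H \setminus M$ while leaving $H \cap M$ unchanged; moreover, removing a single vertex from $H \setminus M$ does not destroy the property that $H \setminus M$ is nonadjacent to $X$, nor does it move $H$ out of $\hv$ (the resulting set, after the deletion, still corresponds to a subset of the connected component). Hence the rule can be applied to $H$ until $|H \setminus M| \leq |H \cap M| + 2s - 3$, at which point it is no longer applicable to $H$. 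This yields $|H \setminus M| \in O(s + |H \cap M|)$ as claimed.

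The only mild subtlety is verifying that removing a vertex from $H \setminus M$ does not accidentally enlarge $H \setminus M$ relative to $X$ in a way that would invalidate the invariants assumed by the rule: since $M$ stays fixed and no vertex is moved into $M$, and since deleting a vertex can only disconnect but not connect components, the hypothesis ``$H \setminus M$ nonadjacent to $X$'' is preserved on whatever connected component of the reduced graph contains the remaining vertices. Once all such sets $H$ have been processed exhaustively, the stated bound holds uniformly, completing the proof. No significant obstacle is anticipated, as the argument is a direct readout of the rule's threshold.
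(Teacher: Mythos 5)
Your argument is correct and matches the paper's intent: the corollary is stated there without proof precisely because, as you observe, it is an immediate readout of the applicability threshold of \autoref{ruecken-s} — once the rule is no longer applicable, $|\hp\setminus M|\leq|\hp\cap M|+2s-3\in O(s+|\hp\cap M|)$. Your extra check that deleting a vertex from $\hp\setminus M$ preserves the nonadjacency hypothesis is a reasonable (if routine) addition and does not change the argument.
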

 
\section{Kernel Size}
\label{result-s}
Given an \pvd s instance $(G,k)$, we now give a bound on the number of vertices in~$G$ after all data reduction rules have been applied. Given a solution~$X$, recall that for a connected component in~$G-X$ that is induced by a set~${H\in\hneg}$, the set~$H\setminus M$ is nonadjacent to~$X$ (cf.\,\autoref{fig:h0}); for a vertex set~${H\in\hpos}$, the set~$H\setminus M$ is adjacent to~$X$ (cf.\,\autoref{fig:h1}). We handle connected components induced by vertex sets in~$\hpos$ and~$\hneg$ separately.

\begin{lemma}\label{nx-sum-s}
Let the set~$M$ be peripheral with respect to a solution~$X$.
For the sets
  \begin{align*}
    X_1 &:=\{v\in X\mid\text{there is exactly one set~$\hp\in\hv$ such that~$\hp\setminus M$ is adjacent to~$v$}\}\text{,}\\
    X_2 &:= X\setminus X_1\text{ and}\\
    \mathcal{\tilde H} &:= \{\hp\in\hpos\mid \hp\setminus M \text{ is adjacent to only vertices in }X_1\}\text{,}
  \end{align*}
  the following relations hold:
  \begin{align*}
    \sum_{\hp\in\mathcal{\tilde H}}|\nx{\hp\setminus M}|&=|X_1| &&\text{and}& |\mathcal{\tilde H}|&\leq|X_1| &&\text{and}& |\hpos\setminus\mathcal{\tilde H}|&\leq s|X_2|.
  \end{align*}
\end{lemma}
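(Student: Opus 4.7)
The plan is to follow the structure of the proof of \autoref{nx-sum}, with the only real change being that the updated \autoref{per-s}(\ref{per-s1}) now allows up to $s$ adjacent sets per vertex in~$X$, which exactly accounts for the factor of $s$ in the third relation (replacing the factor of~$2$ from the $s=2$ case).

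For the first relation, I would fix $H\in\mathcal{\tilde H}$ and use the defining property of~$\mathcal{\tilde H}$ to conclude $\nx{H\setminus M}\subseteq X_1$. Then for any $v\in X_1$, the definition of $X_1$ guarantees that exactly one set $H'\in\hv$ satisfies that $H'\setminus M$ is adjacent to~$v$; hence $v$ appears in $\nx{H\setminus M}$ for at most one $H\in\mathcal{\tilde H}$. So summing $|\nx{H\setminus M}|$ over all $H\in\mathcal{\tilde H}$ counts each such vertex of $X_1$ exactly once, yielding the stated equality (as in the $s=2$ case).

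For the second relation, every $H\in\mathcal{\tilde H}\subseteq\hpos$ has $H\setminus M$ adjacent to at least one vertex of $X$, and by definition of~$\mathcal{\tilde H}$ this vertex must lie in~$X_1$. Since each $v\in X_1$ is adjacent to $H\setminus M$ for at most one $H$, the map $H\mapsto$ (some vertex in $\nx{H\setminus M}\cap X_1$) is injective, giving $|\mathcal{\tilde H}|\leq|X_1|$.

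For the third relation, I would observe that every $H\in\hpos\setminus\mathcal{\tilde H}$ satisfies that $H\setminus M$ is adjacent to at least one vertex of $X_2$ (otherwise $H$ would lie in $\mathcal{\tilde H}$). By the updated \autoref{per-s}(\ref{per-s1}), each vertex~$v\in X_2$ is adjacent to at most $s$ sets $H\in\hv$ via $H\setminus M$. Charging each $H\in\hpos\setminus\mathcal{\tilde H}$ to one such neighbor in $X_2$, each $v\in X_2$ receives at most $s$ charges, yielding $|\hpos\setminus\mathcal{\tilde H}|\leq s|X_2|$. There is no genuine obstacle here; the only step that needs care is to ensure that the three definitions and the new peripheral property are invoked in the right places, in particular that \autoref{per-s}(\ref{per-s1}) is what replaces the factor~$2$ bound used in \autoref{nx-sum}.
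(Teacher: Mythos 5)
Your proposal is correct and matches the paper's approach: the paper proves this lemma simply by declaring it analogous to \autoref{nx-sum}, using \autoref{per-s} in place of \autoref{per}, which is exactly the adaptation you carry out (the factor $s$ from \autoref{per-s}(\ref{per-s1}) replacing the factor $2$ in the bound on $|\hpos\setminus\mathcal{\tilde H}|$).
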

 
\begin{proof}
  This follows analogously to the proof of \autoref{nx-sum} with \autoref{per-s} for peripheral sets.
\end{proof}

\begin{satz}
  \pvd s has a problem kernel with~$O(k^2s^3)$ vertices. It can be found in $O(ksn^2)$~time.
\end{satz}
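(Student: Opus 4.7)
The plan is to mirror the proof of \autoref{kern-size} for the \pvd2 kernel, with every counting step replaced by its $s$-variant. Given an instance~$(G,k)$, I would first apply the algorithm from \autoref{find-X-time-s} to compute a factor-$(s+1+T_s)$ approximate solution~$X$. If $|X|>(s+1+T_s)k$, I output a no-instance; otherwise \autoref{X-boundary-s} gives $|X|\in O(sk)$. Next, I would apply \autoref{mark-neighbors-s} in $O(|X|n^2)=O(skn^2)$ time to obtain the sets~$M(v)$ for each $v\in X$; by \autoref{s-sets}, their union is peripheral with respect to~$X$. Then I would exhaustively apply \autoref{high-occurence-fast-s}, which yields a new parameter~$k'\leq k$ and, by \autoref{size-M-s}, leaves a peripheral set~$M$ with $|M|\leq 2sk|X|\in O(s^2k^2)$. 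Finally I would exhaustively apply \autoref{isolated}, \autoref{rul:goodconnected-s}, and \autoref{ruecken-s}.

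The key task, and the main obstacle, is proving an $s$-analog of \autoref{unmarked-boundary} to bound $|\bigcup_{H\in\hv}(H\setminus M)|$. I would partition $\hv$ via the sets $X_1$, $X_2$, and $\tilde{\mathcal H}$ from \autoref{nx-sum-s} and sum the per-component bounds of \autoref{h1-vertices-s} and \autoref{h0-vertices-s}. For $H\in\tilde{\mathcal H}\subseteq\hpos$, \autoref{h1-vertices-s} gives $|H\setminus M|\in O(s|H\cap M|+s^2|\nx{H\setminus M}|+k)$; the $|H\cap M|$ terms telescope into $O(s|M|)$, while $\sum_{H\in\tilde{\mathcal H}}|\nx{H\setminus M}|=|X_1|$ and $|\tilde{\mathcal H}|\leq|X_1|$ from \autoref{nx-sum-s} turn the remaining terms into $O((s^2+k)|X_1|)$. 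For $H\in\hpos\setminus\tilde{\mathcal H}$, \autoref{per-s}(\ref{per-s3}) forces $|H\setminus M|\leq 2s-2$, and combined with $|\hpos\setminus\tilde{\mathcal H}|\leq s|X_2|$ this contributes $O(s^2|X_2|)$. For $H\in\hneg$, \autoref{h0-vertices-s} gives $|H\setminus M|\in O(s+|H\cap M|)$; combined with $|\hneg|\leq|M|$ from \autoref{sets-in-hm} this contributes $O(s|M|)$.

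Adding these three contributions yields $O(s|M|+(s^2+k)|X|)$ vertices in $G-X$ that lie outside~$M$. Substituting $|M|\in O(s^2k^2)$ and $|X|\in O(sk)$ gives $O(s^3k^2)$. Including the $|X|=O(sk)$ vertices of the solution and the $|M|=O(s^2k^2)$ peripheral vertices leaves the total at $O(s^3k^2)$. Correctness of the individual rules is already established in \autoref{high-occurence-time-s}, \autoref{isolated-time}, \autoref{goodconnected-correct-s}, and \autoref{ruecken-time-s}. For the running time, computing~$X$ can be aborted after finding more than~$k$ pairwise vertex-disjoint \FISG{}s, yielding $O(ks(n+m))$; \autoref{mark-neighbors-s} then dominates at $O(skn^2)$, and the remaining rules each fit within this budget by \autoref{high-occurence-time-s}, \autoref{isolated-time}, and \autoref{ruecken-time-s} together with the $O(n^2)$ bound of \autoref{goodconnected-time} which extends unchanged to the $s$ case. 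The main difficulty is the bookkeeping in the generalized \autoref{unmarked-boundary}: extra factors of~$s$ accumulate because each vertex of~$X$ may be adjacent to up to $s$ components of~$G-X$ by \autoref{per-s}(\ref{per-s1}), may have up to $2s-3$ non-neighbours inside each such component by \autoref{per-s}(\ref{per-s2}), and because \pl s components below $2s-1$ vertices are not necessarily connected, which is why the threshold $k+2s-1$ appears in \autoref{rul:goodconnected-s}.
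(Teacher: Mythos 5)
Your proposal is correct and follows essentially the same route as the paper's own proof: the same pipeline (\autoref{find-X} with early abort, \autoref{mark-neighbors-s}, \autoref{high-occurence-fast-s}, \autoref{isolated}, \autoref{rul:goodconnected-s}, \autoref{ruecken-s}) and the same counting argument via \autoref{nx-sum-s}, \autoref{h1-vertices-s}, \autoref{h0-vertices-s}, and \autoref{sets-in-hm}, splitting $\hpos$ into $\mathcal{\tilde H}$ and $\hpos\setminus\mathcal{\tilde H}$ to arrive at $O(s|M|+(s^2+k)|X|)\subseteq O(k^2s^3)$ vertices and $O(ksn^2)$ time. No gaps to report.
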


\begin{proof}
  Given an \pvd s instance $(G,k)$, we first find a constant-factor approximate solution~$X$ for~$G$ using \autoref{find-X}. If~$(G,k)$ is a yes-instance, we have~$|X|\in O(sk)$ according to \autoref{X-boundary-s}. In this case, we find~$|X|$ in $O(ks(n+m))$ time according to \autoref{find-X-time-s}, because if we find more than~$k$ \FISG{}s using \autoref{find-X}, then we can stop and output that~$(G,k)$ is a no-instance. After constructing the constant-factor approximate solution~$X$, we construct a set~$M$ that is peripheral with respect to~$X$. According to \autoref{mark-neighbors-time-s}, this can be done in~$O(|X|n^2)$ time using \autoref{mark-neighbors-s}. According to \autoref{size-M-s}, we can use \autoref{high-occurence-fast-s} to reduce the size of~$M$ to at most $2sk|X|$ vertices. This can be done in~$O(|X|n+m)$ time according to \autoref{high-occurence-time-s}. We then apply \autoref{isolated} in~$O(n+m)$~time as shown in \autoref{isolated-time}, followed by \autoref{rul:goodconnected-s}. Analogously to the proof of \autoref{goodconnected-time}, we can show that this works in~$O(n^2)$~time.  Finally, we apply \autoref{ruecken-s}, which runs in~$O(n+m)$ time; this follows analogously to the proof of \autoref{ruecken-time}.

  We now count the vertices that remain in~$G$. The graph~$G$ contains vertices from~$X$, vertices from~$M$, and vertices from the connected components in~$G-X$ that are not in~$M$. As shown above, we have~$|X|\in O(sk)$ and~$|M|\in O(s^2k^2)$. It is left to count the vertices in $\bigcup_{\hp\in\hv} (\hp\setminus M)$. Let~$\mathcal{\tilde H}, X_1,$ and~$X_2$ be as defined in \autoref{nx-sum-s}. We can conclude from \autoref{h1-vertices-s} and \autoref{h0-vertices-s} that the size of $\bigcup_{\hp\in\hv} (\hp\setminus M)$ is
  \begin{align*}
    &O\big(\smashoperator{\sum_{\hp\in\hpos}}(s |\hp\cap
    M|+s^2|\nx{\hp\setminus
      M}|+k)+\smashoperator{\sum_{\hp\in\hneg}}(|\hp\cap M|+s)\big).\\
    \intertext{Because the sets in~$\hv$ are pairwise disjoint, we have that $\sum_{H\in\hv}s|H\cap M|\leq s|M|$. Thus, the above term is}
    & O\big(\smashoperator{\sum_{\hp\in\hpos}}(s^2|\nx{\hp\setminus M}|+k)+s|M|+s|\hneg|\big).\\
    \intertext{By \autoref{sets-in-hm}, we have that~$|\hneg|\leq
      |M|$. Thus, this is}
  & O\big(\smashoperator{\sum_{\hp\in\hpos}}(s^2|\nx{\hp\setminus M}|+k)+s|M|\big).\\
    \intertext{For each set~$H\in\hpos\setminus\mathcal{\tilde H}$,
      the set~$H\setminus M$ must be adjacent to a vertex
      from~$X_2$. This is by definition of~$\mathcal{\tilde H}$ in
      \autoref{nx-sum-s} and by definition of $\hpos$. From \autoref{per-s}, we can conclude
      that~$|H\setminus M|\in O(s)$, implying that only the sets in~$\mathcal{\tilde H}$ may contain~$\Theta(s^2|\nx{\hp\setminus M}|+k)$ vertices that are not in~$M$. Thus, we have}
  \big|\smashoperator{\bigcup_{H\in\hv}}(H\setminus M)\big|&\in O\big(\smashoperator{\sum_{\hp\in\mathcal{\tilde H}}}\big(s^2|\nx{\hp\setminus
  M}|+k\big)+s|\hpos\setminus\mathcal{\tilde H}|+s|M|\big).
\end{align*}
By \autoref{nx-sum-s}, this is $O(|X_1|s^2+k|X_1|+|X_2|s^2+s|M|)$. Using $|X_1|+|X_2|=|X|$ and adding the vertices in~$M$ and~$X$, this is $O((s^2+k)|X|+s|M|)$. Thus, the total number of vertices in~$G$ is~$O(k^2s^3)$.
\end{proof}

\chapter{Conclusion and Outlook}
We have shown an $O(k^2s^3)$-vertex problem kernel for \pvd s. This result is comparable with the~$O(ks^2)$-vertex problem kernel for \name{$s$-Plex Editing} shown by Guo et al.\cite{DBLP:conf/aaim/GuoKNU09}: in an~$n$-vertex graph, one vertex deletion can lead to~$n-1$ edge deletions. Under the assumption that input graphs for clustering problems are typically dense, this suggests that typical parameter values for \name{$s$-Plex Editing} are at least quadratic in parameter values for \pvd s; the parameter is the number of allowed graph modifications. Seen from this angle, our result seems consistent with the result that \name{$s$-Plex Editing} has a problem kernel with~$O(ks^2)$~vertices.

It is open whether \pvd s has an $O(ks^c)$-vertex problem kernel for some constant~$c$. It is also open to improve the $s^3$-factor in the number of vertices in our problem kernel. This factor results from the size of the constant-factor approximate solution shown in \autoref{X-boundary-s}, from the size of the peripheral set shown in \autoref{size-M-s}, and from the way we construct redundant sets in \autoref{h1-vertices-s}. The most promising approach to improve on the~$s^3$-factor seems to be the construction of larger redundant sets so that more vertices can be removed by \autoref{rul:goodconnected-s}.

In \autoref{introsec}, we discussed how to solve \pvd s using \hs d for a natural number~$d\in O(s+\sqrt s)$. For \hs d, problem kernels containing~$O(k^{d-1})$ or~$O(k^d)$ elements are known\cite{DBLP:conf/wads/Abu-Khzam07,DBLP:conf/stacs/Kratsch09,Flu06}. This bound is exponential in~$d$ and $d$ is in turn bounded by a function linear in~$s$. This yields an upper bound on the number of elements in a \hs d kernel that is exponential in~$s$. From this angle, it is remarkable that problem kernels for \pvd s as well as for \name{$s$-Plex Editing} exist whose number of vertices is bounded by a polynomial in~$s$ as~well~as~in~$k$.

It might be hard to find a search tree for \pvd s that is smaller than the search tree for an equivalent \hs d instance. However, a \hs d instance obtained from our \pvd s problem kernel contains~$O(k^{2d})$ sets. This bound is exponential in~$d$. Thus, constructing a \hs d instance from an \pvd s instance might be practically infeasible. It is open to find faster algorithms for \pvd s that do not rely on \hs d.

The most promising approach to faster algorithms for \pvd s seems to be iterative compression\cite{GMN2009} introduced by Reed et al.\cite{RSV04}.  Hüffner et al.\cite{ HKMN09TOCS} have successfully applied it to \cvd. Using iterative compression, we can solve \pvd s by solving multiple instances of the following problem:

\decprob{\textsc{Disjoint} \pvd s}{A graph~$G=(V,E)$, a non-negative number~$k$, and a solution~$S\subseteq V$ with~$|S|\leq k+1$ such that~$G-S$ is an \pcg s.}{Is there an alternative solution~$S'$ with~$S\cap S'=\emptyset$ and~$|S'|\leq k$ such that~$G-S'$ is an \pcg s?}

\noindent Fellows et al.\cite{DBLP:conf/mfcs/FellowsGMN09} have shown that while the analog problem for \cvd{} is in~$\DP$, \textsc{Disjoint} \pvd s is $\NP$-hard.  After some initial observations on the case~$s=2$, we guess that \textsc{Disjoint} \pvd 2 can be solved using a size-$O(2^k)$ search tree. In combination with our kernelization algorithm, we could then solve \pvd 2 in~${O(3^kk^c+kn^2)}$ time for some constant~$c$.

\bibliographystyle{plain}
\bibliography{splex-notizen}{}

\end{document}

